\providecommand{\algorithmname}{Algorithm}
\newcommand{\Supp}{\textsl{Supp}}
\newtheorem{theorem}{Theorem}[section]
\newtheorem{lemma}[theorem]{Lemma}
\newtheorem{claim}[theorem]{Claim}
\newtheorem{proposition}[theorem]{Proposition}
\newtheorem{definition}[theorem]{Definition}
\newtheorem{remark}[theorem]{Remark}
\def\bits{ \{0,1\} }
\def\Ex{{\mathbb {E}}}
\def\N{{\mathbb {N}}}
\def\R{{\mathbb {R}}}
\def\N{{\mathbb {N}}}
\def\poly{{\rm {poly}}}
\def\polylog{{\rm {polylog}}}
\def\eps{\varepsilon}
\title{Communication Complexity of Correlated Equilibrium in Two-Player Games}
\author{Anat Ganor%
\thanks{Tel-Aviv University, Israel.
	 The research leading to these results has received funding from the Israel Science Foundation (grant number 552/16)
     and the I-CORE Program of the planning and budgeting committee and The Israel Science Foundation (grant number 4/11).
     Email:anat.ganor@gmail.com}
\and Karthik C.\ S.\thanks{Weizmann Institute of Science, Israel.
   This work was partially supported by ISF-UGC 1399/14 grant.
    Email:karthik.srikanta@weizmann.ac.il}}
\date{}
\begin{document}
\maketitle

\begin{abstract}

We show a communication complexity lower bound for finding a correlated equilibrium of a two-player game.
More precisely, we define a two-player $N\times N$ game called the 2-cycle game
and show that the randomized communication complexity 
of finding a $\nicefrac{1}{\poly(N)}$-approximate correlated equilibrium 
of the 2-cycle game is $\Omega(N)$.
For small approximation values, this answers an open question of Babichenko and Rubinstein (STOC 2017).
%
%
Our lower bound is obtained via a direct reduction from the unique set disjointness problem.

\end{abstract}
\clearpage

\tableofcontents
\clearpage

\section{Introduction}
\setlength{\epigraphwidth}{0.65\textwidth}
\epigraph{\textsf{If there is intelligent life on other planets, in a majority of
them, they\\ would have discovered correlated equilibrium
before Nash equilibrium.
}}{\textit{Roger Myerson}}

One of the most famous solution concepts in game theory is Nash equilibrium \cite{N51}.
Roughly speaking, a Nash equilibrium is a set of mixed strategies, one per player, 
from which no player has an incentive to deviate.
A well-studied computational problem in algorithmic game theory is that of
finding a Nash equilibrium of a given (non-cooperative) game.
The complexity of finding a Nash equilibrium has been studied in several models of computation, 
including computational complexity, query complexity and communication complexity.
Since finding a Nash equilibrium is considered a hard problem,
researchers studied the problem of finding an approximate Nash equilibrium,
where intuitively, no player can benefit much by deviating from his mixed strategy. 
For surveys on algorithmic game theory in general and equilibria in particular see for example \cite{N07, R10, G11, Rou16}.

%


A natural setting in which approximate equilibria concepts are studied 
is the setting of uncoupled dynamics \cite{HMC03, HMC06},
where each player knows his own utilities and not those of the other players.
The rate of convergence of uncoupled dynamics to an approximate equilibrium 
is closely related to the communication complexity 
of finding the approximate equilibrium \cite{CS04}.

Communication complexity is a central model in complexity theory that has been extensively studied. 
In the two-player randomized model \cite{Y79}
each player gets an input and their goal is to solve a communication task that depends on both inputs. 
The players can use both common and private random coins 
and are allowed to err with some small probability. 
The communication complexity of a protocol is the total number of bits communicated by the two players. 
The communication complexity of a communication task is the minimal number
of bits that the players need to communicate in order to solve the task with high probability,
where the minimum is taken over all protocols. 
For surveys on communication complexity see for example \cite{KN97, LS09, Rou16a}.

In a recent breakthrough, Babichenko and Rubinstein \cite{BR16} proved the first non-trivial lower bound 
on the randomized communication complexity of finding an approximate Nash equilibrium.

An important generalization of Nash equilibrium is correlated equilibrium \cite{A74, A87}.
Whereas in a Nash equilibrium the players choose their strategies independently, 
in a correlated equilibrium the players can coordinate their decisions, choosing a joint strategy.
Babichenko and Rubinstein \cite{BR16} raised the following questions:
\vspace{3mm}
\begin{center}
\begin{minipage}{.85\textwidth}
\emph{Does a $\polylog(N)$ communication protocol 
for finding an approximate correlated equilibrium of two-player $N\times N$ games exist?\\
Is there a $\poly(N)$ communication complexity lower bound?}
\end{minipage}
\end{center}
\vspace{3mm}

\noindent We answer these questions for small approximation values.
As far as we know, prior to this work, no non-trivial answers were known (neither positive nor negative),
not even for the problem of finding an \emph{exact} correlated equilibrium of two-player games.
In contrast, in the multi-party setting, there is a protocol for finding an exact correlated equilibrium of $n$-player binary action games with $\poly(n)$ bits of communication \cite{HM10, PR08, JL15}.

There are two notions of correlated equilibrium 
which we call \emph{correlated} and \emph{rule correlated} equilibria.
In a correlated equilibrium no player can benefit from replacing one action with another,
whereas in a rule correlated equilibrium no player can benefit from 
simultaneously replacing every action with another action (using a switching rule).
While the above two notions are equivalent,
approximate correlated and approximate rule correlated equilibria are not equivalent, but are closely related. 

Our first communication complexity lower bound is
for finding a $\nicefrac{1}{\poly(N)}$-approximate correlated equilibrium 
of a two-player $N\times N$ game called the 2-cycle game.
We note that every two-player $N\times N$ game
has a trivial $\nicefrac{1}{N}$-approximate correlated equilibrium (which can be found with zero communication).

\begin{theorem}\label{thm:ACE}
For every $\eps\leq\frac{1}{24N^3}$, every randomized communication protocol 
for finding an $\eps$-approximate correlated equilibrium 
of the 2-cycle $N\times N$ game, 
with error probability at most $\frac{1}{3}$, 
has communication complexity at least $\Omega(N)$.
\end{theorem}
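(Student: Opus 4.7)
The plan is to reduce from the unique set disjointness problem $\mathrm{UDISJ}_N$, which has randomized communication complexity $\Omega(N)$. In $\mathrm{UDISJ}_N$, Alice receives $x\in\bits^N$, Bob receives $y\in\bits^N$, and they are promised that either $x$ and $y$ have disjoint supports (NO instance) or that they intersect in exactly one coordinate $i^{*}\in[N]$ (YES instance). Given any protocol $\Pi$ that outputs an $\eps$-approximate correlated equilibrium of the 2-cycle game, I aim to show that with $O(\log N)$ additional communication Alice and Bob can decide $\mathrm{UDISJ}_N$, thereby forcing $\Pi$ to use $\Omega(N)$ bits.

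The first step is to have Alice and Bob construct the 2-cycle game \emph{locally}: each player's $N\times N$ payoff matrix is defined as a function of his or her own input only, so that no preliminary communication is needed to set up the game instance. The game is designed so that (i) on disjoint inputs, every $\eps$-approximate correlated equilibrium must lie close to a specific ``background'' distribution $\mu_{0}$ determined solely by the cyclic structure of the game; and (ii) on inputs with a unique collision at coordinate $i^{*}$, there additionally exist $\eps$-approximate correlated equilibria placing non-negligible mass on a small subset of action pairs indexed by $i^{*}$. The correlated-equilibrium incentive constraints at the critical action pairs should carry deviation values on the order of $1/N^{3}$, which is why the approximation threshold $\eps\leq\frac{1}{24N^{3}}$ is exactly what is needed for the two cases to be separated.

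Given the output $\mu$ of $\Pi$, Alice next inspects the marginal of $\mu$ on her own actions to extract a candidate coordinate $\hat{\imath}$ (for example, the index carrying the most mass on the ``collision cycle''), sends $\hat{\imath}$ to Bob using $O(\log N)$ bits, and Bob verifies whether the corresponding marginal on his side is consistent with a true intersection. The quantitative separation guaranteed by (i)--(ii) will ensure that in NO instances no candidate passes the check, while in YES instances $\hat{\imath}=i^{*}$ does; the composed protocol therefore solves $\mathrm{UDISJ}_N$ with the same error probability, yielding the desired $\Omega(N)$ lower bound.

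The main obstacle will be designing the 2-cycle game and verifying the structural dichotomy under the tight approximation parameter $\eps\leq\frac{1}{24N^{3}}$. One must simultaneously ensure that (a) the only $\eps$-approximate equilibria on disjoint instances are $O(\eps)$-perturbations of $\mu_{0}$, which requires every off-cycle deviation to give utility gains of magnitude at least some fixed polynomial in $1/N$; and (b) introducing a single collision genuinely creates a new family of approximate equilibria, which requires a delicate balance between same-cycle incentives that must be preserved and cross-cycle deviations that must remain small. Once these quantitative properties of the game are established, the reduction itself reduces to a local inspection of marginals plus one round of $O(\log N)$ bits.
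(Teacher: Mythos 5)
Your high-level strategy---run the equilibrium-finding protocol on a game built locally from the players' inputs, then spend $O(\log N)$ extra bits to extract the answer to unique set disjointness---is indeed the paper's strategy. But as written this is a plan, not a proof: everything that makes the theorem true is deferred. You never define the game, and the two structural claims (i) and (ii) on which the reduction rests are exactly the content you label ``the main obstacle.'' The paper's proof of the corresponding step (Theorem~\ref{thm:ACEtoPure}) is the bulk of the work: from an $\eps$-approximate correlated equilibrium $\mu$ one defines $a(v)=\mu(v,N_A(v))$ and $b(v)=\mu(N_B(v),v)$, shows this pair is $3\eps$-``slowly increasing'' along the graph structure (Claim~\ref{clm:from-correlated-to-slowly-increasing}), and then propagates bounds layer by layer around the cycle to show all mass outside the unique 2-cycle is at most $5N\delta$ (Lemma~\ref{lem:slowly-increasing-concentration}); the threshold $\eps\le\frac{1}{24N^3}$ comes from this $O(N)$-fold error accumulation against the $\Omega(1/N^2)$ mass guaranteed at the 2-cycle, not from ``deviation values of order $1/N^3$ at critical action pairs.'' None of this is present or replaced by an alternative argument in your write-up.

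Two further points would need repair even as a plan. First, your claim (ii) has the wrong quantifier: for the reduction you need that \emph{every} $\eps$-approximate correlated equilibrium of a YES instance reveals $i^*$, not that approximate equilibria concentrated near $i^*$ \emph{exist}; the protocol may output any valid equilibrium. Second, you reduce from the \emph{decision} version of unique set disjointness, so on NO (disjoint) instances the constructed game violates the 2-cycle game's promise of exactly one disputed index, and a protocol guaranteed only for the 2-cycle game has unspecified behavior there; your claim (i), that all approximate correlated equilibria of NO instances are $O(\eps)$-perturbations of a single background distribution $\mu_0$, is both unsupported and almost certainly false (the set of approximate correlated equilibria of the promise-free game need not be anywhere near a singleton). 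The paper sidesteps both issues by reducing from the \emph{search} variant under the promise (the universal monotone relation, which is $\Omega(N)$-hard): the promise always holds, and the concentration lemma shows that any approximate equilibrium lets the players locate the unique pure Nash equilibrium, hence $i^*$.
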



Since every approximate rule correlated equilibrium is an approximate correlated equilibrium, 
the following lower bound follows from Theorem~\ref{thm:ACE}.
It remains an interesting open problem to prove bounds on the communication complexity of finding 
a \emph{constant} approximate rule correlated equilibrium of two-player games.

\begin{theorem}\label{thm:ARCE}
For every $\eps\leq\frac{1}{24N^3}$, every randomized communication protocol 
for finding an $\eps$-approximate rule correlated equilibrium 
of the 2-cycle $N\times N$ game, 
with error probability at most $\frac{1}{3}$, 
has communication complexity at least $\Omega(N)$.
\end{theorem}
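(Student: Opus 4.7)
The plan is to derive Theorem~\ref{thm:ARCE} from Theorem~\ref{thm:ACE} by the one-line observation that, with the parameter $\eps$ preserved exactly, every $\eps$-approximate rule correlated equilibrium of a two-player game is automatically an $\eps$-approximate correlated equilibrium. Once this implication is in hand, any randomized protocol that outputs an $\eps$-approximate rule correlated equilibrium of the 2-cycle game is, by definition, also outputting an $\eps$-approximate correlated equilibrium on the same input distribution, so applying Theorem~\ref{thm:ACE} to that same output immediately yields the claimed $\Omega(N)$ lower bound.

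The key step is to unpack the two definitions. For a joint distribution $p$ on action profiles and a player $i$, the CE constraint says that for every pair of actions $(a,b)$ of player $i$, the conditional expected gain from swapping $a$ for $b$, weighted by the marginal probability $\Pr[\text{recommendation}=a]$, is at most $\eps$. The rule CE constraint says that for every switching function $\sigma : A_i \to A_i$, the total expected gain from replacing each recommendation $a$ by $\sigma(a)$ is at most $\eps$. To pass from the latter to the former, I take $\sigma$ to be the single-swap rule $\sigma_{a,b}$ that sends $a \mapsto b$ and fixes every other action. In the summation defining the rule CE deviation gain under $\sigma_{a,b}$, every term corresponding to a recommendation other than $a$ contributes zero, and the remaining terms reproduce exactly the CE deviation gain for the pair $(a,b)$. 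Thus the rule CE inequality specializes, with $\eps$ unchanged, to the CE inequality for $(a,b)$; since $(a,b)$ was arbitrary, $p$ is an $\eps$-approximate correlated equilibrium.

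Putting these pieces together, a protocol that finds an $\eps$-approximate rule correlated equilibrium of the 2-cycle game with error at most $\tfrac{1}{3}$ and communication $o(N)$ would, via the specialization above, also find an $\eps$-approximate correlated equilibrium with the same error and communication, contradicting Theorem~\ref{thm:ACE} in the regime $\eps \leq \tfrac{1}{24N^3}$. There is essentially no obstacle here: this is the easy direction of the CE versus rule CE relationship (the substantive work of the paper lies in Theorem~\ref{thm:ACE} itself). The only points to verify are that the single-swap rules $\sigma_{a,b}$ lie in the class of switching rules quantified over in the definition of rule CE, and that the $\eps$ parameter transfers intact rather than degrading by a factor of $N$ or similar; both are immediate from the definitions once the cancellation is written out.
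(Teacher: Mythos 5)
Your proposal is correct and matches the paper's argument exactly: the paper proves Theorem~\ref{thm:ARCE} by invoking Proposition~\ref{ARCEtoACE}, whose proof is precisely your single-swap switching rule $\sigma_{a,b}$ showing that every $\eps$-approximate rule correlated equilibrium is an $\eps$-approximate correlated equilibrium with $\eps$ unchanged, and then applying Theorem~\ref{thm:ACE}. No gaps.
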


Note that Theorems~\ref{thm:ACE} and~\ref{thm:ARCE}
imply a lower bound of $\Omega(N)$
for the randomized \emph{query} complexity of finding 
a $\nicefrac{1}{\poly(N)}$-approximate correlated, respectively rule correlated, equilibrium
of the 2-cycle game on $N\times N$ actions.

Next, we show a communication complexity lower bound 
for finding a $\nicefrac{1}{\poly(N)}$-approximate Nash equilibrium of the 2-cycle game.
As previously mentioned, Babichenko and Rubinstein \cite{BR16} 
proved the first non-trivial lower bound on the randomized communication complexity 
of finding an approximate Nash equilibrium.
More precisely, they proved a lower bound of $\Omega\left(N^{\eps_0}\right)$ 
on the randomized communication complexity of finding an $\eps$-approximate Nash equilibrium 
of a two-player $N\times N$ game,
for every $\eps \le \eps_0$, where $\eps_0$ is some small constant. 
Their proof goes through few intermediate problems and involves intricate reductions.
We believe our proof is more simple and straightforward.
Moreover, for small approximation values, 
we get a stronger lower bound of $\Omega(N)$, as opposed to the $\Omega(N^{\eps_0})$ lower bound of \cite{BR16}.

\begin{theorem}\label{thm:ANE}
For every $\eps\leq\frac{1}{16N^2}$, every randomized communication protocol 
for finding an $\eps$-approximate Nash equilibrium
of the 2-cycle $N\times N$ game, 
with error probability at most $\frac{1}{3}$, 
has communication complexity at least $\Omega(N)$.
\end{theorem}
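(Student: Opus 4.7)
Mirroring the strategy for Theorem~\ref{thm:ACE}, I would prove the bound by a direct reduction from the unique set disjointness problem (UDISJ) on $\Omega(N)$ bits to the problem of finding an $\eps$-approximate Nash equilibrium of the 2-cycle $N\times N$ game. On UDISJ inputs $x, y \in \bits^{\Omega(N)}$, Alice and Bob would each privately perturb their own payoff matrix of the 2-cycle game using their input bits, producing a game whose equilibrium structure encodes whether $x \cap y$ is empty or consists of a unique coordinate $i^*$. They would then simulate the hypothesized protocol $\pi$ for finding an $\eps$-approximate Nash equilibrium of this game, and recover the UDISJ answer from the output.

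The main step is a structural analysis of $\eps$-approximate Nash equilibria $(p,q)$ of the perturbed 2-cycle game. Since a Nash equilibrium is a \emph{product} distribution, the approximation condition says that for every pure deviation, the gain averaged over the opponent's marginal play is at most $\eps$; this is generally weaker than the per-recommendation correlated-equilibrium condition, which is why one may hope to tolerate the looser threshold $\eps \leq \frac{1}{16N^2}$ instead of $\frac{1}{24N^3}$. In the disjoint case I would exhibit an explicit approximate Nash equilibrium inherited from the unperturbed 2-cycle game (essentially a uniform-type mixed strategy), so $\pi$'s output looks like this canonical equilibrium. In the unique-intersection case I would show that the perturbation opens up a payoff gap of order $\Omega(1/N)$ at the coordinate $i^*$, which the $\eps$-Nash slack can absorb only if one of the marginals $p$ or $q$ places mass $\Omega(1/N)$ on $i^*$ (or on an action adjacent to $i^*$ in the 2-cycle). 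The choice $\eps \leq \frac{1}{16N^2}$ is precisely what forces this: once the product $\eps \cdot N$ is smaller than the $\Omega(1/N)$ payoff gap, $i^*$ must appear in the support of the equilibrium.

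The remaining step is standard: compose $\pi$ with the input-to-game encoding and with an output-to-UDISJ extractor (the players sample from their own output marginals, or scan their own marginal support for the revealed coordinate, using $O(\log N)$ extra bits). The resulting UDISJ protocol has communication cost $\CC(\pi) + O(\log N)$, so the $\Omega(N)$ lower bound for UDISJ yields $\CC(\pi) = \Omega(N)$.

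The hard part, I expect, is the quantitative equilibrium analysis in the unique-intersection case: designing the perturbation so that (i) the 2-cycle structure around $i^*$ is what actually produces the $\Omega(1/N)$ gap, (ii) the product form of a Nash equilibrium suffices to route this gap into a \emph{marginal} probability on $i^*$ (rather than requiring the stronger per-row conditions of a correlated equilibrium), and (iii) no other ``spurious'' approximate Nash equilibrium of the perturbed game hides $i^*$ entirely. Matching all three at the stated $\eps$ threshold is the delicate accounting that the proof will need to carry out.
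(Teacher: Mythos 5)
Your high-level architecture matches the paper's: reduce from unique set disjointness, argue that any $\eps$-approximate Nash equilibrium of the game reveals the disputed coordinate $i^*$, and recover it with $O(\log N)$ extra bits. But the proposal explicitly defers what is, in fact, the entire content of the proof. The paper's argument for Theorem~\ref{thm:ANE} is Theorem~\ref{thm:pure} (the reduction, which is short) plus Theorem~\ref{thm:ANEtoPure}, and the latter rests on a concrete structural analysis: Claim~\ref{clm:from-approximate-to-non-increasing} shows that any $\eps$-approximate Nash equilibrium with $\eps\leq\frac{1}{16N^2}$ is a $\frac{1}{4N}$-non-increasing pair (by deviating from a suspect action $u'$ to the best response of some action $v'$ with $b^*(v')\geq\frac{1}{N}$, forcing $a^*(u')\leq 4N\eps$), and Lemma~\ref{lem:non-increasing-concentration} then propagates this bound layer by layer around the cycle to conclude that \emph{every} action other than $u^*$ (resp.\ $v^*_0$) has mass at most $\frac{1}{4N}$, unless $b(v^*_1)>\frac{1}{4N}$, which itself pinpoints $i^*$. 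You name this step as ``the hard part'' and do not carry it out, so the proposal does not yet establish the theorem.

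Two further points where your plan would need repair even as a plan. First, your recovery criterion is too weak: showing that ``one of the marginals places mass $\Omega(1/N)$ on $i^*$'' does not suffice, because a marginal can have up to $N$ actions of mass $\Omega(1/N)$ and the players cannot identify which one encodes $i^*$ with $O(\log N)$ bits. The paper needs (and proves) the stronger \emph{concentration} statement that the high-mass action is unique. Second, your disjoint-case reasoning --- ``exhibit an explicit approximate Nash equilibrium \ldots so $\pi$'s output looks like this canonical equilibrium'' --- is not sound: a correct protocol may output \emph{any} valid approximate equilibrium, not the canonical one. The paper sidesteps this entirely by reducing from the \emph{search} (promise) version of unique set disjointness, so only the unique-intersection case ever needs to be analyzed; the no-promise variant is needed only later, for the Bayesian game.
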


Using similar ideas to the ones used in the proof of Theorem~\ref{thm:ANE}, 
we get a communication complexity lower bound for finding 
an approximate well supported Nash equilibrium of the 2-cycle game.

\begin{theorem}\label{thm:AWSNE}
For every $\eps\leq\frac{1}{N}$, every randomized communication protocol 
for finding an $\eps$-approximate well supported Nash equilibrium 
of the 2-cycle $N\times N$ game, 
with error probability at most $\frac{1}{3}$, 
has communication complexity at least $\Omega(N)$.
\end{theorem}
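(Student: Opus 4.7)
I would prove Theorem~\ref{thm:AWSNE} by a direct reduction from unique set disjointness, reusing the same 2-cycle $N\times N$ game construction and the same overall reduction template as in Theorem~\ref{thm:ANE}, but leveraging the stronger support condition of a well supported equilibrium to accommodate the much larger approximation parameter $\eps \leq 1/N$.

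First, given a unique set disjointness instance $(A,B)$ with $A,B\subseteq [N]$, Alice and Bob build the 2-cycle game from $A$ and $B$ exactly as in the proof of Theorem~\ref{thm:ANE}, each computing her own payoff matrix locally. Given a randomized protocol $\Pi$ that outputs an $\eps$-approximate well supported Nash equilibrium with error at most $1/3$, Alice and Bob simulate $\Pi$, then extract from the output mixed profile a small piece of information (for instance, a single action in the support of one of the two players) whose value distinguishes the $|A\cap B|=0$ case from the $|A\cap B|=1$ case. This extraction step costs only $O(\log N)$ additional bits, so the $\Omega(N)$ randomized lower bound for unique set disjointness transfers to finding an $\eps$-WSNE of the 2-cycle game.

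The crux of the argument is a structural claim: for $\eps \leq 1/N$, the support of any $\eps$-WSNE of the 2-cycle game must reveal whether $A$ and $B$ intersect. In the disjoint case, the cyclic structure of the payoffs should force the support to lie in a canonical set of ``default'' cycle actions, because any off-cycle action is suboptimal by more than $1/N$ and is therefore barred from the support by the WSNE condition. In the unique-intersection case $A\cap B=\{i\}$, the planted structure at coordinate $i$ should make a specific action a strict best response by a margin of $\Theta(1/N)$, which in particular forces actions that are $\Theta(1/N)$-worse out of the support. The key quantitative point is that the separating payoff gap here is of order $1/N$ rather than of order $1/N^2$: this is exactly the improvement that the well supported condition purchases, because a mere $\eps$-approximate Nash equilibrium can still put mass on individually $\eps$-suboptimal actions, and separating the two UDISJ cases through aggregate payoffs therefore required an extra factor of $N$ in Theorem~\ref{thm:ANE}.

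The main obstacle is verifying the precise payoff-gap calculations in the underlying 2-cycle game: I would need to compute, for each action and each UDISJ instance type, the best-response utility and check that every action outside the ``correct'' support is at least $1/N$ worse off, so that the WSNE condition at $\eps=1/N$ really does force the support into a shape that encodes $A\cap B$. Once this gap bound is in hand, the reduction is routine: the simulation of $\Pi$ followed by the $O(\log N)$-bit extraction solves unique set disjointness, yielding the desired $\Omega(N)$ lower bound.
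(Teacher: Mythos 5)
Your high-level plan---reduce from unique set disjointness through the 2-cycle game, and argue that for $\eps\le 1/N$ the support of any $\eps$-WSNE must reveal the hidden index, with an $O(\log N)$ extraction step---is exactly the paper's strategy, and your explanation of why the well supported condition buys a factor of $N$ over Theorem~\ref{thm:ANE} (the deviation condition is imposed per action in the support, so it is not attenuated by the mass placed on the deviating action) is the right one. The problem is that the proposal stops precisely where the proof lives. The ``structural claim'' you defer is the entire content of the argument, and it is not a matter of checking static payoff gaps action by action: the game is win--lose, so every pure payoff is $0$ or $1$, and there is no per-action margin of order $\Theta(1/N)$ to compute. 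The $1/N$ arises only relative to the equilibrium distribution itself: if $b^*(v)=0$ for every in-neighbor $v\in N_A(u')$, then $u'$ earns $0$ against $b^*$, while deviating to the unique best response (Proposition~\ref{prop:nonzero-out-degree}) to the heaviest action of $b^*$ earns at least $\max_v b^*(v)\ge 1/N$, so the WSNE condition expels $u'$ from $\Supp(a^*)$ outright. This is Claim~\ref{clm:from-awsne-to-non-increasing}: the pair $(a^*,b^*)$ is $0$-non-increasing. The second missing piece is the propagation argument (Lemma~\ref{lem:non-increasing-concentration}): starting from the vertices of in-degree $0$ and going around the cycle layer by layer, an induction kills all mass except on the unique 2-cycle $(u^*,v^*_0)$, unless $b^*(v^*_1)>0$ with $y_{i^*}=0$, which also pinpoints $i^*$. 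Without this induction, the claim that the support encodes the answer is unsupported, and it is where all the case analysis (back-edges, midway layers, the vertices around $i^*$) actually happens.

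A second, smaller issue is the framing as a decision problem (disjoint versus uniquely intersecting). The 2-cycle game and all of its structural properties---the unique 2-cycle, the unique pure Nash equilibrium---are defined under the promise that exactly one index is disputed, and the paper reduces from the \emph{search} version: find $i^*$. Your picture of the disjoint case, a ``canonical set of default cycle actions'' outside of which every action is $1/N$-suboptimal, does not match the construction: with no disputed index the game has no pure Nash equilibrium and no canonical support, and ruling out concentration there requires a separate argument (the paper only needs this for the Bayesian variant, in Claim~\ref{clm:no-pure-nash-no-concentration}). Sticking to the search version avoids the disjoint case entirely and is what makes the reduction a one-liner once concentration is established.
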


The 2-cycle game is a very simple game, in the sense that it is 
a win-lose, sparse game, in which each player has a unique best response to every action.
Moreover, the 2-cycle game has a unique pure Nash equilibrium,
hence the non-deterministic communication complexity of finding 
a Nash or correlated equilibrium of the 2-cycle $N\times N$ game is $O(\log N)$.

The construction of the utility functions of the 2-cycle game was inspired by
the gadget reduction of \cite{RW16} 
which translates inputs of the fixed-point problem in a compact convex space to utility functions. 
However, the utility functions of the 2-cycle game are defined using 
the unique out-neighbor function on a directed graph. 

Our lower bounds are obtained by a direct reduction from the unique set disjointness problem.
We show that the randomized communication complexity of finding the pure Nash equilibrium 
of the 2-cycle $N\times N$ game is $\Omega(N)$
and that given an approximate Nash or correlated equilibrium of the 2-cycle game, 
the players can recover the pure Nash equilibrium with small amount of communication.
Note that for small approximation values, Theorems~\ref{thm:ACE}--\ref{thm:AWSNE} 
are tight for the class of win-lose, sparse games, up to logarithmic factors, 
as a player can send his entire utility function using $O(N\log N)$ bits of communication.

Based on the 2-cycle game, we define a Bayesian game for two players called the Bayesian 2-cycle game.
This is done by splitting the original game into smaller parts, 
where each part corresponds to a type and is in fact a smaller 2-cycle game.
The players are forced to play the same type each time, hence the problem of finding 
an approximate Bayesian Nash equilibrium of this game is essentially reduced to the problem 
of finding an approximate Nash equilibrium of the 2-cycle game.
For a constant $N$, Theorem~\ref{thm:ABNE} gives a \emph{tight} lower bound of $\Omega(T)$ for finding 
a \emph{constant} approximate Bayesian Nash equilibrium of the Bayesian 2-cycle game on $T$ types.

\begin{theorem}\label{thm:ABNE}
Let $N\geq 12$ and $T\geq 2$. Then,
for every $\eps\leq\frac{1}{16N^2}$, every randomized communication protocol 
for finding an $\eps$-approximate Bayesian Nash equilibrium 
of the Bayesian 2-cycle game on $N$ actions and $T$ types, 
with error probability at most $\frac{1}{3}$, 
has communication complexity at least $\Omega(N \cdot T)$.
\end{theorem}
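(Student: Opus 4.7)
}
The plan is to prove the $\Omega(NT)$ lower bound by a parallel version of the reduction used for Theorem~\ref{thm:ANE}, combined with a direct sum theorem for the unique set disjointness problem. Recall that Theorem~\ref{thm:ANE} is obtained through a direct reduction from unique set disjointness on $\Theta(N)$ bits: Alice and Bob encode their inputs into the payoffs of the 2-cycle game, and from any $\eps$-approximate Nash equilibrium with $\eps\leq\frac{1}{16N^{2}}$ they can recover the unique pure Nash equilibrium of the 2-cycle game with $O(\log N)$ extra bits, which in turn determines the unique intersection coordinate (if any) of the disjointness instance.

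To extend this to $T$ types, I would start from $T$ independent instances $\{(x^{(t)},y^{(t)})\}_{t=1}^{T}$ of unique set disjointness on $\Theta(N)$ bits each, and have the players form the Bayesian 2-cycle game on $T$ types by using $(x^{(t)},y^{(t)})$ to define the payoffs of the type-$t$ component 2-cycle game, via exactly the encoding used in the proof of Theorem~\ref{thm:ANE}. Because the two players are forced to play the same type every round, the Bayesian equilibrium condition decouples across types: the type-$t$ marginal of any $\eps$-approximate Bayesian Nash equilibrium is an $\eps$-approximate Nash equilibrium of the type-$t$ component 2-cycle game. Hence the recovery step from Theorem~\ref{thm:ANE} can be applied per type, letting Alice and Bob determine the pure Nash equilibrium, and thereby solve the disjointness instance, for each of the $T$ types using $O(T\log N)$ extra bits on top of any hypothetical protocol for $\eps$-approximate Bayesian Nash equilibrium.

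To conclude, I would invoke a direct sum theorem for the randomized communication complexity of unique set disjointness, which yields that solving $T$ independent instances of size $\Theta(N)$ with constant error requires $\Omega(NT)$ bits of communication. This is standard since the $\Omega(N)$ bound for unique set disjointness is itself proved by an information-complexity argument that satisfies direct sum. Chaining this with the reduction above gives the claimed $\Omega(NT)$ lower bound, with a mild loss in success probability that is absorbed by the $\frac{1}{3}$ error slack (and, if necessary, recovered by constant-factor amplification within the $O(T\log N)$ recovery phase).

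The main obstacle is the per-type decomposition of the approximate Bayesian equilibrium. One must verify that the parameter $\eps$ from the Bayesian condition carries over, essentially without loss, to each component game; this depends on the precise definition of $\eps$-approximate Bayesian Nash equilibrium (per-type regret versus expected regret over types). If the definition is ex-ante (averaged over types), a Markov-type argument only guarantees a good approximation on most types, forcing a factor-$T$ loss; however, the hypothesis $\eps\leq\frac{1}{16N^{2}}$ together with the type distribution being uniform should provide enough slack to push the argument through, since Theorem~\ref{thm:ANE} is robust to constant-factor degradation in $\eps$ well within the $1/(16N^{2})$ regime. A secondary technicality is invoking the direct sum theorem for unique set disjointness in the exact form needed (constant-error randomized protocols for $T$ independent instances), which is well established but warrants an explicit citation.
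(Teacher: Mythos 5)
Your outline is sound, but it takes a genuinely different route from the paper, and the difference matters both for what you need to invoke and for which game you end up proving the theorem about. The paper does \emph{not} use $T$ independent disjointness instances plus a direct sum theorem. Instead it embeds a \emph{single} unique set disjointness instance on $k=\Theta(NT)$ bits: the $k$-bit strings are cut into $T$ blocks of $n$ bits, block $i$ defines the utilities of the type-$i$ sub-game, and the promise (exactly one disputed index in the whole $k$-bit string) means exactly one sub-game is a genuine 2-cycle game with a pure Nash equilibrium while the other $T-1$ are ``no-promise'' 2-cycle games with none. The $\Omega(NT)$ bound then comes directly from the standard $\Omega(k)$ bound for unique set disjointness, with no direct sum machinery. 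The price the paper pays is an extra lemma (Claim~4.13 in the source, \texttt{clm:no-pure-nash-no-concentration}) showing that in a sub-game with no pure Nash equilibrium an approximate equilibrium cannot be concentrated, so player $B$ can locally identify the unique ``live'' type before running the $O(\log N)$ recovery protocol. Your route avoids that identification step (every one of your sub-games has a pure equilibrium) but instead needs a direct sum theorem for the \emph{search} version of unique set disjointness under the promise; this is believable via information complexity but is heavier machinery than anything the paper uses, and you would need to state it carefully (promise inputs, search relation, all $T$ copies correct with probability $2/3$). A second caveat: your hard instances (one disputed index per block) are not instances of the Bayesian 2-cycle game as the paper defines it (one disputed index overall), so as written you prove the lower bound for a closely related but different game family.

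On your flagged obstacle: the paper's definition of $\eps$-approximate Bayesian Nash equilibrium is interim (a separate constraint for each type $t_A$, with expectation over $\phi\,|\,t_A$), and since $\phi$ is uniform on the diagonal this makes each $(a^*_i,b^*_i)$ an $\eps$-approximate Nash equilibrium of the $i$-th sub-game with no loss, exactly as you hoped. Be aware, though, that your fallback for an ex-ante definition would not work: averaging over $T$ uniform types dilutes a single-type deviation by a factor of $T$, so the per-type guarantee degrades to $T\eps$, which for large $T$ exceeds the $\frac{1}{16N^2}$ threshold needed by the recovery step; the slack in $\eps$ does not absorb a factor of $T$.
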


We note that our results do not hold for much larger approximation values, 
since there are examples of approximate equilibria of the 2-cycle game for larger approximation values,
that can be found with small amount of communication (see Appendix~\ref{app:examples} for details).
We discuss some of the remaining open problems in Section~\ref{sec:open}.

\subsection{Related Works}\label{sec:related}
We overview previous works related to the computation of Nash and correlated equilibria 
of two-player $N\times N$ games. 

\paragraph{Computational complexity.} 
The computational complexity of finding a Nash equilibrium has been extensively studied in literature. 
Papadimitriou \cite{P94} showed that the problem is in \PPAD, and over a decade later it was shown to be complete for that class, even for inverse polynomial approximation values \cite{DGP09,CDT09}. 
However, for constant approximation values, Lipton et al.\ \cite{LMM03} gave a quasi-polynomial time algorithm for finding an approximate Nash equilibrium, and this was shown to be optimal by Rubinstein \cite{R16} under an ETH assumption for \PPAD. 
In stark contrast, exact correlated equilibrium can be computed for two-player games in polynomial time by a linear program \cite{HS89}. 
The decision version of finding Nash and correlated equilibria with particular properties have also been considered in literature (for examples see \cite{GZ89,CS08,ABC11,BL15,DFS16}). 
Finally, we note that Rubinstein \cite{R15} showed that finding a constant approximate Nash equilibrium of Bayesian games is \PPAD-complete.

\paragraph{Query complexity.} 
\cite{FS16} showed a lower bound of $\Omega(N^2)$ on the 
deterministic query complexity of finding an $\eps$-approximate Nash equilibrium, where $\eps<\nicefrac{1}{2}$.
In the other direction, \cite{FGGS15} showed a deterministic query algorithm 
that finds a $\nicefrac{1}{2}$-approximate Nash equilibrium by making $O(N)$ queries. 
For randomized query complexity, \cite{FS16} showed a lower bound of $\Omega(N^2)$ for finding 
an $\eps$-approximate Nash equilibrium, where $\eps<\nicefrac{1}{4N}$.
In the other direction, \cite{FS16} showed a randomized query algorithm 
that finds a $0.382$-approximate Nash equilibrium by making $O(N\log N)$ queries. 
For coarse correlated equilibrium, Goldberg and Roth \cite{GR14} provided a randomized query algorithm that finds 
a constant approximate coarse correlated equilibrium by making $O(N\log N)$ queries.

\paragraph{Communication complexity.} 
The study of the communication complexity of finding a Nash equilibrium
was initiated by Conitzer and Sandholm \cite{CS04}, where they showed that 
the randomized communication complexity of finding a pure Nash equilibrium (if it exists) is $\Omega(N^2)$.
On the other hand, \cite{GP14} showed a communication protocol that finds a 0.438-approximate Nash equilibrium 
by exchanging $\polylog(N)$ bits of communication, 
and \cite{CDFFJS16} showed a communication protocol that finds a 0.382-approximate Nash equilibrium
with similar communication.
%
In a recent breakthrough, Babichenko and Rubinstein \cite{BR16} proved the first lower bound on the communication complexity of finding an approximate Nash equilibrium. They proved that there exists a constant $\eps_0>0$, such that for all $\eps\le \eps_0$, the randomized communication complexity of finding an $\eps$-approximate Nash equilibrium is at least $\Omega(N^{\eps_0})$.
Note that before \cite{BR16} no communication complexity lower bound was known even for finding an exact mixed Nash equilibrium.

\subsection{Proof Overview}\label{sec:overview}
The 2-cycle $N\times N$ game is defined on two directed graphs, one for each player, 
where both graphs have a common vertex set of size $N$.
The actions of each player are the $N$ vertices.
The utility of a pair of vertices for a player is 1 if he plays the unique out-neighbor 
(according to his graph) of the vertex played by the other player, otherwise it is 0. 

Each graph is constructed from a subset of $\left[\nicefrac{N}{4}\right]$,
such that the two subsets have exactly one element in common. 
The union of the two graphs has a unique 2-cycle that corresponds to the element in the intersection of the subsets.
We show that the 2-cycle game has a unique pure Nash equilibrium, 
that also corresponds to the 2-cycle in the union of the graphs. 
Since it is hard to find the element in the intersection of the subsets, 
finding the pure Nash equilibrium of the 2-cycle game is also hard.

Next, we show that each player can extract from an approximate correlated equilibrium 
a partial mixed strategy on his actions, by looking at the edges of his graph.
We show that the partial strategies are 
either concentrated on the pure Nash equilibrium 
or one of these partial strategies has an unusual probability 
on a vertex which is closely related to the pure Nash equilibrium.

Assuming that the latter does not hold, we show that the partial strategies 
are concentrated on the pure Nash equilibrium as follows:
The union of the two graphs is a layered graph with $\ell=\nicefrac{N}{4}$ layers.
In that graph there is a path of length $\ell-1$ that ends at the 2-cycle.
Using a delicate analysis of the structure of the graph, 
we prove inductively, moving forward along the path, 
that the players play the vertices along the path (up to but not including the 2-cycle) with small probability.
Since the partial strategies hold a meaningful weight of the correlated distribution,
they must be concentrated on the 2-cycle, i.e. the pure Nash equilibrium.

Given these partial strategies, 
the players can recover the pure Nash equilibrium with small amount of communication.
The lower bound then follows from the hardness of finding the pure Nash equilibrium of the 2-cycle game. 

The proof of the lower bound for finding an approximate Nash equilibrium is very similar, 
however it does not require such a delicate analysis.

\section{Preliminaries}

\subsection{General Notation}

\paragraph{Strings.} For two bit strings $x,y\in\bits^*$, let $xy$ be the concatenation of $x$ and $y$.
For a bit string $x\in\bits^n$ and an index $i\in[n]$, $x_i$ is the $i^{th}$ bit in $x$ 
and $\bar{x}$ is the negated bit string, that is $\bar{x}_i$ is the negation of $x_i$.

\paragraph{Probabilities of sets.} For a function $\mu:\Omega\rightarrow[0,1]$, where $\Omega$ is some finite set, 
and a subset $S\subseteq\Omega$, let $\mu(S) = \sum_{z\in S}\mu(z)$.
Define $\mu(\emptyset) = 0$ and $\max_{z\in \emptyset}\mu(z) = 0$.
For a function $\mu:\mathcal{U}\times\mathcal{V}\rightarrow[0,1]$, 
where $\mathcal{U},\mathcal{V}$ are some finite sets, 
and a subset $S\subseteq\mathcal{U}\times\mathcal{V}$,
let $\mu(S) = \sum_{(u,v)\in S}\mu(u,v)$.
For a subset $S\subseteq\mathcal{U}$ and $v\in\mathcal{V}$,
let $\mu(S,v) = \sum_{u\in S}\mu(u,v)$.
Similarly, for a subset $S\subseteq\mathcal{V}$ and $u\in \mathcal{U}$
let $\mu(u,S) = \sum_{v\in S}\mu(u,v)$.

\paragraph{Conditional distributions.} For a distribution $\mu$ over $\mathcal{U}\times\mathcal{V}$,
where $\mathcal{U},\mathcal{V}$ are some finite sets, 
and $u\in\mathcal{U}$ let $\mu|u$ be the distribution over $\mathcal{V}$ defined as
$$ \mu|u(v) = \Pr_{(u',v')\sim\mu}[v=v' ~|~ u=u'] ~~~~\forall~ v\in\mathcal{V} .$$
Similarly, for $v\in\mathcal{V}$ let $\mu|v$ be the distribution over $\mathcal{U}$ defined as
$$ \mu|v(u) = \Pr_{(u',v')\sim\mu}[u=u' ~|~ v=v'] ~~~~\forall~ u\in\mathcal{U} .$$

\subsection{Win-Lose and Bayesian Games}

A win-lose, finite game for two players $A$ and $B$ 
is given by two utility functions
$u_A:\mathcal{U}\times\mathcal{V}\rightarrow\bits$ 
and $u_B:\mathcal{U}\times\mathcal{V}\rightarrow\bits$,
where $\mathcal{U}$ and $\mathcal{V}$ are finite sets of actions.
We say that the game is an $N\times N$ game, where $N=\max\{|\mathcal{U}|,|\mathcal{V}|\}$. 
A \emph{mixed strategy} for player $A$ is a distribution over $\mathcal{U}$
and a mixed strategy for player $B$ is a distribution over $\mathcal{V}$.
A mixed strategy is called \emph{pure} if it has only one action in its support. 
A \emph{correlated mixed strategy} is a distribution over $\mathcal{U}\times\mathcal{V}$.
A \emph{switching rule} for player $A$ is a mapping from $\mathcal{U}$ to $\mathcal{U}$
and a switching rule for player $B$ is a mapping from $\mathcal{V}$ to $\mathcal{V}$.

A Bayesian, finite game for two players $A$ and $B$ 
is given by a distribution $\phi$ over $\Theta_A\times\Theta_B$ 
and two utility functions
$u_A:\Theta_A\times\Sigma_A\times\Sigma_B\rightarrow[0,1]$ 
and $u_B:\Theta_B\times\Sigma_A\times\Sigma_B\rightarrow[0,1]$,
where $\Sigma_A$, $\Sigma_B$ are finite sets of actions,
and $\Theta_A$, $\Theta_B$ are finite sets of types.
We say that the game is on $N$ actions and $T$ types 
where $N = \max\{|\Sigma_A|,|\Sigma_B|\}$ and $T = \max\{|\Theta_A|, |\Theta_B|\}$. 
A \emph{mixed strategy} for player $A$ is a distribution over $\Sigma_A$
and a mixed strategy for player $B$ is a distribution over $\Sigma_B$.
A mixed strategy is called \emph{pure} if it has only one action in its support.

\subsection{Approximate Correlated Equilibrium}

\begin{definition}\label{def:ACE}
Let $\eps\in[0,1)$.
An $\eps$-approximate correlated equilibrium of a two-player game
is a correlated mixed strategy $\mu$ 
such that the following two conditions hold:
\begin{enumerate}
\item For every actions $u,u'\in\mathcal{U}$, 
$$ \sum_{v\in\mathcal{V}} \mu(u,v) \cdot \left( u_A(u',v) - u_A(u,v) \right) \leq \eps .$$
\item For every actions $v,v'\in\mathcal{V}$, 
$$ \sum_{u\in\mathcal{U}} \mu(u,v) \cdot \left( u_B(u,v') - u_B(u,v) \right) \leq \eps .$$
\end{enumerate}
\end{definition}

\begin{definition}
Let $\eps\in[0,1)$.
An $\eps$-approximate rule correlated equilibrium of a two-player game
is a correlated mixed strategy $\mu$ 
such that the following two conditions hold:
\begin{enumerate}
\item For every switching rule $f$ for player $A$, 
$$ \Ex_{(u,v)\sim\mu} \left[ u_A(f(u),v) - u_A(u,v) \right] \leq \eps .$$
\item For every switching rule $f$ for player $B$, 
$$ \Ex_{(u,v)\sim\mu} \left[ u_B(u,f(v)) - u_B(u,v) \right] \leq \eps .$$
\end{enumerate}
\end{definition}

When the approximation value is zero, the two notions above coincide.
The following proposition states that every approximate rule correlated equilibrium 
is an approximate correlated equilibrium.

\begin{proposition}\label{ARCEtoACE}
Let $\eps\in[0,1)$ and let $\mu$ be an $\eps$-approximate rule correlated equilibrium of a two-player game.
Then, $\mu$ is an $\eps$-approximate correlated equilibrium of the game.
\end{proposition}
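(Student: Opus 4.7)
The plan is to reduce the correlated equilibrium condition to the rule correlated equilibrium condition via a very simple ``local'' switching rule. Fix an arbitrary pair $u,u' \in \mathcal{U}$ for which we want to verify the first condition of Definition~\ref{def:ACE} for player $A$. The idea is to pick a switching rule that deviates only at the action $u$, so that when we apply the rule correlated equilibrium bound, all the other terms in the expectation drop out.

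Concretely, I would define $f:\mathcal{U}\to\mathcal{U}$ by $f(u)=u'$ and $f(x)=x$ for every $x\in\mathcal{U}\setminus\{u\}$. Applying the player-$A$ condition for rule correlated equilibrium to this $f$ gives
$$\sum_{x\in\mathcal{U}}\sum_{v\in\mathcal{V}}\mu(x,v)\cdot\bigl(u_A(f(x),v)-u_A(x,v)\bigr)\leq\eps.$$
For every $x\neq u$ the summand vanishes identically because $f(x)=x$, so the whole double sum collapses to $\sum_{v}\mu(u,v)\cdot(u_A(u',v)-u_A(u,v))\leq\eps$, which is exactly the first clause of Definition~\ref{def:ACE}. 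Since the pair $u,u'$ was arbitrary (and the case $u'=u$ is trivial), this establishes condition~1. An entirely analogous argument, using the switching rule on $\mathcal{V}$ that sends $v\mapsto v'$ and fixes every other action, takes care of condition~2 for player $B$.

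There is essentially no obstacle here: the statement is the standard observation that ``pointwise'' deviations are a special case of switching-rule deviations, and the only thing to check is that the chosen $f$ zeroes out all but one term of the expectation, which is immediate. The proof is two short symmetric paragraphs, one per player.
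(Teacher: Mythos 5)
Your proof is correct and uses exactly the same key idea as the paper's proof, namely the switching rule that sends $u$ to $u'$ and fixes every other action, so that all but one term of the expectation vanishes. The only cosmetic difference is that the paper phrases the argument as a proof by contradiction while you argue directly, which if anything is slightly cleaner.
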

\begin{proof}
Assume towards a contradiction that $\mu$ is not an $\eps$-approximate correlated equilibrium.
Then, without loss of generality, there exist actions $u',u''\in\mathcal{U}$ such that
$$ \sum_{v\in\mathcal{V}} \mu(u',v) \cdot \left( u_A(u'',v) - u_A(u',v) \right) > \eps .$$
Define a switching rule for player $A$ as follows:
$$ f(u)= \begin{cases*}u'' &if $u=u'$ \\u &otherwise\end{cases*} .$$
Then, for every $v\in\mathcal{V}$ and $u\in\mathcal{U}\setminus\{u'\}$, 
it holds that $u_A(f(u),v) - u_A(u,v) = 0$.
Therefore,
\begin{align*}
\Ex_{(u,v)\sim\mu} \left[ u_A(f(u),v) - u_A(u,v) \right] 
&= \sum_{v\in\mathcal{V}} \mu(u',v)\cdot \left( u_A(f(u'),v) - u_A(u,v) \right) 
> \eps ,
\end{align*}
which is a contradiction.
\end{proof}

In the other direction, the following holds.

\begin{proposition}\label{ACEtoARCE}
Let $\eps\in[0,1)$ and let $\mu$ be an $\eps$-approximate correlated equilibrium of an $N$ action two-player game.
Then, $\mu$ is an $(\eps\cdot N)$-approximate rule correlated equilibrium of the game.
\end{proposition}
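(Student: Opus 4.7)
The plan is to prove the two required inequalities for rule correlated equilibrium directly from the correlated equilibrium conditions, by decomposing the expectation over $\mu$ according to the ``input'' action to the switching rule and applying the ACE condition once for each such action.

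More concretely, fix an arbitrary switching rule $f:\mathcal{U}\to\mathcal{U}$ for player $A$. I would rewrite the expectation as
\[
\Ex_{(u,v)\sim\mu}\bigl[u_A(f(u),v) - u_A(u,v)\bigr]
= \sum_{u\in\mathcal{U}} \sum_{v\in\mathcal{V}} \mu(u,v)\bigl(u_A(f(u),v) - u_A(u,v)\bigr).
\]
For each fixed $u$, set $u' := f(u)$ and invoke the first condition of Definition~\ref{def:ACE} with this pair $(u,u')$; this bounds the inner sum by $\eps$. Summing over all $u\in\mathcal{U}$ gives a total bound of $|\mathcal{U}|\cdot\eps \le N\cdot\eps$, which is exactly what is required. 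The symmetric argument with rows and columns swapped handles a switching rule for player $B$, using the second condition of Definition~\ref{def:ACE}.

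There is no real obstacle here: the statement is essentially a union bound over the ``row'' index, exploiting the fact that a switching rule can be viewed as an independent deviation choice per action, while the correlated equilibrium condition already bounds the benefit of each single such deviation. The only thing worth being slightly careful about is the direction of the inequalities (making sure one does not accidentally lose a sign when $u_A(f(u),v)-u_A(u,v)$ is negative for some $u$, which is fine since the ACE condition is a one-sided upper bound and the sum of upper bounds is still an upper bound on the sum). After that, the proof is a two-line calculation followed by the symmetric remark for player $B$.
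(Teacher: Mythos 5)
Your proof is correct and is essentially the same argument as the paper's: the paper phrases it contrapositively (assume the rule deviation gains more than $\eps\cdot N$ and extract by averaging a single action $u'$ whose deviation to $f(u')$ gains more than $\eps$), which is exactly the reverse direction of your row-by-row summation. Both rest on the same decomposition of the expectation over the input action to the switching rule, so there is nothing further to add.
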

\begin{proof}
Assume towards a contradiction that $\mu$ is not an $(\eps\cdot N)$-approximate rule correlated equilibrium. 
Then, without loss of generality, there exists a function $f:\mathcal{U}\to \mathcal{U}$ such that 
$$ \Ex_{(u,v)\sim\mu} \left[ u_A(f(u),v) - u_A(u,v) \right] > \eps\cdot N.$$
From an averaging argument, there exists $u'\in\mathcal{U}$ such that 
$$ \sum_{v\in\mathcal{V}} \mu(u',v) \cdot \left( u_A(f(u'),v) - u_A(u',v) \right) > \eps $$
which is a contradiction.
\end{proof}

\subsubsection*{The Communication Task}

The communication task of finding an $\eps$-approximate (rule) correlated equilibrium
is as follows.
Consider a win-lose, finite game for two players $A$ and $B$, 
given by two utility functions
$u_A:\mathcal{U}\times\mathcal{V}\rightarrow\bits$ 
and $u_B:\mathcal{U}\times\mathcal{V}\rightarrow\bits$.

\paragraph{The inputs:}
The actions sets $\mathcal{U},\mathcal{V}$ and the approximation value $\eps$ are known to both players.
Player $A$ gets the utility function $u_A$ 
and player $B$ gets the utility function $u_B$.
The utility functions are given as truth tables of size $|\mathcal{U}|\times|\mathcal{V}|$ each.

\paragraph{At the end of the communication:}
Both players know the same correlated mixed strategy $\mu$
over $\mathcal{U}\times\mathcal{V}$,
such that $\mu$ is an $\eps$-approximate (rule) correlated equilibrium.

\subsection{Approximate Nash Equilibrium}

\begin{definition}\label{def:ANE}
Let $\eps\in[0,1)$.
An $\eps$-approximate Nash equilibrium of a two-player game 
is a pair of mixed strategies $(a^*,b^*)$ for the players $A,B$ respectively, 
such that the following two conditions hold:
\begin{enumerate}
\item For every mixed strategy $a$ for player $A$, 
$$ \Ex_{u\sim a,v\sim b^*} [ u_A(u,v) ] - \Ex_{u\sim a^*,v\sim b^*} [ u_A(u,v) ] \leq \eps .$$
\item For every mixed strategy $b$ for player $B$, 
$$ \Ex_{u\sim a^*,v\sim b} [ u_B(u,v) ] - \Ex_{u\sim a^*,v\sim b^*} [ u_B(u,v) ] \leq \eps .$$
\end{enumerate}
\end{definition}

An approximate Nash equilibrium with $\eps=0$ is called an exact Nash equilibrium.
A two-player game has a pure Nash equilibrium $(u,v)$, 
where $u\in\mathcal{U}$ and $v\in\mathcal{V}$, 
if there exists an exact Nash equilibrium $(a^*,b^*)$,
where the support of $a^*$ is $\{u\}$ and the support of $b^*$ is $\{v\}$.

\begin{definition}\label{def:AWSNE}
Let $\eps\in[0,1)$.
An $\eps$-approximate well supported Nash equilibrium of a two-player game
is a pair of mixed strategies $(a^*,b^*)$ for the players $A,B$ respectively, 
such that the following two conditions hold:
\begin{enumerate}
\item For every action $u\in \Supp(a^*)$ and every action $u'\in\mathcal{U}$, 
$$ \Ex_{v\sim b^*} [ u_A(u',v) - u_A(u,v) ] \leq \eps .$$
\item For every action $v\in \Supp(b^*)$ and every action $v'\in\mathcal{V}$, 
$$ \Ex_{u\sim a^*} [ u_B(u,v') - u_B(u,v) ] \leq \eps .$$
\end{enumerate}
\end{definition}

When the approximation value is zero, the two notions above coincide.
The following proposition states that every approximate well supported Nash equilibrium 
is an approximate Nash equilibrium. 

\begin{proposition}\label{WSNEtoNE}
Let $\eps\in[0,1)$ and let $(a^*,b^*)$ be an $\eps$-approximate well supported Nash equilibrium of a two-player game.
Then, $(a^*,b^*)$ is an $\eps$-approximate Nash equilibrium of the game.
\end{proposition}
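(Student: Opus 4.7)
The plan is to reduce the task to bounding, for each player separately, the gap between the best response and the equilibrium payoff under the other player's mixed strategy, using the well-supported condition to control the payoff at every action in the support.

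Fix the approximate well supported Nash equilibrium $(a^*,b^*)$ and consider player $A$ first. Define the single-variable function $g(u) = \Ex_{v \sim b^*}[u_A(u,v)]$, so that for any mixed strategy $a$ for player $A$ we have $\Ex_{u \sim a, v \sim b^*}[u_A(u,v)] = \Ex_{u \sim a}[g(u)]$. The quantity we need to bound is then $\Ex_{u \sim a}[g(u)] - \Ex_{u \sim a^*}[g(u)]$. Let $u^\star \in \arg\max_{u' \in \mathcal{U}} g(u')$; clearly $\Ex_{u \sim a}[g(u)] \leq g(u^\star)$ for every mixed strategy $a$, so it suffices to show $\Ex_{u \sim a^*}[g(u)] \geq g(u^\star) - \eps$.

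This lower bound follows directly from the well supported condition applied with the choice $u' = u^\star$: for every $u \in \Supp(a^*)$, $\Ex_{v \sim b^*}[u_A(u^\star,v) - u_A(u,v)] = g(u^\star) - g(u) \leq \eps$, which rearranges to $g(u) \geq g(u^\star) - \eps$. Averaging this inequality over $u \sim a^*$ (which is supported in $\Supp(a^*)$) yields $\Ex_{u \sim a^*}[g(u)] \geq g(u^\star) - \eps$, and combining with the previous step gives the desired inequality $\Ex_{u \sim a, v \sim b^*}[u_A(u,v)] - \Ex_{u \sim a^*, v \sim b^*}[u_A(u,v)] \leq \eps$ for every mixed strategy $a$. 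An entirely symmetric argument, swapping the roles of $A$ and $B$, handles the condition for player $B$.

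There is no real obstacle here; the proof is essentially a bookkeeping exercise that exploits the fact that the condition in Definition~\ref{def:AWSNE} is required to hold uniformly over all $u' \in \mathcal{U}$, in particular for the best response $u^\star$ to $b^*$. The only point to be mildly careful about is ensuring that $a^*$ really is supported on $\Supp(a^*)$ so that the averaging step is legitimate, and that $u^\star$ exists, which is automatic since $\mathcal{U}$ is finite.
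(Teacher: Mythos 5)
Your proof is correct and uses essentially the same argument as the paper: the paper averages the well supported inequality $\Ex_{v\sim b^*}[u_A(u',v)-u_A(u,v)]\leq\eps$ jointly over $u'\sim a$ and $u\sim a^*$, while you first pass to the best response $u^\star$ and then average over $u\sim a^*$, which is a cosmetic variant of the same step. No gaps.
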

\begin{proof}
Let $a$ be a mixed strategy for player $A$.
For every action $u\in \Supp(a^*)$ and every action $u'\in \Supp(a)$, 
$$ \Ex_{v\sim b^*} [ u_A(u',v) - u_A(u,v) ] \leq \eps .$$
Therefore,
\begin{align*}
\eps &\geq \Ex_{u'\sim a, u\sim a^*}\Ex_{v\sim b^*} [ u_A(u',v) - u_A(u,v) ] \\
&= \Ex_{u'\sim a,v\sim b^*} [ u_A(u',v) ] - \Ex_{u\sim a^*,v\sim b^*} [ u_A(u,v) ].
\end{align*}
Similarly, for every mixed strategy $b$ for player $B$, 
every action $v\in \Supp(b^*)$ and every action $v'\in \Supp(b)$, 
$$ \Ex_{u\sim a^*} [ u_B(u,v') - u_B(u,v) ] \leq \eps .$$
Therefore,
\begin{align*}
\eps &\geq \Ex_{v'\sim b, v\sim b^*}\Ex_{u\sim a^*} [ u_B(u,v') - u_B(u,v) ] \\
&= \Ex_{u\sim a^*,v'\sim b} [ u_B(u,v) ] - \Ex_{u\sim a^*,v\sim b^*} [ u_B(u,v) ].
\end{align*}
\end{proof}

\subsubsection*{The Communication Task}

The communication task of finding an $\eps$-approximate (well supported) Nash equilibrium
Consider a win-lose, finite game for two players $A$ and $B$, 
given by two utility functions
$u_A:\mathcal{U}\times\mathcal{V}\rightarrow\bits$ 
and $u_B:\mathcal{U}\times\mathcal{V}\rightarrow\bits$.

\paragraph{The inputs:}
The actions sets $\mathcal{U},\mathcal{V}$ and the approximation value $\eps$ are known to both players.
Player $A$ gets the utility function $u_A$ 
and player $B$ gets the utility function $u_B$.
The utility functions are given as truth tables of size $|\mathcal{U}|\times|\mathcal{V}|$ each.

\paragraph{At the end of the communication:}
Player $A$ knows a mixed strategy $a^*$ over $\mathcal{U}$
and player $B$ knows a mixed strategy $b^*$ over $\mathcal{V}$,
such that $(a^*,b^*)$ is an $\eps$-approximate (well supported) Nash equilibrium.

\subsection{Approximate Bayesian Nash Equilibrium}

\begin{definition}
An $\eps$-approximate Bayesian Nash equilibrium of a two-player game
is a set of mixed strategies $\{a^*_{t_A}\}_{t_A\in\Theta_A}$ for player $A$
and a set of mixed strategies $\{b^*_{t_B}\}_{t_B\in\Theta_B}$ for player $B$
such that the following two conditions hold:
\begin{enumerate}
\item For every type $t_A\in\Theta_A$ and every mixed strategy $a$ for player $A$, 
$$ \Ex_{t_B \sim \phi|t_A} \Ex_{u\sim a, v\sim b^*_{t_B}} [ u_A(t_A,u,v) ] 
 - \Ex_{t_B \sim \phi|t_A} \Ex_{u\sim a^*_{t_A}, v\sim b^*_{t_B}} [ u_A(t_A,u,v) ] \leq \eps .$$
\item For every type $t_B\in\Theta_B$ and every mixed strategy $b$ for player $B$, 
$$ \Ex_{t_A \sim \phi|t_B} \Ex_{u\sim a^*_{t_A}, v\sim b} [ u_B(t_B,u,v) ] 
 - \Ex_{t_A \sim \phi|t_B} \Ex_{u\sim a^*_{t_A}, v\sim b^*_{t_B}} [ u_B(t_B,u,v) ] \leq \eps .$$
\end{enumerate}
\end{definition}

\subsubsection*{The Communication Task}

The communication task of finding an $\eps$-approximate Bayesian Nash equilibrium
Consider a Bayesian, finite game for two players $A$ and $B$, on $N$ actions and $T$ types, 
that is given by the distribution $\phi$ over $\Theta_A\times\Theta_B$ 
and the two utility functions
$u_A:\Theta_A\times\Sigma_A\times\Sigma_B\rightarrow[0,1]$ 
and $u_B:\Theta_B\times\Sigma_A\times\Sigma_B\rightarrow[0,1]$.

\paragraph{The inputs:}
All the sets $\Theta_A, \Theta_B, \Sigma_A, \Sigma_B$, the distribution $\phi$ and the approximation value $\eps$ 
are known to both players.
Player $A$ gets the utility function $u_A$ 
and player $B$ gets the utility function $u_B$.
The utility functions are given as truth tables of size at most $T\cdot N^2$ each.

\paragraph{At the end of the communication:}
Player $A$ knows a set of mixed strategies $\{a^*_{t_A}\}_{t_A\in\Theta_A}$
and player $B$ knows a set of mixed strategies $\{b^*_{t_B}\}_{t_B\in\Theta_B}$,
such that $\left(\{a^*_{t_A}\}_{t_A\in\Theta_A},\{b^*_{t_B}\}_{t_B\in\Theta_B}\right)$ 
is an $\eps$-approximate Bayesian Nash equilibrium.

%
%

\section{The 2-Cycle Game}\label{sec:the-game}

Let $n\in\N$, $n\geq 3$.
The 2-cycle game is constructed from two $n$-bit strings $x,y\in\bits^n$
for which there exists exactly one index $i\in[n]$, such that $x_{i} > y_{i}$.
Throughout the paper, all operations (adding and subtracting) are done modulo $n$.

\paragraph{The graphs.}
Given a string $x\in\{0,1\}^n$, player $A$ computes the graph $G_A$ 
on the set of vertices $V=[n]\times\{0,1,01,11\}$ with the following set of directed edges
(an edge $(u,v)$ is directed from $u$ into $v$):
\begin{align*}
E_A &= \mathlarger{\mathlarger{\Bigg\lbrace}}((i,1),(i+1, z)): i\in[n], 
		z=\begin{cases*}0 &if $x_{i+1}=0$ \\11 &otherwise\end{cases*}\mathlarger{\mathlarger{\Bigg\rbrace}}\\
  &\mathlarger{\mathlarger{\mathlarger\cup}}\mathlarger{\mathlarger{\Bigg\lbrace}}((i,0),(i+1, z)): i\in[n], x_i=0, 
  		z=\begin{cases*}0 &if $x_{i+1}=0$ \\01 &otherwise\end{cases*}\mathlarger{\mathlarger{\Bigg\rbrace}}\\
  &\mathlarger{\mathlarger{\mathlarger\cup}}\Big\lbrace ((i,0),(i-1, x_{i-1})): x_i=1, i\in[n] \Big\rbrace\\
  &\mathlarger{\mathlarger{\mathlarger\cup}}\Big\lbrace ((i,z1),(i, 1)): z\in\bits, i\in[n] \Big\rbrace.
\end{align*}

See an example of such a graph in Figure~\ref{fig1a}.
\begin{figure}[!ht]
	\begin{center}
		\resizebox{\linewidth}{!}{
		\includegraphics{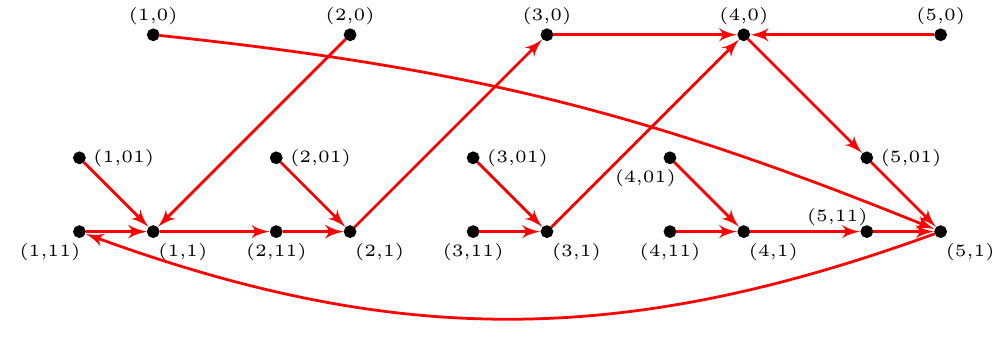}}
	\end{center}
		\caption{ The graph $G_A$ built from the 5 bit string 11001.}
		\label{fig1a}
\end{figure}

Given a string $y\in\{0,1\}^n$, player $B$ computes the graph $G_B$ 
on the same set of vertices $V$ with the following set of directed edges:
\begin{align*}
E_B &= \mathlarger{\mathlarger{\Bigg\lbrace}}((i,1),(i+1, z)): i\in[n], 
		z=\begin{cases*}0 &if $y_{i+1}=0$ \\11 &otherwise\end{cases*}\mathlarger{\mathlarger{\Bigg\rbrace}}\\
  &\mathlarger{\mathlarger{\mathlarger\cup}}\mathlarger{\mathlarger{\Bigg\lbrace}}((i,0),(i+1, z)): i\in[n],
  		z=\begin{cases*}0 &if $y_{i+1}=0$ \\01 &otherwise\end{cases*}\mathlarger{\mathlarger{\Bigg\rbrace}}\\
  &\mathlarger{\mathlarger{\mathlarger\cup}}\Big\lbrace ((i,z1),(i, 1)): z\in\bits, i\in[n] \Big\rbrace.
\end{align*}

See an example of such a graph in Figure~\ref{fig1b}.
\begin{figure}[!ht]
	\begin{center}
		\resizebox{\linewidth}{!}{
		\includegraphics{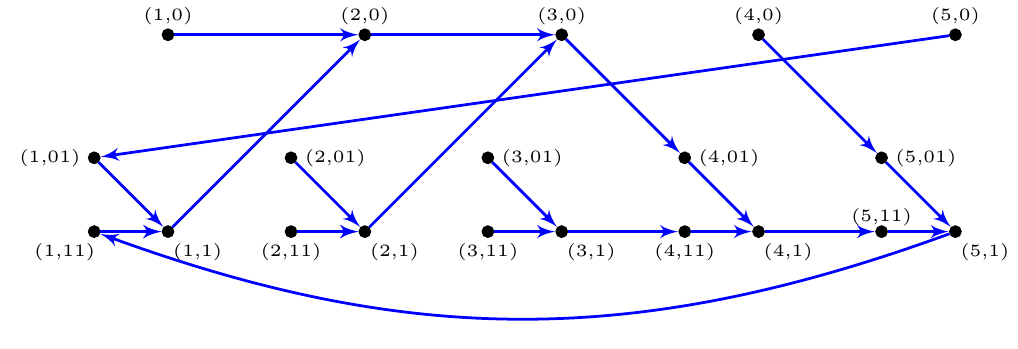}}
	\end{center}
   		\caption{ The graph $G_B$ built from the 5 bit string 10011.}
   		\label{fig1b}
\end{figure}

\paragraph{The actions and utility functions.}
The sets of actions are $\mathcal{U}=\mathcal{V}=V$. 
The utility function $u_A:V^2\rightarrow\bits$ of player $A$ is defined for every pair of actions $(u,v)\in V^2$ as 
$$ u_A(u,v)= \begin{cases*}1 &if $(v,u)\in E_A$ \\0 &otherwise\end{cases*} .$$
The utility function $u_B:V^2\rightarrow\bits$ of player $B$ is defined for every pair of actions $(u,v)\in V^2$ as 
$$ u_B(u,v)= \begin{cases*}1 &if $(u,v)\in E_B$ \\0 &otherwise\end{cases*} .$$
This is a win-lose, $N\times N$ game, where $N=4n$. 
We call it the \emph{2-cycle game} or more specifically, the \emph{2-cycle $N\times N$ game}.

\subsection{Notations}\label{sec:the-game-notations}

For two vertices $u,v \in V$, $(u,v)$ is a \emph{2-cycle}
if $(v,u)\in E_A$ and $(u,v)\in E_B$.
For a vertex $u\in V$, define 
\begin{align*}
& N_A(u)=\{v\in V~:~ (v,u)\in E_A \} \\
& N_B(u)=\{v\in V~:~ (v,u)\in E_B \}.
\end{align*}
That is, $N_A(u)$ is the set of incoming neighbors to $u$ in $E_A$, 
and $N_B(u)$ is the set of incoming neighbors to $u$ in $E_B$.
Let $d_A(u) = \left|N_A(u)\right|$ and $d_B(u) = \left|N_B(u)\right|$.
For a subset $S\subseteq V$, define
\begin{align*}
& N_A(S) = \cup_{v\in S}N_A(v) \\
& N_B(S) = \cup_{v\in S}N_B(v) .
\end{align*}
For every $i\in[n]$, \emph{layer $i$} is the set of vertices defined as
$$ L_i = \{(i,z) ~:~ z\in\{0,1\}\} .$$ 
Another useful set of vertices is a \emph{midway layer} defined as
$$ L^{m}_i = \{(i,z) ~:~ z\in\{01,11,0\}\} .$$
Edges in $E_A$ of the form $((i,0),(i-1, x_{i-1}))$ for $i\in[n]$ are called \emph{back-edges}.
For a vertex $u\in V$, define 
\begin{align*}
& N_A^f(u)=\{v\in V~:~ (v,u)\in E_A \text{  and $(v,u)$ is not a back-edge } \}.
\end{align*}
For a subset $S\subseteq V$, define
\begin{align*}
& N_A^f(S) = \cup_{v\in S}N_A^f(v) .
\end{align*}
Let $x,y$ be the strings from which the game was constructed.
Note that $u_A$ determines $x$, and $u_B$ determines $y$.
For an index $i\in[n]$ we say that $i$ is \emph{disputed} if $x_{i} > y_{i}$.
Otherwise, we say that $i$ is \emph{undisputed}.
Define $i^*$ to be the unique disputed index.
We denote the following key vertices:
\begin{align*}
& u^*      = (i^*-1,x_{i^*-1}) \\
& v^*_0    = (i^*,0) \\
& v^*_1    = (i^*,1) \\
& v^*_{01} = (i^*,01) \\
& v^*_{11} = (i^*,11) .
\end{align*}
For $p\in[0,1]$, $u\in V$ and a function $f:V\rightarrow[0,1]$, 
we say that $f$ is \emph{$p$-concentrated on $u$} if 
\begin{align*}
& f(v) \leq p ~~~~\forall~ v\in V\setminus\{u\} .
\end{align*}
To simplify notations, for a function $f$ taking inputs from the set $V$ 
and a vertex $v=(i,z)\in V$, we write $f(i,z)$ instead of $f((i,z))$.

\subsection{Basic Properties}\label{sec:the-game-properties}

The following are some useful, basic properties of the 2-cycle game.

\begin{proposition}[Out-degree]\label{prop:nonzero-out-degree}
For every $v\in V$, there exists exactly one $u\in V$ 
such that $u_A(u,v)=1$.
Similarly, for every $u\in V$, there exists exactly one $v\in V$
such that $u_B(u,v)=1$.
\end{proposition}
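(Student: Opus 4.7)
The plan is to verify, by a case analysis on the second coordinate $z\in\{0,1,01,11\}$ of a vertex, that every vertex of $V$ has out-degree exactly one in both $G_A$ and $G_B$. The claim then follows directly, since $u_A(u,v)=1$ iff $(v,u)\in E_A$, making the first assertion equivalent to saying that the out-degree of $v$ in $G_A$ equals one, and analogously for the second assertion with $G_B$.

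For $G_A$, I would go through the four defining sets of edges and show that, for each $v\in V$, exactly one of them contributes exactly one outgoing edge at $v$. If $v=(i,1)$, only the first set applies, and it contributes the single edge $((i,1),(i+1,z))$ where $z$ is determined by the value of $x_{i+1}$. If $v=(i,0)$, the crucial observation is that the second set (forward edge) and the third set (back-edge) are mutually exclusive and jointly exhaustive: the forward edge from $(i,0)$ exists precisely when $x_i=0$, while the back-edge from $(i,0)$ exists precisely when $x_i=1$; in either case a single outgoing edge is produced. Finally, if $v=(i,z1)$ with $z\in\bits$, only the fourth set applies and it contributes the single edge to $(i,1)$.

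For $G_B$ the analysis is simpler because there are no back-edges and no condition on $y_i$ for forward edges from $(i,0)$. If $v=(i,1)$, the first set gives the unique outgoing edge to $(i+1,z)$ with $z$ determined by $y_{i+1}$; if $v=(i,0)$, the second set gives the unique outgoing edge to $(i+1,z)$ with $z$ determined by $y_{i+1}$; and if $v=(i,z1)$, the third set gives the single edge to $(i,1)$.

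The only substantive step is checking the disjointness and exhaustiveness of the forward and back-edge cases at vertices $(i,0)$ in $G_A$; once this is observed, the entire statement reduces to reading off the definitions of $E_A$ and $E_B$.
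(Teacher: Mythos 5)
Your proof is correct and matches the paper's approach: the paper simply states that the claim follows immediately from the definitions of $E_A$ and $E_B$, and your case analysis on the second coordinate is exactly the verification being left implicit. The one point you rightly flag as substantive --- that at $(i,0)$ in $G_A$ the forward-edge condition $x_i=0$ and the back-edge condition $x_i=1$ are mutually exclusive and exhaustive --- is handled correctly.
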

\begin{proof}
Follows immediately from the definitions of $E_A$ and $E_B$.
\end{proof}

\begin{proposition}[Max in-degree]\label{prop:max-in-degree}
For every $v\in V$, it holds that $d_A(v)\leq 3$ and $d_B(v)\leq 2$.
\end{proposition}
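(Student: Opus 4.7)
The plan is a straightforward case analysis on the second coordinate of the target vertex $v=(j,w)$, where $w\in\{0,1,01,11\}$, inspecting one by one which of the four clauses in the definition of $E_A$ (resp.\ $E_B$) can possibly contribute an edge whose head is $v$.

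First I would handle $E_A$. I would observe that the four clauses defining $E_A$ place very tight restrictions on the head of any edge they produce: the first clause sends edges into layer $L_{i+1}$ at a vertex with second coordinate in $\{0,11\}$; the second clause sends edges into layer $L_{i+1}^m\cup L_{i+1}$ with second coordinate in $\{0,01\}$; the third (back-edge) clause sends edges only into layer $L_{i-1}$ with second coordinate in $\{0,1\}$; and the fourth clause sends edges only into vertices with second coordinate $1$. Splitting on $w$: for $w=0$ the incoming contributions come only from clauses~1, 2, and~3, and each clause contributes at most one edge since the source vertex is determined by $j$ (the edge from $(j-1,1)$ if $x_j=0$, the edge from $(j-1,0)$ if $x_{j-1}=x_j=0$, and the back-edge from $(j+1,0)$ if $x_{j+1}=1$ and $x_j=0$), giving $d_A(v)\leq 3$. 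For $w=1$, only clauses~3 and~4 contribute; clause~4 gives the two edges from $(j,01)$ and $(j,11)$, clause~3 contributes at most one edge from $(j+1,0)$ when $x_{j+1}=1$ and $x_j=1$, for a total of at most~$3$. For $w=01$ only clause~2 can contribute, giving at most~$1$, and for $w=11$ only clause~1 contributes, again at most~$1$.

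The argument for $E_B$ is the same, but simpler, because the back-edge clause is absent. For $w=0$, clauses~1 and~2 each contribute at most one edge (from $(j-1,1)$ and $(j-1,0)$, respectively, when $y_j=0$), for a total of at most~$2$. For $w=1$, only clause~4 contributes, giving exactly the two edges from $(j,01)$ and $(j,11)$. The midway cases $w=01$ and $w=11$ admit at most one incoming edge each by the same reasoning as for $E_A$. Hence $d_B(v)\leq 2$ uniformly.

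I do not anticipate any real obstacle: the only thing to be careful about is to not double-count edges across clauses, which is automatic because the four clauses in each definition are parameterised by disjoint conditions on the source vertex (i.e.\ its layer and second coordinate), so no single edge appears in two clauses. The entire proof is a finite check organised by the value of $w$; no combinatorial or probabilistic argument is required.
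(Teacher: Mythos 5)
Your case analysis is correct and is exactly the verification the paper leaves implicit: the paper's proof simply states that the bounds follow immediately from the definitions of $E_A$ and $E_B$, and your clause-by-clause inspection of which edges can have head $(j,w)$ for each $w\in\{0,1,01,11\}$ is the direct way to check this. (The only nit is that $E_B$ has three clauses rather than four, but you account for this correctly by noting the back-edge clause is absent.)
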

\begin{proof}
Follows immediately from the definitions of $E_A$ and $E_B$.
\end{proof}

By the following claim, the 2-cycle game has exactly one 2-cycle.

\begin{proposition}[A 2-cycle]\label{prop:2-cycle}
Let $(v,u)\in E_A$ be a back-edge.
If $v\neq v^*_0$, then $d_B(v)=0$.
Otherwise, $u=u^*$ and $(u^*,v^*_0)$ is a 2-cycle.
\end{proposition}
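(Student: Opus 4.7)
The plan is to unwind the definitions of back-edges in $E_A$ and of in-edges to second-coordinate-$0$ vertices in $E_B$, compare them via the disputed/undisputed dichotomy, and then do two short cases.

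First I would observe the following local structure: a back-edge in $E_A$ has the form $((i,0),(i-1,x_{i-1}))$ with $x_i=1$. So writing $(v,u)$ as a back-edge automatically gives $v=(i,0)$ for some $i\in[n]$ with $x_i=1$, and $u=(i-1,x_{i-1})$. Second, scanning the definition of $E_B$, the only edges whose head has the form $(i,0)$ are the edges $((i-1,1),(i,0))$ and $((i-1,0),(i,0))$, and these exist precisely when $y_i=0$. Hence $d_B(i,0)=0$ iff $y_i=1$, and otherwise $d_B(i,0)=2$.

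Given these two observations, the case analysis is quick. If $v\neq v^*_0$, then $i\neq i^*$, so by the uniqueness of the disputed index we have $x_i\leq y_i$. Combined with $x_i=1$ coming from the back-edge, this forces $y_i=1$ and therefore $d_B(v)=0$ by the observation above. If $v=v^*_0$, then $i=i^*$ and $u=(i^*-1,x_{i^*-1})=u^*$; and since $i^*$ is disputed we have $x_{i^*}>y_{i^*}$, i.e.\ $y_{i^*}=0$, so $((i^*-1,x_{i^*-1}),(i^*,0))=(u^*,v^*_0)\in E_B$. Together with the back-edge $(v^*_0,u^*)\in E_A$, this shows $(u^*,v^*_0)$ is a 2-cycle.

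There is no real obstacle here beyond checking the combinatorics of the graph definitions; the only mild care needed is to note that back-edges in $E_A$ are the only edges into vertices of the form $(i,0)$ coming from layer $i-1$, while in $E_B$ both the ``$z=1$'' and ``$z=0$'' families can contribute, but both are governed by the \emph{same} bit $y_i$, so the $d_B$-computation collapses cleanly to a single condition. Once that is pointed out, the uniqueness of $i^*$ does all the remaining work in one line per case.
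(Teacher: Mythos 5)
Your proof is correct and follows essentially the same route as the paper's: unwind the form of a back-edge to get $v=(i,0)$ with $x_i=1$, observe that in-edges to $(i,0)$ in $E_B$ exist precisely when $y_i=0$, and split on whether $i=i^*$ using the uniqueness of the disputed index. The only difference is cosmetic indexing (the paper parametrizes the back-edge as going from layer $i+1$ to layer $i$).
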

\begin{proof}
Let $u=(i,z_A)\in V$, for some $z_A\in\{0,1,01,11\}$
and assume there exits $v=(i+1,z_B)\in N_A(u)$, for some $z_B\in\{0,1,01,11\}$.
By the definition of $E_A$, 
$$ z_A=x_i, ~~~x_{i+1}=1 ~~~\text{and}~~~ z_B=0 .$$
If $v\neq v^*_0$, 
then $y_{i+1}=1$ and by the definition of $E_B$, $d_B(v)=0$.
Otherwise $v = v^*_0$ and $x_{i^*} > y_{i^*}$.
Since $v = v^*_0$ it holds that $u=u^*$.
Since $x_{i^*} > y_{i^*}$ it holds that $y_{i+1}=0$ 
and by the definition of $E_B$, $(u,v)\in E_B$.
\end{proof}

\subsection{Pure Nash Equilibrium}\label{sec:pure-nash}

By Claim~\ref{clm:pure-nash} below, 
the 2-cycle game has a unique pure Nash equilibrium.
Together with Proposition~\ref{prop:2-cycle}, the pure Nash equilibrium of the game is its 2-cycle.

\begin{claim}\label{clm:pure-nash}
The 2-cycle game has exactly one pure Nash equilibrium $(u^*,v^*_0)$.
\end{claim}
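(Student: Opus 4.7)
The plan is to reduce the question to characterizing the 2-cycles in the game: first I would show that every pure Nash equilibrium must be a 2-cycle, and then that there is exactly one 2-cycle, namely $(u^*, v^*_0)$.

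\textbf{Reduction to 2-cycles.} Suppose $(u,v)$ is a pure Nash equilibrium. By Proposition~\ref{prop:nonzero-out-degree}, for every fixed $v\in V$ there is a unique $u'\in V$ with $u_A(u',v)=1$; since utilities are $\bits$-valued, this $u'$ is the unique strict best response of player $A$ against $v$, and every other action has payoff $0$. Hence $u_A(u,v)=1$, i.e.\ $(v,u)\in E_A$; symmetrically $u_B(u,v)=1$, so $(u,v)\in E_B$. Thus $(u,v)$ is a 2-cycle.

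\textbf{Uniqueness of the 2-cycle.} I would split on whether $u$ and $v$ lie in the same layer. The intra-layer edges appearing in both $E_A$ and $E_B$ have the form $((i,z1),(i,1))$, so a 2-cycle within a single layer would force $u$ to simultaneously equal $(i,z1)$ and $(i,1)$ for some $z\in\bits$, which is impossible. If instead $u$ and $v$ lie in different layers, then both edges $(v,u)\in E_A$ and $(u,v)\in E_B$ are inter-layer. Inspecting the definitions, all inter-layer edges in $E_B$, and all non-back-edge inter-layer edges in $E_A$, go forward from layer $i$ to layer $i+1$, while back-edges of $E_A$ are the only edges that go backward. A quick layer-index check shows that $(v,u)$ and $(u,v)$ cannot both be forward, so $(v,u)\in E_A$ must be a back-edge. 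Proposition~\ref{prop:2-cycle} then closes the argument: if $v\neq v^*_0$ then $d_B(v)=0$, contradicting $(u,v)\in E_B$; otherwise $v=v^*_0$ and $u=u^*$, so $(u^*,v^*_0)$ is the unique candidate.

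Finally, Proposition~\ref{prop:2-cycle} asserts that $(u^*,v^*_0)$ really is a 2-cycle, so $u_A(u^*,v^*_0)=u_B(u^*,v^*_0)=1$, the maximum possible payoff in a $\bits$-valued game; hence no unilateral deviation can help either player, confirming that $(u^*,v^*_0)$ is indeed a pure Nash equilibrium. The only place where care is needed is the uniqueness step: one has to rule out 2-cycles not arising from back-edges of $E_A$, which is exactly what the layer case analysis above accomplishes.
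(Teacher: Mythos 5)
Your proof is correct and follows essentially the same route as the paper's: both arguments identify pure Nash equilibria with 2-cycles using the unique-out-neighbor property (Proposition~\ref{prop:nonzero-out-degree}) and then pin down the unique 2-cycle via Proposition~\ref{prop:2-cycle}. If anything, your uniqueness step is slightly more complete than the paper's, since you explicitly rule out 2-cycles arising from intra-layer or two forward edges before reducing to the back-edge case, a point the paper leaves implicit.
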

\begin{proof}
By Proposition~\ref{prop:2-cycle}, $(u^*,v^*_0)$ is a 2-cycle.
That is, $u_A(u^*,v^*_0)=1$ and $u_B(u^*,v^*_0)=1$.
Let $a^*$ be the mixed strategy for player $A$ with support $\{u^*\}$ and
$b^*$ be the mixed strategy for player $B$ with support $\{v^*_0\}$.
Then, 
$$ \Ex_{u\sim a^*,v\sim b^*} [ u_A(u,v) ] = \Ex_{u\sim a^*,v\sim b^*} [ u_B(u,v) ] = 1 .$$
For every mixed strategy $a$ for player $A$ it holds that
$$ \Ex_{u\sim a,v\sim b^*} [ u_A(u,v) ] = \Ex_{u\sim a} [ u_A(u,v^*_0) ] \leq 1. $$
Similarly, for every mixed strategy $b$ for player $B$,
$$ \Ex_{u\sim a^*,v\sim b} [ u_B(u,v) ] = \Ex_{v\sim b} [ u_B(u^*,v) ] \leq 1. $$
Therefore, $(u^*,v^*_0)$ is a pure Nash equilibrium.
\\
Let $u,v\in V$ such that $u\neq u^*$ or $v\neq v^*_0$. 
Let $a'$ be the mixed strategy for player $A$ with support $\{u\}$ and
$b'$ be the mixed strategy for player $B$ with support $\{v\}$.
By Proposition~\ref{prop:2-cycle}, either $(v,u)\notin E_A$ or $(u,v)\notin E_B$.
By proposition~\ref{prop:nonzero-out-degree}, 
there exist $u',v'\in V$ such that $(v,u')\in E_A$ and $(u,v')\in E_B$.
If $(v,u)\notin E_A$ then let $a$ be the mixed strategy for player $A$ with support $\{u'\}$.
We get that 
$$ \Ex_{u''\sim a,v''\sim b'} [ u_A(u'',v'') ] = u_A(u',v) = 1 $$
and
$$ \Ex_{u''\sim a',v''\sim b'} [ u_A(u'',v'') ] = u_A(u,v) = 0. $$
Otherwise $(u,v)\notin E_B$, then let $b$ be the mixed strategy for player $B$ with support $\{v'\}$.
We get that 
$$ \Ex_{u''\sim a',v''\sim b} [ u_B(u'',v'') ] = u_B(u,v') = 1 $$
and
$$ \Ex_{u''\sim a',v''\sim b'} [ u_B(u'',v'') ] = u_B(u,v) = 0. $$
Therefore, $(u,v)$ is not a pure Nash equilibrium.
\end{proof}


The following theorem states that 
finding the pure Nash equilibrium (equivalently, the 2-cycle) of the 2-cycle game is hard.
The proof is by a reduction from the following search variant of unique set disjointness:
Player $A$ gets a bit string $x\in\bits^n$ and player $B$ gets a bit string $y\in\bits^n$.
They are promised that there exists exactly one index $i^*\in[n]$ such that $x_{i^*} > y_{i^*}$.
Their goal is to find the index $i^*$.
It is well known that the randomized communication complexity of solving this problem 
with constant error probability is $\Omega(n)$ \cite{BFS86, KS92, Raz92}.
This problem is called the \emph{universal monotone relation}. 
For more details on the universal monotone relation and its connection to unique set disjointness see \cite{KN97}.

\begin{theorem}\label{thm:pure}
Every randomized communication protocol 
for finding the pure Nash equilibrium 
of the 2-cycle $N\times N$ game, 
with error probability at most $\frac{1}{3}$, 
has communication complexity at least $\Omega(N)$.
\end{theorem}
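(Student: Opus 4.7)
My plan is to prove this by a direct reduction from the search variant of unique set disjointness (the universal monotone relation), which is known to require $\Omega(n)$ randomized communication.

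Suppose we have a randomized protocol $\Pi$ that finds the pure Nash equilibrium of the 2-cycle $N\times N$ game with error probability at most $\frac{1}{3}$ and communication complexity $c$. Given an instance $(x,y) \in \bits^n \times \bits^n$ of the universal monotone relation (where the players are promised exactly one index $i^*$ with $x_{i^*} > y_{i^*}$), player $A$ locally builds the graph $G_A$ from $x$ and hence the utility function $u_A$, while player $B$ locally builds $G_B$ from $y$ and hence $u_B$. This construction requires \emph{no} communication, since each $u_A$, $u_B$ depends only on one of the inputs. The players then invoke $\Pi$ on the resulting 2-cycle game.

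By Claim~\ref{clm:pure-nash}, the unique pure Nash equilibrium of this game is $(u^*, v^*_0) = ((i^*-1, x_{i^*-1}), (i^*, 0))$. Hence, once $\Pi$ outputs the pure Nash equilibrium $(u^*, v^*_0)$, both players can read off the index $i^*$ from the first coordinate of $v^*_0$ with no additional communication. This yields a randomized protocol for the universal monotone relation on $n$-bit inputs, with error probability at most $\frac{1}{3}$ and communication complexity $c$.

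Since the randomized communication complexity of the universal monotone relation with constant error is $\Omega(n)$, and $N = 4n$ so that $n = \Theta(N)$, we conclude $c = \Omega(N)$. The only thing to verify carefully is the first step of the reduction, namely that the 2-cycle game constructed from an arbitrary unique-disjointness instance $(x,y)$ is a valid input to the pure-Nash-finding task; this is immediate from the construction in Section~\ref{sec:the-game}, which only requires that there exist exactly one disputed index. I do not foresee any substantial obstacle in the argument, as the structural work has already been done in Claim~\ref{clm:pure-nash} and Proposition~\ref{prop:2-cycle}.
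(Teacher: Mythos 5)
Your proposal is correct and is essentially identical to the paper's own proof: both reduce from the search variant of unique set disjointness by having each player locally construct their utility function, invoke the assumed protocol, and recover $i^*$ from the output via Claim~\ref{clm:pure-nash}. The only minor imprecision is that player $A$ reads $i^*$ from $u^*=(i^*-1,x_{i^*-1})$ rather than from $v^*_0$ (which only player $B$ holds), but this does not affect the argument.
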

\begin{proof}
Let $x,y\in \bits^n$ be the inputs to the search variant of unique set disjointness described above.
Consider the 2-cycle $N\times N$ game which is constructed from these inputs, 
given by the utility functions $u_A,u_B$.
Assume towards a contradiction that there exists a communication protocol $\pi$ for finding 
the pure Nash equilibrium of the 2-cycle game with error probability at most $\nicefrac{1}{3}$
and communication complexity $o(N)$.
The players run $\pi$ on $u_A,u_B$ and with probability at least $\nicefrac{2}{3}$,
at the end of the communication, player $A$ knows $u$ and player $B$ knows $v$, 
such that $(u,v)$ is the pure Nash equilibrium of the game.
By Claim~\ref{clm:pure-nash}, $u=u^*$ and $v=v^*_0$.
Given $u^*,v^*_0$ to the players $A$ and $B$ respectively, 
both players know the index $i^*$, which is a contradiction.
\end{proof}

\section{Approximate Correlated Equilibrium of The 2-Cycle Game}\label{sec:ACE}

The following theorem states that 
given an approximate correlated equilibrium of the 2-cycle game, 
the players can recover the pure Nash equilibrium. 

\begin{theorem}\label{thm:ACEtoPure}
Consider a 2-cycle $N\times N$ game, given by the utility functions $u_A,u_B$.
Let $\eps\leq\frac{1}{24N^3}$ and 
let $\mu$ be an $\eps$-approximate correlated equilibrium of the game.
Then, there exists a deterministic communication protocol, 
that given $u_A$, $\mu$ to player $A$, and $u_B$, $\mu$ to player $B$,
uses $O(\log N)$ bits of communication, and at the end of the communication
player $A$ outputs an action $u\in V$ and player $B$ outputs an action $v\in V$, 
such that $(u,v)$ is the pure Nash equilibrium of the game.
\end{theorem}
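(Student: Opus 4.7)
The plan is to show that the approximate correlated equilibrium $\mu$ must place nearly all of its mass on the $2$-cycle $(u^*,v^*_0)$, after which a short exchange suffices.

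\textbf{Concentration.} Using Proposition~\ref{ACEtoARCE} I treat $\mu$ as an $(\eps N)$-approximate rule correlated equilibrium. Proposition~\ref{prop:nonzero-out-degree} provides best-response maps $\phi_A,\phi_B\colon V\to V$, where $\phi_A(v)$ is the unique $u$ with $u_A(u,v)=1$ and $\phi_B(u)$ is the unique $v$ with $u_B(u,v)=1$. Define $p_A = \sum_{(u,v):\,u=\phi_A(v)}\mu(u,v)$ and $p_B = \sum_{(u,v):\,v=\phi_B(u)}\mu(u,v)$. By Proposition~\ref{prop:2-cycle} the only pair satisfying both $u=\phi_A(v)$ and $v=\phi_B(u)$ is $(u^*,v^*_0)$, so by inclusion-exclusion $\mu(u^*,v^*_0)\geq p_A+p_B-1$. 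It therefore suffices to show that $p_A,p_B\geq 1-O(N^3\eps)$, which for $\eps\leq\frac{1}{24N^3}$ forces $\mu(u^*,v^*_0)>\tfrac{1}{2}$.

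\textbf{Bounding mass on non-best-response pairs.} I will show $p_B\geq 1-O(N^3\eps)$ (the argument for $p_A$ is analogous modulo the back-edges of $G_A$) in two stages. First, for any $v$ with $d_B(v)=0$, Definition~\ref{def:ACE} condition~2 applied with an arbitrary $v'$ and using $u_B(\cdot,v)\equiv 0$ gives $\mu(N_B(v'),v)\leq\eps$; summing over $v'\in V$ and noting that the sets $\{N_B(v')\}_{v'\in V}$ partition $V$ (each $u$ has a unique $\phi_B(u)$), I obtain $\mu(\cdot,v)\leq N\eps$, so the total mass on columns with $d_B(v)=0$ is at most $N^2\eps$. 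Second, for alive $v\neq v^*_0$ I exploit the layered forward structure of $G_B$: the joint dynamics $\phi_B\circ\phi_A\colon V\to V$ has $v^*_0$ as its only fixed point (equivalently, its only $2$-cycle, Proposition~\ref{prop:2-cycle}), and iterating these dynamics starting from any other $v$ reaches either a dead vertex or $v^*_0$ within $O(N)$ steps. Applying rule CE with switching rules given by successive iterates of $\phi_B\circ\phi_A$ lets me bound mass on any alive $v\neq v^*_0$ in terms of the dead mass plus $O(N)\cdot\eps N$, yielding the desired $1-O(N^3\eps)$ bound.

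\textbf{Protocol.} Once $\mu(u^*,v^*_0)>\tfrac{1}{2}$ is established, both players identify the unique pair $(u,v)$ with $\mu(u,v)>\tfrac{1}{2}$ from $\mu$ alone. For robustness they exchange their selected actions using $O(\log N)$ bits, and each verifies with their own utility function that $u_A(u,v)=1$ and $u_B(u,v)=1$; by Proposition~\ref{prop:2-cycle} the only pair passing both checks is $(u^*,v^*_0)$, which they output.

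\textbf{Main obstacle.} The delicate step is the second stage of the concentration argument, i.e., propagating the dead-vertex bound $\mu(\cdot,v)\leq N\eps$ to all of $V\setminus\{v^*_0\}$. A constant switching rule only gains a factor of $1/N$ in the rule CE inequality, which is insufficient; the propagation must use switching rules tuned to the iterated joint best-response dynamics, and it crucially relies on the back-edges of $G_A$ which are what makes $(u^*,v^*_0)$ the unique attractor rather than one of many disjoint cycles.
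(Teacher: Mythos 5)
Your argument hinges on the claim that $p_A,p_B\geq 1-O(N^3\eps)$, hence $\mu(u^*,v^*_0)>\tfrac12$, and this is precisely the step you flag as the ``main obstacle'' and never carry out; moreover, the mechanism you propose for it cannot work. The constraints of Definition~\ref{def:ACE} (and the rule-CE constraints you obtain via Proposition~\ref{ACEtoARCE}) only compare, for a \emph{fixed} recommended column $v$, the quantities $\mu(N_B(v'),v)$ across deviations $v'$; they never relate the total masses $\mu(V,v)$ of different columns. So while your stage one correctly gives $\mu(V,v)\leq N\eps$ for columns with $d_B(v)=0$, no sequence of switching rules built from iterates of $\phi_B\circ\phi_A$ can propagate this to a bound on the \emph{full column mass} of an alive $v\neq v^*_0$: conditioned on $v$, the row distribution $\mu|v$ may be spread over many rows, in which case every deviation $v'$ gains only $\mu(N_B(v'),v)-\mu(N_B(v),v)$, which can be at most $\eps$ even though the column carries a constant fraction of the mass. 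This is not a hypothetical worry: the paper's own computation in Claim~\ref{clm:from-correlated-to-slowly-increasing} extracts only $p_B=b(V)\geq\frac1N-N\eps$ from the CE conditions, and Appendix~\ref{app:examples} exhibits a $\Theta(1/N^3)$-approximate correlated equilibrium supported entirely away from $(u^*,v^*_0)$ (with a larger constant than the theorem's threshold, but your sketch tracks no constants that would distinguish the two regimes). The absolute concentration $\mu(u^*,v^*_0)>\tfrac12$ is therefore unsubstantiated and almost certainly false.

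The paper takes a different route that sidesteps this entirely: from $\mu$ it defines the best-response-weighted marginals $a(v)=\mu(v,N_A(v))$ and $b(v)=\mu(N_B(v),v)$, shows they form a $3\eps$-slowly-increasing pair with $b(V)\geq\frac{3}{4N}$ (Claim~\ref{clm:from-correlated-to-slowly-increasing}), and then propagates bounds layer by layer (Lemma~\ref{lem:slowly-increasing-concentration}) to conclude that either $b(v^*_1)>3\delta$ --- in which case player $B$ reads $i^*$ off directly, a branch your protocol has no analogue of --- or $a$ and $b$ are $5N\delta$-concentrated on $u^*$ and $v^*_0$ only \emph{relative to} $\max\{a(V),b(V)\}\geq\frac{3}{4N}$. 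Algorithm~\ref{algo:slowly-increasing} is designed to work with exactly this weak, relative guarantee, even when $\mu$ places essentially no mass on the 2-cycle. To repair your proof you would need to replace the absolute concentration of $\mu$ by statements about these weighted marginals and add the $b(v^*_1)$ branch to your protocol, at which point you are reconstructing the paper's argument.
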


Theorem~\ref{thm:ACE} follows from Theorem~\ref{thm:pure} and Theorem~\ref{thm:ACEtoPure}.
For the rest of this section we prove Theorem~\ref{thm:ACEtoPure}.
The proof uses the notion of slowly increasing probabilities. 
For more details on slowly increasing probabilities see Section~\ref{sec:on-slowly-increasing-probabilities}.

\begin{definition}[Slowly increasing probabilities]\label{def:slowly-increasing-probabilities}
Let $\delta\in[0,1]$.
A pair of functions $a:V\rightarrow[0,1]$ and $b:V\rightarrow[0,1]$ 
is \emph{$\delta$-slowly increasing} for the 2-cycle game 
if for every $u\in V$ the following conditions hold:
\begin{enumerate}
\item $a(u)\leq b(N_A(u))+\delta$.
\item $b(u)\leq a(N_B(u))+\delta$.
\end{enumerate}
\end{definition}
\noindent In particular, if $d_A(u)=0$ then $a(u)\leq b(\emptyset) + \delta = \delta$.
Similarly, if $d_B(u)=0$ then $b(u)\leq \delta$.

The next lemma states that given a pair of functions
which is $\delta$-slowly increasing for the 2-cycle game,
for a small enough $\delta$, the players can recover the pure Nash equilibrium. 
The proof is in Section~\ref{sec:proof-of-slowly-increasing}

\begin{lemma}\label{lem:slowly-increasing}
Consider a 2-cycle $N\times N$ game, given by the utility functions $u_A,u_B$.
Let $a:V\rightarrow[0,1]$ and $b:V\rightarrow[0,1]$ be a pair of functions 
which is $\delta$-slowly increasing for the game, 
where $\delta \in\left[0, \frac{\max\{b(V),a(V)\}}{5N^2}\right)$.
Then, there exists a deterministic communication protocol, 
that given $u_A$, $a$ and $\delta$ to player $A$, and $u_B$, $b$ and $\delta$ to player $B$,
uses $O(\log N)$ bits of communication, and at the end of the communication
player $A$ outputs an action $u\in V$ and player $B$ outputs an action $v\in V$, 
such that $(u,v)$ is the pure Nash equilibrium of the game.
\end{lemma}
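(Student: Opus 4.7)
My strategy is to use the slowly-increasing conditions to show that the masses of $a$ and $b$ are forced toward the 2-cycle vertices $u^*$ and $v^*_0$, after which the protocol identifies them by a single $\arg\max$ exchange in $O(\log N)$ bits.

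By Proposition~\ref{prop:nonzero-out-degree}, each vertex has a unique out-neighbor in $E_A$ and in $E_B$; write $\pi_A(v)$ and $\pi_B(u)$ for these. The composition $\phi = \pi_A \circ \pi_B\colon V \to V$ has fixed points precisely at 2-cycles of the game, so by Proposition~\ref{prop:2-cycle} its unique fixed point is $u^*$ (with $\pi_B(u^*) = v^*_0$); symmetrically $\psi = \pi_B \circ \pi_A$ has unique fixed point $v^*_0$. Summing the slowly-increasing conditions over any subset $S \subseteq V$, and using that each $v$ contributes to $N_A(u)$ for exactly one $u = \pi_A(v)$, yields
\[
a(S) \le b(\pi_A^{-1}(S)) + |S|\delta, \qquad b(T) \le a(\pi_B^{-1}(T)) + |T|\delta,
\]
whence $a(S) \le a(\phi^{-1}(S)) + 2N\delta$; specializing to $S = V$ gives $|a(V) - b(V)| \le N\delta$.

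The key structural input is Proposition~\ref{prop:2-cycle}: the only $E_A$-back-edge whose source lies in the image of $\pi_B$ is $(v^*_0, u^*)$, since every other back-edge source $v$ has $d_B(v) = 0$ and therefore no $\pi_B$-preimage. So the only way $\phi$ can "double back" is via this single back-edge, landing at $u^*$; everywhere else $\phi$ advances the layer index. I would use this to argue $\phi^{-K}(\{u^*\}) = V$ for some $K = O(N)$, by cyclic wrapping through the $n$ layers. Iterating the composed inequality along $S_k = V \setminus \phi^{-k}(\{u^*\})$ then telescopes to
\[
a(V \setminus \{u^*\}) \le 2NK\delta = O(N^2\delta),
\]
and symmetrically $b(V \setminus \{v^*_0\}) = O(N^2\delta)$ via $\psi$.

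Combined with the hypothesis $\delta < \max\{b(V), a(V)\}/(5N^2)$ and $|a(V) - b(V)| \le N\delta$, both $a(V)$ and $b(V)$ strictly exceed $4N^2\delta$, so $a(u^*) > 2N^2\delta \ge a(u)$ for every $u \ne u^*$; thus $u^* = \arg\max_u a(u)$ is computable by player $A$ alone and, analogously, $v^*_0 = \arg\max_v b(v)$ by player $B$ alone. The protocol is then immediate: each player sends their $\arg\max$ ($O(\log N)$ bits each), and both output $(u^*, v^*_0)$, which by Claim~\ref{clm:pure-nash} is the unique pure Nash equilibrium.

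The main obstacle is the structural step --- rigorously showing every $\phi$-orbit reaches $u^*$ within $O(N)$ iterations. Proposition~\ref{prop:2-cycle} pins down the unique effective back-edge, but $\phi$ can advance the layer index by one or two at a time, so small instances of the 2-cycle game do exhibit $\phi$-cycles that avoid $u^*$; a careful layer-by-layer case analysis (including the midway vertices $(i, z1)$) is needed either to exclude such cycles or to absorb the residual mass they carry into the telescoping bound, so that the $\arg\max$ identification remains valid under the hypothesis on $\delta$.
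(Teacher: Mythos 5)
Your reduction of the lemma to the single claim that every $\phi$-orbit reaches $u^*$ within $O(N)$ steps (where $\phi=\pi_A\circ\pi_B$) is exactly where the proof breaks, and you have correctly sensed this yourself: that claim is false in general. The only back-edge $\phi$ ever traverses is the one out of $v^*_0$ (since any other back-edge source $v$ has $d_B(v)=0$ by Proposition~\ref{prop:2-cycle}), but $\phi$ can bypass $v^*_0$ entirely --- for instance, if $x_{i^*-1}=y_{i^*-1}=0$ then $\phi(i^*-2,z)=(i^*,01)$ for $z\in\{0,1\}$ and $\phi(i^*,01)=\pi_A(i^*,1)$ lands in layer $i^*+1$, so an orbit can wrap around all $n$ layers without ever visiting $u^*$. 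Consequently $\bigcup_K\phi^{-K}(\{u^*\})$ need not be $V$, the telescoping gives no bound on the mass carried by such a cycle, and this is not a residual term you can ``absorb'': it is the entire difficulty of the lemma.

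The deeper problem is that your target --- unconditional concentration of $a$ on $u^*$ and of $b$ on $v^*_0$, followed by a pure $\arg\max$ exchange --- is simply not a true statement. Even for $\delta=0$ there are slowly increasing pairs in which mass circulates around the loop $v^*_1\to L^m_{i^*+1}\to\cdots\to L^m_{i^*-1}\to\{(i^*,01),(i^*,11)\}\to v^*_1$; every inequality in Definition~\ref{def:slowly-increasing-probabilities} is then satisfied with equality around the cycle, $b(v^*_1)$ can be a constant, and $\arg\max_v b(v)$ is $v^*_1$ rather than $v^*_0$. This is why the paper's Lemma~\ref{lem:slowly-increasing-concentration} is a \emph{dichotomy} (``either $b(v^*_1)>3\delta$ or concentration holds'') rather than an unconditional concentration statement, and why Algorithm~\ref{algo:slowly-increasing} has a first step in which player $B$ looks for an index $i$ with $y_i=0$ and $b(i,1)>3\delta$: by Proposition~\ref{prop:slowly-increasing-undisputed} such an $i$ must equal $i^*$, so the non-concentrated case is resolved by player $B$ alone. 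In the remaining case the paper seeds a layer-by-layer induction at layer $i^*+1$, using the local degeneracies $d_A(v^*_0)=d_B(v^*_{01})=d_B(v^*_{11})=0$ together with the assumption $b(v^*_1)\le 3\delta$ to cut the circulating mass, and then propagates a joint bound on $a(L^m_{i^*+j})+b(L^m_{i^*+j})$ that grows only linearly in $j$ (giving the $5N\delta$ concentration parameter; your route would at best give $O(N^2\delta)$, which is borderline against the hypothesis $\delta<\max\{a(V),b(V)\}/5N^2$). To repair your argument you would need both an analogue of this anchoring at the disputed layer and an extra protocol step handling the case where the argmax is not where you expect it.
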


We prove that an approximate correlated equilibrium for the 2-cycle game
implies the existence of a slowly increasing pair of functions.
Theorem~\ref{thm:ACEtoPure} follows from Lemma~\ref{lem:slowly-increasing} 
and Claim~\ref{clm:from-correlated-to-slowly-increasing}.

\begin{claim}\label{clm:from-correlated-to-slowly-increasing}
Let $\mu$ be an $\eps$-approximate correlated equilibrium of the 2-cycle $N\times N$ game, 
where $\eps\leq\frac{1}{24N^3}$.
Then, there exists a pair of functions $a:V\rightarrow[0,1]$ and $b:V\rightarrow[0,1]$
which is $\frac{1}{8N^3}$-slowly increasing for the game.
Moreover, player $A$ knows $a$, player $B$ knows $b$ and $b(V) \geq \frac{3}{4N}$.
\end{claim}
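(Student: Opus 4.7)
My plan is to take $a$ and $b$ to be ``aligned'' partial marginals of $\mu$, namely
$$ a(u) \;=\; \mu(u, N_A(u)) \;=\; \sum_{v \in N_A(u)} \mu(u,v), \qquad b(v) \;=\; \mu(N_B(v), v) \;=\; \sum_{u \in N_B(v)} \mu(u,v). $$
In words, $a(u)$ is the $\mu$-mass on pairs where $A$'s action $u$ is already a best reply to $B$'s action, and $b(v)$ is the symmetric quantity. Since player $A$ knows $u_A$, she can compute all sets $N_A(u)$, and together with $\mu$ she computes $a$; player $B$ computes $b$ analogously. Clearly $0 \leq a,b \leq 1$.

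The main step is to translate the approximate correlated equilibrium inequalities into entrywise bounds on $\mu$. Applying the second condition of Definition~\ref{def:ACE} to a generic $v$ and to $v'$ chosen as the unique action with $u_B(u,v') = 1$ (which exists by Proposition~\ref{prop:nonzero-out-degree}, so in particular $u \in N_B(v')$) gives
$$ \mu(u,v) \;\leq\; \mu(N_B(v'), v) \;\leq\; \mu(N_B(v), v) + \eps \;=\; b(v) + \eps \qquad \text{for every } u,v \in V. $$
Summing this over $v \in N_A(u)$ and using $|N_A(u)| \leq 3$ from Proposition~\ref{prop:max-in-degree} yields $a(u) \leq b(N_A(u)) + 3\eps$. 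The symmetric use of the first condition of Definition~\ref{def:ACE} gives $\mu(u,v) \leq a(u) + \eps$, and summing over $u \in N_B(v)$ (with $|N_B(v)| \leq 2$) gives $b(v) \leq a(N_B(v)) + 2\eps$. Since $3\eps \leq 3/(24N^3) = 1/(8N^3) = \delta$, the pair $(a,b)$ is $\delta$-slowly increasing.

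To establish $b(V) \geq 3/(4N)$, I would sum the bound $\mu(N_B(v'), v) \leq b(v) + \eps$ over $v \in V$, obtaining $\sum_{u \in N_B(v')} \mu(u,V) \leq b(V) + N\eps$ for every $v'$, and then sum the resulting inequalities over $v' \in V$. By Proposition~\ref{prop:nonzero-out-degree} every $u \in V$ lies in exactly one $N_B(v')$ (the one indexed by its unique $B$-out-neighbor), so the left-hand side telescopes to $\sum_u \mu(u,V) = 1$, giving $1 \leq N \cdot b(V) + N^2 \eps$. Plugging in $\eps \leq 1/(24N^3)$ yields $b(V) \geq 1/N - 1/(24N^2) \geq 23/(24N) \geq 3/(4N)$, as required.

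The main conceptual obstacle is guessing the definitions of $a$ and $b$: the naive choice of full marginals $\mu(u,V)$ and $\mu(V,v)$ fails because the slowly-increasing condition bounds $a(u)$ by a sum of $b$ over only the (at most three) $G_A$-in-neighbors of $u$, and marginals satisfy no such local relationship. The aligned quantities above are tailored so that a single-action deviation in the CE inequality, namely swapping $u$ for its unique best reply to a chosen $v$, translates directly into a slowly-increasing bound, losing only a factor equal to the maximum in-degree times $\eps$.
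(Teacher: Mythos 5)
Your proposal is correct and follows essentially the same route as the paper: the paper defines exactly the same aligned quantities $a(u)=\mu(u,N_A(u))$ and $b(v)=\mu(N_B(v),v)$, derives the same entrywise bound $\mu(u,v)\leq b(v)+\eps$ (resp.\ $\mu(u,v)\leq a(u)+\eps$) from the correlated-equilibrium conditions together with Proposition~\ref{prop:nonzero-out-degree}, and sums over in-neighbors using Proposition~\ref{prop:max-in-degree}. Your derivation of $b(V)\geq \frac{3}{4N}$ is the same double-summation argument with the order of summation swapped, landing on the identical inequality $1\leq N\cdot b(V)+N^2\eps$.
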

\begin{proof}
Define a function $a:V\rightarrow[0,1]$ as
$$ a(v) = \mu(v,N_A(v)) ~~~~\forall~ v\in V $$
and a function $b:V\rightarrow[0,1]$ as 
$$ b(v) = \mu(N_B(v),v) ~~~~\forall~ v\in V .$$
Let $v\in V$ and assume that $b(N_A(v)) \leq p$ for some $p\in\R$.
We will show that $a(v) \leq p+3\eps$.
Let $u\in N_A(v)$ (if there is no such vertex we are done).
By Definition~\ref{def:ACE}, for every $u'\in V$, 
$$ \eps \geq \mu(N_B(u'),u) - b(u) .$$
By Proposition~\ref{prop:nonzero-out-degree}, 
there exists $u'\in V$ such that $v\in N_B(u')$.
Therefore,
$$ \eps \geq \mu(v,u) - b(u) .$$
Summing over every $u\in N_A(v)$ we get that
\begin{align*}
3\eps &\geq a(v) - b(N_A(v)) \geq a(v) - p,
\end{align*}
where we bounded the left-hand side by Proposition~\ref{prop:max-in-degree} 
and the right-hand side by the assumption.
\\
The same holds when replacing $a$ with $b$, $N_A$ with $N_B$, and $\mu(N_B(u'),u)$ with $\mu(u,N_A(u'))$. 
That is, for every $v\in V$, if $a(N_B(v)) \leq p$ for some $p\in\R$,
then $b(v) \leq p + 3\eps$.
\\
Finally, we bound $b(V)$:
\begin{align*}
1 - b(V) &= \sum_{v\in V}\sum_{v'\in V:v'\neq v}\mu(N_B(v'),v) \\
&\leq \sum_{v\in V}\sum_{v'\in V:v'\neq v}\left( \mu(N_B(v),v)+\eps \right)\\
&\leq (N-1)\cdot b(V) + N^2\cdot\eps,
\end{align*}
where the first step follows from the definition of $b$ and from Proposition~\ref{prop:nonzero-out-degree}
and the second step follows from Definition~\ref{def:ACE}.
Therefore, $b(V) \geq \frac{1}{N} - N\cdot\eps \geq \frac{3}{4N}$.
\end{proof}

\subsection{On Slowly Increasing Probabilities}\label{sec:on-slowly-increasing-probabilities}

In this section we describe some useful, basic properties of slowly increasing probabilities for the 2-cycle game.

Recall that for a vertex $u\in V$, $N_A^f(u)$ is the set of vertices $v$ 
such that $(v,u)\in E_A$ but $(v,u)$ is not a back-edge.
The following proposition states 
that a back-edge adds at most $\delta$ to the probability of its vertices.

\begin{proposition}\label{prop:slowly-increasing-back-edges}
Let $\delta\in[0,1]$ and let $a:V\rightarrow[0,1]$ and $b:V\rightarrow[0,1]$ be 
a pair of functions which is $\delta$-slowly increasing for the 2-cycle game.
Let $u=(i,z)\in V$, where $i\in [n]$, $i+1\neq i^* \bmod n$ and $z\in\{0,1\}$.
Then $a(u)\leq b(N_A^f(u)) + 2\delta$.
\end{proposition}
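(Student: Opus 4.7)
The plan is to directly apply the slowly increasing property to $u$, separate out the contribution of the (at most one) back-edge entering $u$, and show that the tail of that back-edge has $b$-value at most $\delta$ by using again the slowly increasing property together with the assumption $i+1 \neq i^* \bmod n$.

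First, I will analyze $N_A(u) \setminus N_A^f(u)$ for $u = (i,z)$ with $z \in \{0,1\}$. Inspecting the definition of $E_A$, back-edges have the form $((j,0),(j-1,x_{j-1}))$ with $x_j=1$, so the only possible back-edge entering $u=(i,z)$ is $((i+1,0),(i,x_i))$, which exists precisely when $z=x_i$ and $x_{i+1}=1$. Thus either $N_A(u) = N_A^f(u)$ (and the claim follows immediately from $a(u) \le b(N_A(u)) + \delta$), or $N_A(u) = N_A^f(u) \cup \{(i+1,0)\}$, in which case
\[
a(u) \;\le\; b(N_A(u)) + \delta \;\le\; b(N_A^f(u)) + b(i+1,0) + \delta.
\]

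Next, the key step is to bound $b(i+1,0) \le \delta$ under the hypothesis $i+1 \neq i^* \bmod n$. Since the back-edge exists, $x_{i+1}=1$; and since $i+1$ is undisputed, the promise $x_{i+1} > y_{i+1}$ fails, forcing $y_{i+1}=1$. Now I inspect $E_B$ to compute $N_B(i+1,0)$: the edges of $E_B$ landing on the second coordinate $0$ at layer $i+1$ come from the first two families in the definition of $E_B$, both of which require $y_{i+1}=0$. The third family of edges in $E_B$ lands on the second coordinate $1$, not $0$. Hence $y_{i+1}=1$ gives $N_B(i+1,0)=\emptyset$, so $d_B(i+1,0)=0$, and the second slowly increasing condition yields $b(i+1,0) \le a(\emptyset) + \delta = \delta$.

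Combining the two displays, $a(u) \le b(N_A^f(u)) + \delta + \delta = b(N_A^f(u)) + 2\delta$, as required. The only real subtlety is tracking which case forces the back-edge to exist and why undisputedness of $i+1$ propagates through the $E_B$ definition to kill the $b$-mass at $(i+1,0)$; once that is set up, the proof is just an application of the slowly increasing inequalities twice.
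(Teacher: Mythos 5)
Your proof is correct and follows essentially the same route as the paper's: apply the slowly-increasing inequality to $u$, split off the single possible back-edge into $u$, and show its tail $(i+1,0)$ has $d_B=0$ (hence $b$-value at most $\delta$) because $i+1$ being undisputed forces $y_{i+1}=1$. The only difference is that you reprove this last fact inline, whereas the paper invokes it as Proposition~\ref{prop:2-cycle}.
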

\begin{proof}
By Proposition~\ref{prop:2-cycle}, for every back-edge $(v,u)\in E_A$, where $v \neq v^*_0$,
it holds that $d_B(v)=0$. 
Since $a(\emptyset) = 0$, by Definition~\ref{def:slowly-increasing-probabilities}, $b(v)\leq \delta$.
Therefore, 
\begin{align*}
a(u) &\leq b(N_A(u)) + \delta 
\leq  b(N_A^f(u)) + 2\delta.
\end{align*}
\end{proof}

Recall that for $i\in[n]$, $ L^{m}_{i} = \{(i,z) ~:~ z\in\{01,11,0\}\} $.
The following proposition states that
bounding the probabilities of vertices in a midway layer implies a bound on the corresponding layer.

\begin{proposition}\label{prop:slowly-increasing-midway-layers}
Let $\delta\in[0,1]$ and let $a:V\rightarrow[0,1]$ and $b:V\rightarrow[0,1]$ be 
a pair of functions which is $\delta$-slowly increasing for the 2-cycle game.
Let $i\in[n]$ such that $i+1\neq i^* \bmod n$.
Then,
$$ a(L_i) + b(L_i) \leq a(L^{m}_i) + b(L^{m}_i) + 3\delta .$$ 
\end{proposition}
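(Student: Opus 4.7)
The key observation is that $(i,0)$ belongs to both $L_i$ and $L_i^m$, so the contribution of $(i,0)$ cancels from both sides of the desired inequality. This reduces the statement to showing
\[ a((i,1)) + b((i,1)) \leq a((i,01)) + a((i,11)) + b((i,01)) + b((i,11)) + 3\delta. \]
So my plan is to handle the vertex $(i,1)$ by analyzing its incoming neighbors in $E_A$ and $E_B$ separately.

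First I would identify the incoming neighbors of $(i,1)$. In $E_A$, the only edges into vertices of the form $(\cdot,1)$ come from the type-4 rule $((i,z1),(i,1))$, which contributes $(i,01)$ and $(i,11)$, plus possibly a back-edge from $(i+1,0)$ (this occurs precisely when $x_i=x_{i+1}=1$, since the back-edge rule $((j,0),(j-1,x_{j-1}))$ with $x_j=1$ then lands at $(i,1)$). In $E_B$, edges into $(\cdot,1)$ come only from the type-3 rule $((i,z1),(i,1))$, so $N_B((i,1))=\{(i,01),(i,11)\}$.

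Next I would bound each of $a((i,1))$ and $b((i,1))$. Since $i+1\neq i^* \pmod n$, Proposition~\ref{prop:slowly-increasing-back-edges} applies to $u=(i,1)$ and gives
\[ a((i,1)) \leq b(N_A^f((i,1))) + 2\delta = b((i,01)) + b((i,11)) + 2\delta, \]
where $N_A^f((i,1))=\{(i,01),(i,11)\}$ by the previous step. For $b((i,1))$ I would just invoke condition~2 of Definition~\ref{def:slowly-increasing-probabilities} directly, yielding
\[ b((i,1)) \leq a(N_B((i,1))) + \delta = a((i,01)) + a((i,11)) + \delta. \]
Adding these two inequalities produces the reduced target with total slack $3\delta$, and combining with the cancellation in the first step closes the proof.

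There is no real obstacle; the only subtlety is making sure the back-edge case for $a((i,1))$ is handled correctly, which is exactly why the condition $i+1\neq i^* \pmod n$ is imposed (so that the absorbing back-edge proposition is applicable and we do not have to reckon with the 2-cycle edge contributing mass). The proof is a short bookkeeping argument once the neighbor sets are written down.
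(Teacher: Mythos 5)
Your proof is correct and matches the paper's argument exactly: the paper also reduces to bounding $a(i,1)+b(i,1)$ (the $(i,0)$ terms appearing on both sides), bounds $a(i,1)$ via Proposition~\ref{prop:slowly-increasing-back-edges} with slack $2\delta$, and bounds $b(i,1)$ directly from condition~2 of Definition~\ref{def:slowly-increasing-probabilities} with slack $\delta$. Your write-up is just a more explicit version of the same bookkeeping, including the correct identification of $N_A^f(i,1)$ and $N_B(i,1)$.
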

\begin{proof}
By Proposition~\ref{prop:slowly-increasing-back-edges},
\begin{align*}
a(i,1) &\leq b(i,01) + b(i,11) + 2\delta
\end{align*}
and by Definition~\ref{def:slowly-increasing-probabilities},
\begin{align*}
b(i,1) &\leq a(i,01) + a(i,11) + \delta .
\end{align*}
\end{proof}

Recall that an index $i\in[n]$ is disputed if $x_i > y_i$,
where $x,y$ are the strings from which the game was constructed,
otherwise $i$ is undisputed.
The game has exactly one disputed index $i^*$.


\begin{proposition}\label{prop:slowly-increasing-undisputed}
Let $\delta\in[0,1]$ and let $a:V\rightarrow[0,1]$ and $b:V\rightarrow[0,1]$ be 
a pair of functions which is $\delta$-slowly increasing for the 2-cycle game.
Let $x,y$ be the strings from which the game was constructed.
For every $i\in[n]$, if $i$ is undisputed and $y_i = 0$ then $b(i,1) \leq 3\delta$.
\end{proposition}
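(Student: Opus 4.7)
The plan is to trace incoming edges carefully. Since $(a,b)$ is $\delta$-slowly increasing, Definition~\ref{def:slowly-increasing-probabilities} gives
$$ b(i,1) \leq a(N_B(i,1)) + \delta. $$
So the task reduces to identifying $N_B(i,1)$ and then bounding the $a$-probabilities of those vertices using the hypotheses $y_i = 0$ and $x_i \leq y_i$ (which follows from $i$ being undisputed, forcing $x_i = 0$).

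First I would inspect the definition of $E_B$ to list incoming edges to $(i,1)$. The first two families of $E_B$ produce edges $((i-1,1),(i,\cdot))$ and $((i-1,0),(i,\cdot))$, but the second coordinate of their targets is either $0$ (if $y_i=0$) or $11$/$01$ (if $y_i=1$); in no case is it $1$. The fourth family contributes precisely $((i,01),(i,1))$ and $((i,11),(i,1))$. Hence $N_B(i,1) = \{(i,01),(i,11)\}$, giving
$$ b(i,1) \leq a(i,01) + a(i,11) + \delta. $$

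Next I would verify that $N_A(i,01) = N_A(i,11) = \emptyset$ under our hypotheses. Using $y_i = 0$ and $i$ undisputed, we have $x_i = 0$. Scanning the four families of $E_A$: back-edges target vertices of the form $(\cdot,0)$ or $(\cdot,1)$; the fourth family targets $(\cdot,1)$; the first family produces targets with second coordinate $0$ or $11$; and the second family produces targets with second coordinate $0$ or $01$. Edges into $(i,01)$ can only arise from the second family, which requires $x_i = 1$, and edges into $(i,11)$ can only arise from the first family, which also requires $x_i = 1$. Since $x_i = 0$, both in-neighborhoods are empty. Applying the slowly-increasing property once more yields $a(i,01) \leq b(\emptyset) + \delta = \delta$ and similarly $a(i,11) \leq \delta$.

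Combining the two bounds gives $b(i,1) \leq 3\delta$, as required. The main obstacle is purely bookkeeping: one must be careful to confirm that every family of edges in $E_A$ is correctly handled, in particular that the back-edges of $E_A$ never point into midway-layer vertices $(i,01)$ or $(i,11)$, and that the structural asymmetry between $E_A$ and $E_B$ around layer $i$ does not sneak in an unexpected edge when $y_i = 0$. No nontrivial estimation is needed beyond a single application of the slowly-increasing inequality at three vertices.
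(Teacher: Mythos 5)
Your proof is correct and follows the same route as the paper: apply the slowly-increasing inequality at $(i,1)$ to reduce to $a(i,01)+a(i,11)$, then observe that $x_i=0$ (forced by $i$ undisputed and $y_i=0$) makes both in-neighborhoods in $E_A$ empty, so each term is at most $\delta$. The only difference is that you spell out the edge-family bookkeeping that the paper leaves implicit.
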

\begin{proof}
Let $i\in[n]$ and assume that $i$ is undisputed and $y_i=0$.
There are exactly two edges in $G_B$ going into $(i,1)$, 
from the vertices $(i,01)$ and $(i,11)$.
Since $i$ is undisputed, $x_i=0$
and $ d_A(i,01)=d_A(i,11)=0 $.
By Definition~\ref{def:slowly-increasing-probabilities},
\begin{align*}
b(i,1) &\leq a(N_B(i,1)) + \delta \\
&\leq a(i,01) + a(i,11) + \delta \leq 3\delta .
\end{align*}
\end{proof}

\subsection{From Slowly Increasing Probabilities to The Pure Nash Equilibrium}\label{sec:proof-of-slowly-increasing}

In this section we prove Lemma~\ref{lem:slowly-increasing}.
Consider a 2-cycle $N\times N$ game, given by the utility functions $u_A,u_B$.
Let $a:V\rightarrow[0,1]$ and $b:V\rightarrow[0,1]$ be a pair of functions 
which is $\delta$-slowly increasing for the game, 
where $\delta \in\left[0, \frac{\max\{b(V),a(V)\}}{5N^2}\right)$.
By Claim~\ref{clm:pure-nash}, the pure Nash equilibrium of the game is $(u^*,v^*_0)$.

The deterministic communication protocol for finding $(u^*,v^*_0)$
is described in Algorithm~\ref{algo:slowly-increasing}.
Player $A$ gets $u_A$, $a$ and $\delta$ and player $B$ gets $u_B$, $b$ and $\delta$.
The communication complexity of this protocol is clearly at most $O(\log N)$.

\begin{algorithm}~
\begin{enumerate}

	\item \label{itm1:protocol-slowly-increasing}
		  Player $B$ checks if there exists $i\in[n]$ such that $y_i=0$ and $b(i,1) > 3\delta$.
		  If there is such an index $i$ he sends it to player $A$.
		  Then, player $A$ outputs $(i-1,x_{i-1})$ and player $B$ outputs $(i,0)$.
		  Otherwise, player $B$ sends a bit to indicate that there is no such index.
		  
	\item \label{itm2:protocol-slowly-increasing}
		  Player $A$ checks if $\delta < \nicefrac{a(V)}{5N^2}$. 
		  If it is, player $A$ finds $u=(i,z)\in V$ such that $a(u) > 5N\delta$,
		  where $i\in[n]$ and $z\in\{0,1,01,11\}$, 
		  and sends $i$ to player $B$.
	      Then, player $A$ outputs $u$ and player $B$ outputs $(i+1,0)$.
	      Otherwise, player $A$ sends a bit to indicate that $\delta \geq \nicefrac{a(V)}{5N^2}$.

	\item \label{itm3:protocol-slowly-increasing}
		  Player $B$ finds $v=(i,z)\in V$ such that $b(v) > 5N\delta$, 
		  where $i\in[n]$ and $z\in\{0,1,01,11\}$, 
		  and sends $i$ to player $A$. 
		  Then, player $A$ outputs $(i-1,x_{i-1})$ and player $B$ outputs $v$.
	      
\end{enumerate}
\caption{Finding $(u^*,v^*_0)$ given slowly increasing probabilities $(a,b)$} \label{algo:slowly-increasing}
\end{algorithm}

By Proposition~\ref{prop:slowly-increasing-undisputed}, 
if there exists $i\in[n]$ such that $y_i=0$ and $b(i,1) > 3\delta$, then $i$ has to be $i^*$.
In this case the players $A,B$ output $u^*$ and $v^*_0$ respectively.
Otherwise, $b(v^*_1) \leq 3\delta$.
In this case, the correctness of the protocol follows from Lemma~\ref{lem:slowly-increasing-concentration} below.

\begin{lemma}\label{lem:slowly-increasing-concentration}
Consider a 2-cycle $N\times N$ game, given by the utility functions $u_A,u_B$.
Let $a:V\rightarrow[0,1]$ and $b:V\rightarrow[0,1]$ be a pair of functions 
which is $\delta$-slowly increasing for the game, 
where $\delta \in\left[0, \frac{\max\{b(V),a(V)\}}{5N^2}\right)$.
Let $p = 5N\delta$.
Then, either $b(v^*_1) > 3\delta$
or the following two conditions hold:
\begin{enumerate}
\item $a$ is $p$-concentrated on $u^*$.
\item $b$ is $p$-concentrated on $v^*_0$.
\end{enumerate}
\end{lemma}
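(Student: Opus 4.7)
The plan is to assume $b(v^*_1) \leq 3\delta$ and then to propagate this smallness around the 2-cycle, showing that $a(v) \leq 5N\delta$ for every $v \neq u^*$ and $b(v) \leq 5N\delta$ for every $v \neq v^*_0$. Three structural facts will drive the argument. By Proposition~\ref{prop:2-cycle}, every back-edge source other than $v^*_0$ has $d_B = 0$, so by the slowly-increasing inequality its $b$-value is at most $\delta$. Since $i^*$ is disputed we have $y_{i^*}=0$, which forces $N_B(v^*_{01}) = N_B(v^*_{11}) = \emptyset$, so $b(v^*_{01}), b(v^*_{11}) \leq \delta$. Since $x_{i^*} = 1$, a direct inspection of $E_A$ shows $N_A(v^*_0) = \emptyset$, giving $a(v^*_0) \leq \delta$.

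First I would establish a base case at the disputed layer $L_{i^*}$. Combining the facts above, every element of $N_A(v^*_1) \subseteq \{v^*_{01}, v^*_{11}, (i^*+1,0)\}$ has $b$-value at most $\delta$, so the slowly-increasing inequality gives $a(v^*_1) \leq 4\delta$. Together with $a(v^*_0) \leq \delta$ and the hypothesis $b(v^*_1) \leq 3\delta$, this shows that the total $a$- and $b$-mass on $L_{i^*}$, excluding only the entry $b(v^*_0)$, is $O(\delta)$.

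Next I would run a layer-by-layer propagation for indices $i = i^*+1, i^*+2, \ldots, i^*-2 \pmod n$. For each such $i$, the set $N_A(L^m_i)$ lies inside $L_{i-1}$ together with at most one back-edge source of the form $(i+1, 0)$; since $i+1 \neq i^*$, this source has $b \leq \delta$. Since $E_B$ has no back-edges at all, $N_B(L^m_i) \subseteq L_{i-1}$. Summing the slowly-increasing inequalities over the vertices of $L^m_i$ and checking that each neighbor is counted only once gives $a(L^m_i) + b(L^m_i) \leq a(L_{i-1}) + b(L_{i-1}) + C_1\delta$, and Proposition~\ref{prop:slowly-increasing-midway-layers} converts this into
\[ a(L_i) + b(L_i) \leq a(L_{i-1}) + b(L_{i-1}) + C\delta \]
for an absolute constant $C$. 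Iterating around the cycle contributes at most $Cn\delta = O(N\delta)$ on top of the base case.

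Finally I would handle the excluded layer $L_{i^*-1}$ by a direct argument. Proposition~\ref{prop:slowly-increasing-midway-layers} is unavailable here because the back-edge $(v^*_0, u^*)$ pumps the uncontrolled mass $b(v^*_0)$ into $a(u^*)$; however, this back-edge points \emph{only} into $u^*$ itself, so every other vertex of $L_{i^*-1} \cup L^m_{i^*-1}$ still has all of its $N_A$- and $N_B$-sources inside layers that have already been bounded by $O(N\delta)$. Applying the slowly-increasing inequalities at these remaining vertices one by one yields $a(v) \leq 5N\delta$ for every $v \neq u^*$ and $b(v) \leq 5N\delta$ for every $v \neq v^*_0$. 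The main obstacle is the combinatorial bookkeeping at the midway layers and at $L_{i^*-1}$: carefully identifying the incoming forward and back-edges at each vertex, verifying no double counting, and tuning the constants so that the cumulative additive loss over the $O(N)$ propagation steps stays within the $5N\delta$ budget promised by the lemma.
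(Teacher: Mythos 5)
Your proposal is correct and follows essentially the same route as the paper's proof: the same base case at the disputed layer (using $d_A(v^*_0)=d_B(v^*_{01})=d_B(v^*_{11})=0$ and the back-edge proposition to get $a(v^*_1)\leq 4\delta$), the same layer-by-layer induction around the cycle accumulating $O(\delta)$ per layer via the midway-layer proposition, and the same separate treatment of layer $i^*-1$ (and, implicitly, of $(i^*,01),(i^*,11)$, whose $E_A$-sources lie in $L_{i^*-1}$). The constants you leave untracked do fit: the paper's per-layer increment is $15\delta$, and $15n+O(\delta)$ stays within the $5N\delta=20n\delta$ budget.
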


Note that if $\delta < \frac{a(V)}{5N^2}$ then $p<\frac{a(V)}{N}$ and $a(u^*) > \frac{a(V)}{N} > p$.
Otherwise, $\delta < \frac{b(V)}{5N^2}$ then $p<\frac{b(V)}{N}$ and $b(v^*_0) > \frac{b(V)}{N} > p$.
For the rest of this section we prove Lemma~\ref{lem:slowly-increasing-concentration}.
Assume that $b(v^*_1) \leq 3\delta$.
First, note that $d_A(v^*_0) = d_B(v^*_{01}) = d_B(v^*_{11}) = 0$
therefore
$$ a(v^*_0),b(v^*_{01}),b(v^*_{11}) \leq \delta .$$
and by Proposition~\ref{prop:slowly-increasing-back-edges},
\begin{align*}
a(v^*_1) 
&\leq b(N_A^f(v^*_1)) + 2\delta \\
&\leq b(v^*_{01}) + b(v^*_{11}) + 2\delta \leq 4\delta.
\end{align*}
Next, we prove that for every $1\leq j\leq n-2$,
\begin{equation}\label{eq:slowly-increasing-inductive-hypothesis}
a(L^{m}_{i^*+j}) + b(L^{m}_{i^*+j}) \leq  15j\delta.
\end{equation}
Recall that $ L^{m}_{i^*+j} = \{(i^*+j,z) ~:~ z\in\{01,11,0\}\} $.
By Proposition~\ref{prop:slowly-increasing-midway-layers},~\eqref{eq:slowly-increasing-inductive-hypothesis}
implies that $a(L_i) + b(L_i) \leq 15j\delta + 3\delta$ for every $1\leq j\leq n-2$.
Note that $15j\delta + 3\delta \leq 5N\delta$.
We prove \eqref{eq:slowly-increasing-inductive-hypothesis} by induction on $j$, from $j=1$ to $j=n-2$.

\paragraph{Layer $i^*+1$:}
First we bound $a(L^{m}_{i^*+1})$.
Note that in $E_A$ there is no edge from $v^*_0$ to $(i^*+1,0)$  or to $(i^*+1,01)$.
Moreover, by Proposition~\ref{prop:nonzero-out-degree}, 
every vertex $v\in V$ has exactly one out-going edge in each graph.
That is, in $E_A$, either there is an edge from $v^*_1$ to $(i^*+1,11)$ or 
there is an edge from $v^*_1$ to $(i^*+1,0)$, but not both.
Therefore, 
\begin{align*}
a(L^{m}_{i^*+1})
&\leq b(N_A(i^*+1,11)) + b(N_A^f(i^*+1,0)) + 4\delta \\
&\leq b(v^*_1) + 4\delta \leq 7\delta,
\end{align*}
where the first step follows from Definition~\ref{def:slowly-increasing-probabilities}, Proposition~\ref{prop:slowly-increasing-back-edges} and since $ d_A(i^*+1,01)=0 $.
\\
Next we bound $b(L^{m}_{i^*+1})$.
In $E_B$, either there are edges from $v^*_0$ and $v^*_1$ to $(i^*+1,01)$ and $(i^*+1,11)$ respectively,
or there are edges from $v^*_0$ and $v^*_1$ to $(i^*+1,0)$, but not both.
Therefore, by Definition~\ref{def:slowly-increasing-probabilities},
\begin{align*}
b(L^{m}_{i^*+1})
&\leq a(N_B(i^*+1,01)) + a(N_B(i^*+1,11)) + a(N_B(i^*+1,0)) + 3\delta \\
&\leq a(v^*_0) + a(v^*_1) + 3\delta \leq 8\delta.
\end{align*}
Put together we get that 
$ a(L^{m}_{i^*+1}) + b(L^{m}_{i^*+1}) \leq 15\delta $.

\paragraph{Layers $i^*+2, \dots, i^*+n-2$:}
Fix $i\in[n-2]$.
By Proposition~\ref{prop:nonzero-out-degree}, 
every vertex $v\in V$ has exactly one out-going edge in each graph.
That is, in each graph, 
either $(i^*+i+1,01)$ and $(i^*+i+1,11)$ have no incoming edges from layer $L_{i^*+i}$,
or $(i^*+i+1,0)$ has no incoming edges from layer $L_{i^*+i}$.
If $(i^*+i+1,01)$ and $(i^*+i+1,11)$ have no incoming edges from layer $L_{i^*+i}$ in $E_A$
then
\begin{align*}
a(L^{m}_{i^*+i+1})
&\leq a(i^*+i+1,0) + 2\delta \\
&\leq b(N_A^f(i^*+i+1,0)) + 4\delta \\
&\leq b(L_{i^*+i}) + 4\delta.
\end{align*}
where the first step holds since $d_A(i^*+i+1,01) = d_A(i^*+i+1,11) = 0$
and the second step follows from Proposition~\ref{prop:slowly-increasing-back-edges}.
Otherwise, $(i^*+i+1,0)$ has no incoming edges from layer $L_{i^*+i}$ in $E_A$
and then
\begin{align*}
a(L^{m}_{i^*+i+1})
&\leq a(i^*+i+1,01) + a(i^*+i+1,11) + 2\delta \\
&\leq b(L_{i^*+i}) + 4\delta,
\end{align*}
where the first step follows from Proposition~\ref{prop:slowly-increasing-back-edges} 
since $b(N_A^f(i^*+i+1,0)) = b(\emptyset) = 0$
and the second step follows from Definition~\ref{def:slowly-increasing-probabilities}.
\\
The same holds when replacing $a$ with $b$, and $N_A^f$ with $N_B$.
(the fact that there are no back-edges in $G_B$ could only decrease the bound).
That is,
\begin{align}
b(L^{m}_{i^*+i+1})
&\leq a(L_{i^*+i}) + 4\delta. \label{eq:slowly-increasing-step-b}
\end{align}
\\
Put together we get that
\begin{align*}
a(L^{m}_{i^*+i+1}) + b(L^{m}_{i^*+i+1})
&\leq a(L_{i^*+i}) + b(L_{i^*+i}) + 8\delta .
\end{align*}
Assume that $i < n-2$ and that~\eqref{eq:slowly-increasing-inductive-hypothesis} holds for every $1 \leq j \leq i$.
By Proposition~\ref{prop:slowly-increasing-midway-layers}, 
$a(L_{i^*+j}) + b(L_{i^*+j}) \leq 15j\delta + 3\delta$ for every $1 \leq j \leq i$.
Therefore,
\begin{align*}
a(L^{m}_{i^*+i+1}) + b(L^{m}_{i^*+i+1})
&\leq 15i\delta + 11\delta \leq 15(i+1)\delta.
\end{align*}

\paragraph{Bounding $b$ on the remaining vertices.}
It remains to bound $b$ on the vertices $(i^*-1,z)$, where $z\in\{0,1,01,11\}$.
It holds that
\begin{align*}
b(L^{m}_{i^*-1})
&\leq a(L_{i^*-2}) + 4\delta \\
&\leq a(L^m_{i^*-2}) + b(L^m_{i^*-2}) + 7\delta \\
&\leq 15(n-2)\delta + 7\delta \leq 5N\delta, 
\end{align*}
where the first step follows from~\eqref{eq:slowly-increasing-step-b},
the second step follows from Proposition~\ref{prop:slowly-increasing-midway-layers}
and the third step follows from~\eqref{eq:slowly-increasing-inductive-hypothesis}.
Finally, 
\begin{align*}
b(i^*-1,1)
&\leq a(i^*-1,01) + a(i^*-1,11) + \delta \\
&\leq b(L_{i^*-2}) + 3\delta \\
&\leq a(L^m_{i^*-2}) + b(L^m_{i^*-2}) + 6\delta \\
&\leq 15(n-2)\delta + 6\delta \leq 5N\delta, 
\end{align*}
where the first two steps follow from Definition~\ref{def:slowly-increasing-probabilities},
the third step follows from Proposition~\ref{prop:slowly-increasing-midway-layers}
and fourth step follows from~\eqref{eq:slowly-increasing-inductive-hypothesis}.

\paragraph{Bounding $a$ on the remaining vertices.}
We already bounded $a$ on the vertices $(i^*-1,01)$ and $(i^*-1,11)$.
It remains to bound $a$ on the vertices $(i^*-1,\bar{x}_{i^*-1}),(i^*,01)$ and $(i^*,11)$.
Denote $\bar{u}^* = (i^*-1,\bar{x}_{i*-1})$.
Since there is not back-edge into $\bar{u}^*$,
\begin{align*}
a(\bar{u}^*)
&\leq b(L_{i^*-2}) + b(i^*-1,01) + b(i^*-1,11) + \delta\\
&\leq b(L_{i^*-2}) + a(L_{i^*-2}) + 3\delta \\
&\leq a(L^m_{i^*-2}) + b(L^m_{i^*-2}) + 6\delta \\
&\leq 15(n-2)\delta + 6\delta \leq 5N\delta,
\end{align*}
where the first two steps follow from Definition~\ref{def:slowly-increasing-probabilities},
the third step follows from Proposition~\ref{prop:slowly-increasing-midway-layers}
and fourth step follows from~\eqref{eq:slowly-increasing-inductive-hypothesis}.
Finally,
\begin{align*}
a(i^*,01) + a(i^*,11)
&\leq b(L_{i^*-1}) + 2\delta \\
&\leq a(i^*-1,01) + a(i^*-1,11) + b(i^*-1,0) + 3\delta \\
&\leq b(L_{i^*-2}) + a(L_{i^*-2}) + 6\delta \\
&\leq a(L^m_{i^*-2}) + b(L^m_{i^*-2}) + 9\delta \\
&\leq 15(n-2)\delta + 9\delta \leq 5N\delta,
\end{align*}
where the first three steps follow from Definition~\ref{def:slowly-increasing-probabilities},
the fourth step follows from Proposition~\ref{prop:slowly-increasing-midway-layers}
and the fifth step follows from~\eqref{eq:slowly-increasing-inductive-hypothesis}.

\section{Approximate Nash Equilibrium of The 2-Cycle Game}\label{sec:ANE}

%
%
The following theorem states that 
given an approximate Nash equilibrium of the 2-cycle game, 
the players can recover the pure Nash equilibrium. 

\begin{theorem}\label{thm:ANEtoPure}
Consider a 2-cycle $N\times N$ game, given by the utility functions $u_A,u_B$.
Let $\eps\leq\frac{1}{16N^2}$ and 
let $(a^*,b^*)$ be an $\eps$-approximate Nash equilibrium of the game.
Then, there exists a deterministic communication protocol, 
that given $u_A$, $a^*$ to player $A$, and $u_B$, $b^*$ to player $B$,
uses $O(\log N)$ bits of communication, and at the end of the communication
player $A$ outputs an action $u\in V$ and player $B$ outputs an action $v\in V$, 
such that $(u,v)$ is the pure Nash equilibrium of the game.
\end{theorem}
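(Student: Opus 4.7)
The plan is to adapt the slowly-increasing probabilities framework used to prove Theorem~\ref{thm:ACEtoPure}. I will first derive, from the approximate Nash condition, a pair of ``swap inequalities,'' then convert these into a slowly-increasing pair of functions known to the respective players, and finally invoke Lemma~\ref{lem:slowly-increasing}.

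\textbf{Step 1 (swap inequalities).} For every $u, u' \in V$, the mixed strategy obtained from $a^*$ by reassigning all probability mass on $u$ to $u'$ is a valid deviation for player $A$. By Definition~\ref{def:ANE}, together with the fact (implied by Proposition~\ref{prop:nonzero-out-degree}) that $\Ex_{v \sim b^*}[u_A(u,v)] = b^*(N_A(u))$, this deviation yields
$$
a^*(u)\bigl[b^*(N_A(u')) - b^*(N_A(u))\bigr] \le \eps,
$$
and a symmetric inequality holds for player $B$. Choosing $u'$ to be a best response for $A$ to $b^*$ (so that $b^*(N_A(u')) \ge V_A := \Ex[u_A]$) gives the key inequalities
$$
V_A \cdot a^*(u) \le b^*(N_A(u)) + 2\eps \quad\text{and}\quad V_B \cdot b^*(v) \le a^*(N_B(v)) + 2\eps \qquad \forall\ u,v \in V.
$$

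\textbf{Step 2 (slowly-increasing pair).} I would take $a := a^*$ and $b := b^*$, both already known to the respective players at the outset. Rearranging the inequalities from Step 1 gives $a(u) - b(N_A(u)) \le (1-V_A)\,a^*(u) + 2\eps \le (1-V_A) + 2\eps$, and symmetrically for the $b$-side. Hence $(a,b)$ is $\delta$-slowly-increasing with $\delta = \max(1-V_A,\,1-V_B) + 2\eps$ and $a(V) = b(V) = 1$.

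\textbf{Step 3 (concentration of $V_A, V_B$).} For Lemma~\ref{lem:slowly-increasing} to apply with $\eps \le \frac{1}{16N^2}$, one needs $\delta < \max(a(V), b(V))/(5N^2) = 1/(5N^2)$, which reduces to $V_A, V_B \ge 1 - O(1/N^2)$. This is the main obstacle, and it is where the specific structure of the 2-cycle game enters. The central tool is the uniqueness of the 2-cycle (Proposition~\ref{prop:2-cycle}): for any $(u,v) \ne (u^*, v_0^*)$ one cannot simultaneously have $u_A(u,v) = u_B(u,v) = 1$, so
$$
V_A + V_B \le 1 + a^*(u^*) b^*(v_0^*).
$$
Combined with the pure-strategy Nash bounds $a^*(u^*) \le a^*(N_B(v_0^*)) \le V_B + \eps$ and $b^*(v_0^*) \le b^*(N_A(u^*)) \le V_A + \eps$, and a more refined propagation argument along the layered structure of the graphs $G_A, G_B$ (together with the basic properties of Section~\ref{sec:the-game-properties}), one should force $V_A$ and $V_B$ sufficiently close to $1$.

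\textbf{Step 4 (applying the lemma).} Once $(a^*, b^*)$ is $\delta$-slowly-increasing with $\delta < 1/(5N^2)$, invoking Lemma~\ref{lem:slowly-increasing} yields a deterministic $O(\log N)$-bit protocol at the end of which player $A$ outputs $u^*$ and player $B$ outputs $v_0^*$, i.e., the pure Nash equilibrium.

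\textbf{Main obstacle.} The novelty of the Nash case relative to the correlated case is Step 3: passing from the approximate Nash condition (which constrains only the marginals $a^*, b^*$ via best-response deviations) to a quantitative concentration of $a^*, b^*$ at $(u^*, v_0^*)$. The abstract argument via product-distribution-is-a-CE loses a factor of $N$ in the slowly-increasing parameter, so the proof must exploit the product structure together with the unique 2-cycle property of the game to close this gap.
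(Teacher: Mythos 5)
Your Step 1 is correct and is exactly the starting point of the paper's argument (Claim~\ref{clm:from-approximate-to-non-increasing}): shifting all of $a^*$'s mass from $u$ to $u'$ gives $a^*(u)\bigl[b^*(N_A(u'))-b^*(N_A(u))\bigr]\le\eps$. But from Step 2 onward there is a genuine gap. To feed $(a^*,b^*)$ into Lemma~\ref{lem:slowly-increasing} you need $\delta=(1-\min\{V_A,V_B\})+2\eps<\frac{1}{5N^2}$, i.e.\ $V_A,V_B\ge 1-O(1/N^2)$, and Step 3 does not establish this. The inequality $V_A+V_B\le 1+a^*(u^*)b^*(v^*_0)$ combined with $a^*(u^*)\le V_B+\eps$ and $b^*(v^*_0)\le V_A+\eps$ reduces to $(1-V_A)(1-V_B)\ge -O(\eps)$, which is vacuous. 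The only route I see to $V_A,V_B\ge 1-O(1/N^2)$ is to first prove that $a^*$ and $b^*$ are concentrated on $u^*$ and $v^*_0$ respectively and then combine concentration with the best-response condition --- but concentration is precisely what the slowly-increasing machinery is supposed to deliver, so the argument as sketched is circular. The deferred ``more refined propagation argument'' is the entire content of the theorem.

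The paper avoids this by \emph{not} using the additive slowly-increasing framework for the Nash case. Instead it defines a threshold condition ($p$-\emph{non-increasing}, Definition~\ref{def:non-increasing-probabilities}): if every in-neighbor of $u$ has $b^*$-mass at most $p=\frac{1}{4N}$, then $a^*(u)\le p$. This follows from your Step 1 inequality read multiplicatively rather than additively: some $v'$ has $b^*(v')\ge\frac 1N$, its unique $E_A$-out-neighbor $u''$ satisfies $b^*(N_A(u''))\ge\frac 1N$, while $b^*(N_A(u))\le 3p=\frac{3}{4N}$ by the max in-degree bound, so $a^*(u)\cdot\frac{1}{4N}\le\eps$ and hence $a^*(u)\le 4N\eps\le p$. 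Crucially this threshold does not degrade as it propagates through the $n$ layers of the graph (unlike the additive $\delta$, which accumulates a factor of $N$ and is why the correlated-equilibrium theorem needs $\eps\le\frac{1}{24N^3}$ while the Nash theorem only needs $\eps\le\frac{1}{16N^2}$). An induction over layers (Lemma~\ref{lem:non-increasing-concentration}) then shows $a^*$ and $b^*$ are $p$-concentrated on $u^*$ and $v^*_0$, and Algorithm~\ref{algo:non-increasing} recovers the 2-cycle with $O(\log N)$ communication. If you want to salvage your plan, replace Steps 2--3 with this threshold propagation; as written, the proposal does not go through.
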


Theorem~\ref{thm:ANE} follows from Theorem~\ref{thm:pure} and Theorem~\ref{thm:ANEtoPure}.
For the rest of this section we prove Theorem~\ref{thm:ANEtoPure}.
The proof uses the notion of non-increasing probabilities. 
For more details on non-increasing probabilities see Section~\ref{sec:on-non-increasing-probabilities}.

\begin{definition}[Non-increasing probabilities]\label{def:non-increasing-probabilities}
Let $p\in[0,1]$.
A pair of distributions $(a,b)$, each over the set of actions $V$,
is \emph{$p$-non-increasing} for the 2-cycle game
if for every $u\in V$ the following conditions hold:
\begin{enumerate}
\item If $\max_{v\in N_A(u)}b(v) \leq p$ then $a(u)\leq p$.
\item If $\max_{v\in N_B(u)}a(v) \leq p$ then $b(u)\leq p$.
\end{enumerate}
\end{definition}
\noindent In particular, if $d_A(u)=0$ then $\max_{v\in \emptyset}b(v)=0$ and therefore $a(u)\leq p$.
Similarly, if $d_B(u)=0$ then $b(u)\leq p$.

The next lemma states that given a pair of distributions 
which is $p$-non-increasing for the 2-cycle game,
for a small enough $p$, the players can recover the pure Nash equilibrium. 
The proof is in Section~\ref{sec:proof-of-non-increasing}

\begin{lemma}\label{lem:non-increasing}
Consider a 2-cycle $N\times N$ game, given by the utility functions $u_A,u_B$.
Let $(a,b)$ be a pair of distributions, each over the set of actions $V$,
which is $p$-non-increasing for the game, where $p\in\left[0,\frac{1}{N}\right)$.
Then, there exists a deterministic communication protocol, 
that given $u_A$, $a$ and $p$ to player $A$, and $u_B$, $b$ and $p$ to player $B$,
uses at most $O(\log N)$ bits of communication, and at the end of the communication
player $A$ outputs an action $u\in V$ and player $B$ outputs an action $v\in V$, 
such that $(u,v)$ is the pure Nash equilibrium of the game.
\end{lemma}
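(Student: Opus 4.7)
\medskip

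The plan is to mimic the structure of the proof of Lemma~\ref{lem:slowly-increasing}, but replace the additive accumulation of the error $\delta$ by the threshold behaviour of the non-increasing property. I would first establish an analogue of Proposition~\ref{prop:slowly-increasing-undisputed}: if $i$ is undisputed and $y_i=0$, then $b(i,1)\leq p$. Indeed, undisputed with $y_i=0$ forces $x_i=0$, so $d_A(i,01)=d_A(i,11)=0$ and the non-increasing property gives $a(i,01),a(i,11)\leq p$; since $N_B(i,1)\subseteq\{(i,01),(i,11)\}$, applying the non-increasing property once more yields $b(i,1)\leq p$. Contrapositively, if player $B$ spots any $i$ with $y_i=0$ and $b(i,1)>p$, then necessarily $i=i^*$.

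Next I would prove the concentration analogue of Lemma~\ref{lem:slowly-increasing-concentration}: if $b(v^*_1)\leq p$, then $a(u)\leq p$ for every $u\neq u^*$ and $b(v)\leq p$ for every $v\neq v^*_0$. The base case is immediate: $a(v^*_0)\leq p$ and $b(v^*_{01}),b(v^*_{11})\leq p$ because those vertices have no $G_A$- or $G_B$-predecessors, and then $a(v^*_1)\leq p$ since $N_A(v^*_1)\subseteq\{v^*_{01},v^*_{11}\}\cup\{(i^*+1,0)\}$, where the back-edge source (if present) has $d_B=0$ by Proposition~\ref{prop:2-cycle} and hence $b\leq p$. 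Propagating forward, I would show inductively for $j=1,\dots,n-1$ that every vertex in $L^m_{i^*+j}$ has $a$- and $b$-values at most $p$, and therefore so does every vertex in $L_{i^*+j}$: vertices in $L^m_{i^*+j}$ receive no back-edges, so their $G_A$- and $G_B$-predecessors all lie in the already-bounded layer $L_{i^*+j-1}$, while the only back-edge into $L_{i^*+j}$ originates at $(i^*+j+1,0)$ which, not being $v^*_0$, satisfies $d_B=0$ and thus $b\leq p$. Finishing the wrap-around to layer $i^*-1$ bounds everything except $u^*$ under $a$ and $v^*_0$ under $b$.

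Once concentration is established, the protocol is short. Player $B$ first scans the $n$ indices with $y_i=0$ and checks whether any has $b(i,1)>p$; if yes, by the analogue of Proposition~\ref{prop:slowly-increasing-undisputed} that index is $i^*$, and he sends it in $O(\log N)$ bits so that both players can output $(u^*,v^*_0)=((i-1,x_{i-1}),(i,0))$. Otherwise, since $y_{i^*}=0$, we have $b(v^*_1)\leq p$ and the concentration lemma applies. Because $a$ is a distribution over $|V|=N$ actions and $p<\tfrac{1}{N}$, there is at least one $u$ with $a(u)>p$, and by concentration the only such vertex is $u^*$. Player $A$ locates $u^*$ locally, sends it in $O(\log N)$ bits, and both players output $(u^*,v^*_0)$ (player $B$ recovers $v^*_0=(i^*,0)$ from the first coordinate of $u^*$).

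The main obstacle is the concentration step: one has to verify that the inductive argument of Lemma~\ref{lem:slowly-increasing-concentration} carries over when the additive $15j\delta$ accumulation is replaced by the threshold statement ``all predecessors $\leq p$, hence the vertex is $\leq p$.'' The delicate point is keeping track of back-edges in $G_A$, since they are the only edges that could carry mass from a not-yet-bounded vertex into an already-processed one; the rescue is precisely Proposition~\ref{prop:2-cycle}, which guarantees that any back-edge source other than $v^*_0$ has $d_B=0$ and therefore $b\leq p$ automatically. Once this is handled the rest of the argument is essentially a transcription of the slowly-increasing case with all ``$+\delta$'' replaced by ``$\leq p$''.
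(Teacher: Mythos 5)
Your proposal is correct and follows essentially the same route as the paper: the same two-branch protocol (player $B$ first looks for an index $i$ with $y_i=0$ and $b(i,1)>p$, which must be $i^*$; otherwise concentration applies), the same concentration lemma proved by a layer-by-layer induction starting at $i^*+1$, and the same use of Proposition~\ref{prop:2-cycle} to neutralize back-edge sources other than $v^*_0$. The only cosmetic differences are that the paper runs the induction only up to $j=n-2$ and treats layer $i^*-1$ (where $u^*$ lives and must be exempted) as a separate wrap-up step --- which you also acknowledge --- and that in the paper's protocol each player simply outputs his own vertex of mass exceeding $p$ without the extra $O(\log N)$-bit message.
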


We prove that an approximate Nash equilibrium for the 2-cycle game
is a non-increasing pair of functions.
Theorem~\ref{thm:ANEtoPure} follows from Lemma~\ref{lem:non-increasing} 
and Claim~\ref{clm:from-approximate-to-non-increasing}.

\begin{claim}\label{clm:from-approximate-to-non-increasing}
Let $(a^*,b^*)$ be an $\eps$-approximate Nash equilibrium of the 2-cycle $N\times N$ game, 
where $\eps\leq\frac{1}{16N^2}$.
Then, the pair $(a^*, b^*)$ is $\frac{1}{4N}$-non-increasing for the game.
\end{claim}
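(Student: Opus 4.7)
The plan is to verify each of the two non-increasing conditions by a single-swap deviation argument, exploiting the structural properties of the 2-cycle game from Section~\ref{sec:the-game-properties}. The key setup: in the 2-cycle game we have $\Ex_{v\sim b^*}[u_A(u,v)] = b^*(N_A(u))$ and $\Ex_{u\sim a^*}[u_B(u,v)] = a^*(N_B(v))$. By Proposition~\ref{prop:max-in-degree} we have $|N_A(u)|\leq 3$ and $|N_B(v)|\leq 2$, so a pointwise upper bound of $p$ on $b^*$ restricted to $N_A(u)$ yields $b^*(N_A(u))\leq 3p$, and analogously $a^*(N_B(v))\leq 2p$. Proposition~\ref{prop:nonzero-out-degree} gives the averaging identity $\sum_{u\in V} b^*(N_A(u)) = \sum_{v\in V} b^*(v) = 1$, so some vertex $u^\dagger$ attains $b^*(N_A(u^\dagger))\geq \frac{1}{N}$, and symmetrically some $v^\dagger$ attains $a^*(N_B(v^\dagger))\geq \frac{1}{N}$.

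For the first condition, I would fix any $u$ with $\max_{v\in N_A(u)} b^*(v)\leq \frac{1}{4N}$, giving $b^*(N_A(u))\leq \frac{3}{4N}$, and consider the deviation $a'$ obtained from $a^*$ by shifting all mass at $u$ to $u^\dagger$ (taking $a' = a^*$ if $u=u^\dagger$). A direct expansion yields
$$\Ex_{u'\sim a',\,v\sim b^*}[u_A(u',v)] - \Ex_{u'\sim a^*,\,v\sim b^*}[u_A(u',v)] \;=\; a^*(u)\cdot\bigl(b^*(N_A(u^\dagger)) - b^*(N_A(u))\bigr) \;\geq\; \frac{a^*(u)}{4N}.$$
Definition~\ref{def:ANE} bounds the left-hand side by $\eps\leq \frac{1}{16N^2}$, so $a^*(u)\leq 4N\eps \leq \frac{1}{4N}$, which is exactly condition 1. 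The second condition follows from the symmetric swap on player $B$'s side; the tighter in-degree bound $|N_B(v)|\leq 2$ actually gives the larger bracket $\frac{1}{N}-\frac{2}{4N} = \frac{1}{2N}$, and hence the even stronger conclusion $b^*(u)\leq 2N\eps \leq \frac{1}{8N}\leq \frac{1}{4N}$.

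I do not anticipate a genuine obstacle: the entire proof is one exchange computation per player. The only point requiring care is the arithmetic calibration — checking that the guaranteed best-response utility $\frac{1}{N}$ strictly exceeds $3p = \frac{3}{4N}$ by a margin ($\frac{1}{4N}$) large enough that, multiplied by $a^*(u) > \frac{1}{4N}$, it still beats $\eps = \frac{1}{16N^2}$; this is precisely how the constants in the hypothesis have been chosen.
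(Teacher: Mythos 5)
Your proof is correct and follows essentially the same single-swap deviation argument as the paper: the paper locates the deviation target by picking $v'$ with $b^*(v')\geq\frac{1}{N}$ and taking its unique best response $u''$ (via Proposition~\ref{prop:nonzero-out-degree}), whereas you average $b^*(N_A(\cdot))$ over $V$, but these are equivalent and yield the same arithmetic $\eps\geq a^*(u)\cdot\left(\frac{1}{N}-\frac{3}{4N}\right)$. Note only that the case $u=u^\dagger$ you set aside is in fact vacuous, since $b^*(N_A(u))\leq\frac{3}{4N}<\frac{1}{N}\leq b^*(N_A(u^\dagger))$.
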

\begin{proof}
Let $u'\in V$ and assume that $b^*(v)\leq\frac{1}{4N}$ for every $v\in N_A(u')$.
Let $v'\in V$ be a vertex such that $b^*(v')\geq\frac{1}{N}$ 
(there must exist such a vertex since $b^*$ is a distribution).
By Proposition~\ref{prop:nonzero-out-degree}, there exists a vertex $u''\in V$ such that $u_A(u'',v')=1$.
Note that by our assumption, $u'\neq u''$ and $u_A(u',v')=0$.
Define a distribution $a'$ over $V$ as follows: 
\begin{align*}
& a'(u'') = a^*(u'') + a^*(u') \\
& a'(u')=0 \\
& a'(u)=a^*(u) ~~~~ \forall~ u\in V\setminus\{u'',u'\}.
\end{align*}
By Definition~\ref{def:ANE},
\begin{align*}
\eps &\geq \Ex_{u\sim a', v\sim b^*}[u_A(u,v)] - \Ex_{u\sim a^*, v\sim b^*}[u_A(u,v)] \\
&= a'(u'')\cdot b^*(N_A(u'')) 
	- a^*(u'')\cdot b^*(N_A(u''))
	- a^*(u')\cdot b^*(N_A(u')) \\
&= a^*(u')\cdot b^*(N_A(u''))
	- a^*(u')\cdot b^*(N_A(u')) \\
&\geq a^*(u')\cdot\frac{1}{N} - a^*(u')\cdot \frac{3}{4N} ,
\end{align*}
where the last step follows from Proposition~\ref{prop:max-in-degree}.
Since $\eps\leq\frac{1}{16N^2}$ we conclude that $a^*(u')\leq\frac{1}{4N}$. 
\\
The same holds when replacing $a^*$ with $b^*$, $N_A$ with $N_B$, and $u_A$ with $u_B$. 
That is, for every $u'\in V$, if $a^*(v)\leq\frac{1}{4N}$ for every $v\in N_B(u')$
then $b^*(u')\leq\frac{1}{4N}$.
\end{proof}

%
%
%

\subsection{On Non-Increasing Probabilities}\label{sec:on-non-increasing-probabilities}

In this section we describe some useful, basic properties of non-increasing probabilities for the 2-cycle game.

Recall that for a vertex $u\in V$, $N_A^f(u)$ is the set of vertices $v$ 
such that $(v,u)\in E_A$ but $(v,u)$ is not a back-edge.
The following proposition states that 
back-edges can be ignored when bounding the probabilities of out-neighbors.

\begin{proposition}\label{prop:non-increasing-back-edges}
Let $p\in[0,1]$ and let $(a,b)$ be a pair of distributions, each over the set of actions $V$,
which is $p$-non-increasing for the 2-cycle game.
Let $(v,u)\in E_A$ be a back-edge, where $v\neq v^*_0$.
Assume that $\max_{v\in N_A^f(u)}b(v) \leq p$, then $a(u)\leq p$.
\end{proposition}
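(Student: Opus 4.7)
The plan is to reduce this to the defining condition of $p$-non-increasing by showing that the single back-edge predecessor $v$ also has $b(v)\leq p$, so that the hypothesis $\max_{v'\in N_A^f(u)}b(v')\leq p$ can be upgraded to $\max_{v'\in N_A(u)}b(v')\leq p$, at which point the first clause of Definition~\ref{def:non-increasing-probabilities} finishes the job.

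Concretely, I would first invoke Proposition~\ref{prop:2-cycle}: since $(v,u)\in E_A$ is a back-edge with $v\neq v^*_0$, this proposition yields $d_B(v)=0$, i.e.\ $N_B(v)=\emptyset$. By the convention $\max_{v'\in\emptyset}a(v')=0\leq p$, the second condition of $p$-non-increasing applied to $v$ gives $b(v)\leq p$. Next, recalling that $N_A(u)=N_A^f(u)\cup\{v\}$ (the back-edge contributes exactly the single predecessor $v$), the assumption $\max_{v'\in N_A^f(u)}b(v')\leq p$ together with $b(v)\leq p$ yields
\[
\max_{v'\in N_A(u)} b(v') \leq p.
\]
Finally, applying the first condition of Definition~\ref{def:non-increasing-probabilities} at the vertex $u$ gives $a(u)\leq p$, as desired.

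There is really no obstacle here; the proposition is a clean bookkeeping lemma whose entire content is that back-edges (other than the one into $v^*_0$) originate at vertices which are automatically bounded by the non-increasing property, so they can be safely dropped from the ``bounding set'' of incoming neighbors. The only subtlety to double-check is that the back-edge structure is as claimed, namely that $N_A(u)\setminus N_A^f(u)$ consists of the unique tail $v$; this follows directly from the definition of $N_A^f$ together with the fact that a single edge $(v,u)$ is being considered, so this reduces to unwinding notation.
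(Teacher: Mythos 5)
Your proof is correct and follows the same route as the paper's: invoke Proposition~\ref{prop:2-cycle} to get $d_B(v)=0$ for the back-edge source $v\neq v^*_0$, deduce $b(v)\leq p$ from the second non-increasing condition with an empty neighbor set, upgrade the bound from $N_A^f(u)$ to $N_A(u)$, and apply the first condition at $u$. The only difference is that you spell out the intermediate steps (the empty-set convention and the fact that $u$ has a unique incoming back-edge) that the paper leaves implicit.
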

\begin{proof}
By Proposition~\ref{prop:2-cycle}, for every back-edge $(v,u)\in E_A$, where $v \neq v^*_0$,
it holds that $d_B(v)=0$ and therefore $b(v)\leq p$.
We get that
$$ \max_{v\in N_A^f(u)}b(v) \leq p ~~\Rightarrow~ \max_{v\in N_A(u)}b(v) \leq p .$$
\end{proof}

Recall that for $i\in[n]$, $ L^{m}_{i} = \{(i,z) ~:~ z\in\{01,11,0\}\} $.
The following proposition states that 
bounding the probabilities of vertices in a midway layer implies a bound on the corresponding layer.

\begin{proposition}\label{prop:non-increasing-midway-layers}
Let $p\in[0,1]$ and let $(a,b)$ be a pair of distributions, each over the set of actions $V$,
which is $p$-non-increasing for the 2-cycle game.
Let $i\in[n]$ such that $i+1\neq i^* \bmod n$. 
If $a(v),b(v) \leq p$ for every $v\in L^m_{i}$ then $a(i,1),b(i,1) \leq p$.
\end{proposition}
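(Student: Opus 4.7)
The plan is to simply identify the incoming neighborhoods of the vertex $(i,1)$ in both graphs and observe that, modulo back-edges, they sit inside $L^m_i$; then apply the definition of $p$-non-increasing together with Proposition~\ref{prop:non-increasing-back-edges} to transfer the bound from the midway layer to $(i,1)$.

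First I would inspect $E_B$. Scanning the four families of edges defining $E_B$, the only edges whose head is $(i,1)$ are those from the family $((i,z1),(i,1))$ for $z\in\{0,1\}$; the other families have heads in $\{(j,0),(j,01),(j,11)\}$. Hence $N_B(i,1)=\{(i,01),(i,11)\}\subseteq L^m_i$. Since $a(v)\leq p$ for every $v\in L^m_i$, we have $\max_{v\in N_B(i,1)}a(v)\leq p$, and Definition~\ref{def:non-increasing-probabilities} immediately gives $b(i,1)\leq p$.

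Next I would inspect $E_A$. The family $((i,z1),(i,1))$ contributes $(i,01),(i,11)$ to $N_A(i,1)$, and these are not back-edges, so they lie in $N_A^f(i,1)$. The only other possible incoming edge into $(i,1)$ is a back-edge of the form $((i+1,0),(i,1))$, which is present exactly when $x_i=1$ and $x_{i+1}=1$. This is precisely where the hypothesis $i+1\neq i^*\bmod n$ becomes essential: it guarantees that the source of any such back-edge is $(i+1,0)\neq v^*_0$, so the back-edge qualifies for the hypothesis of Proposition~\ref{prop:non-increasing-back-edges}. Therefore $N_A^f(i,1)\subseteq\{(i,01),(i,11)\}\subseteq L^m_i$, and the assumption $b(v)\leq p$ on $L^m_i$ yields $\max_{v\in N_A^f(i,1)}b(v)\leq p$. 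Applying Proposition~\ref{prop:non-increasing-back-edges} gives $a(i,1)\leq p$.

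The main subtlety is the treatment of the potential back-edge into $(i,1)$; without excluding $i+1\equiv i^*\pmod n$ we could not invoke Proposition~\ref{prop:non-increasing-back-edges}, since the 2-cycle edge out of $v^*_0$ has $d_B(v^*_0)>0$, so $b(v^*_0)\leq p$ would not be automatic. Given the hypothesis, however, the proof is essentially a routine edge-enumeration and should be only a few lines.
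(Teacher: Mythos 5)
Your proof is correct and follows essentially the same route as the paper's: both identify $N_B(i,1)=\{(i,01),(i,11)\}\subseteq L^m_i$ and $N_A^f(i,1)\subseteq\{(i,01),(i,11)\}\subseteq L^m_i$, then apply Definition~\ref{def:non-increasing-probabilities} for the $b$-bound and Proposition~\ref{prop:non-increasing-back-edges} for the $a$-bound. Your explicit enumeration of when the back-edge into $(i,1)$ exists, and why $i+1\neq i^*\bmod n$ is what licenses the use of Proposition~\ref{prop:non-increasing-back-edges}, is a slightly more detailed rendering of the same argument.
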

\begin{proof}
Assume that $a(v),b(v) \leq p$ for every $v\in L^m_{i}$. Then,
\begin{align*}
\max_{v\in N_B(i,1)} \{a(v)\}
&= \max\{ a(i,01),a(i,11) \} \leq p.
\end{align*}
By Definition~\ref{def:non-increasing-probabilities}, $b(i,1) \leq p$ 
and similarly,
\begin{align*}
\max_{v\in N_A^f(i,1)} \{b(v)\}
&= \max\{ b(i,01),b(i,11) \} \leq p.
\end{align*}
By Proposition~\ref{prop:non-increasing-back-edges}, $a(i,1) \leq p$.
\end{proof}

Recall that an index $i\in[n]$ is disputed if $x_i > y_i$,
where $x,y$ are the strings from which the game was constructed,
otherwise $i$ is undisputed.
The game has exactly one disputed index $i^*$.


\begin{proposition}\label{prop:non-increasing-undisputed}
Let $p\in[0,1]$ and let $(a,b)$ be a pair of distributions, each over the set of actions $V$,
which is $p$-non-increasing for the 2-cycle game.
Let $x,y$ be the strings from which the game was constructed.
For every $i\in[n]$, if $i$ is undisputed and $y_i = 0$ then $b(i,1) \leq p$.
\end{proposition}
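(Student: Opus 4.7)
The plan is to mimic the argument of Proposition~\ref{prop:slowly-increasing-undisputed} verbatim, but using the non-increasing property in place of the slowly-increasing one. The key observation is that because $i$ is undisputed and $y_i = 0$, we must also have $x_i = 0$, which makes the two midway-layer vertices $(i,01)$ and $(i,11)$ sources in $G_A$. From there, two applications of Definition~\ref{def:non-increasing-probabilities} give the bound.

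Concretely, I would first unpack the assumption: $i$ undisputed means $x_i \leq y_i$, and combined with $y_i = 0$ this forces $x_i = 0$. Next, I would use the definition of $E_A$ to observe that edges into midway-layer vertices of the form $(i,01)$ or $(i,11)$ can only come from $(i-1,0)$ via the edge $((i-1,0),(i,01))$ or from $(i-1,1)$ via $((i-1,1),(i,11))$, each of which requires $x_i = 1$. Since $x_i = 0$, we conclude $d_A(i,01) = d_A(i,11) = 0$. Applying Definition~\ref{def:non-increasing-probabilities} to $u = (i,01)$ and $u = (i,11)$ (where the $\max$ over an empty neighborhood is $0 \leq p$) yields $a(i,01) \leq p$ and $a(i,11) \leq p$.

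Finally, I would bound $b(i,1)$ by noting that $N_B(i,1) = \{(i,01),(i,11)\}$ — these are exactly the two incoming edges in $E_B$ of the form $((i,z1),(i,1))$, and there are no other incoming edges in $E_B$ going into $(i,1)$. Therefore $\max_{v \in N_B(i,1)} a(v) \leq p$, and applying the second clause of Definition~\ref{def:non-increasing-probabilities} to $u=(i,1)$ gives $b(i,1) \leq p$, as desired.

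I do not expect a real obstacle here — the argument is a direct translation of the slowly-increasing version (Proposition~\ref{prop:slowly-increasing-undisputed}) to the non-increasing setting, and the only thing to verify carefully is that $x_i = 0$ really does kill all of $E_A$-incidence into $(i,01)$ and $(i,11)$. The worst that can happen is a small amount of case-checking against the four families of edges in $E_A$ to confirm no other family contributes an edge into these vertices.
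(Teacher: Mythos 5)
Your proposal is correct and follows exactly the paper's own argument: $x_i=0$ forces $d_A(i,01)=d_A(i,11)=0$, hence $a(i,01),a(i,11)\leq p$, and since $N_B(i,1)=\{(i,01),(i,11)\}$ the second clause of Definition~\ref{def:non-increasing-probabilities} gives $b(i,1)\leq p$. No gap.
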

\begin{proof}
Let $i\in[n]$ and assume that $i$ is undisputed and $y_i=0$.
There are exactly two edges in $G_B$ going into $(i,1)$, 
from the vertices $(i,01)$ and $(i,11)$.
Since $i$ is undisputed, $x_i=0$
and $ d_A(i,01)=d_A(i,11)=0 $.
Therefore, $b(i,1)\leq p$.
\end{proof}

\subsection{From Non-Increasing Probabilities to The Pure Nash Equilibrium}\label{sec:proof-of-non-increasing}

In this section we prove Lemma~\ref{lem:non-increasing}.
Consider a 2-cycle $N\times N$ game, given by the utility functions $u_A,u_B$.
Let $(a,b)$ be a pair of distributions, each over the set of actions $V$, 
which is $p$-non-increasing for the game, where $p\in\left[0,\frac{1}{N}\right)$.
By Claim~\ref{clm:pure-nash}, the pure Nash equilibrium of the game is $(u^*,v^*_0)$.

The deterministic communication protocol for finding $(u^*,v^*_0)$
is described in Algorithm~\ref{algo:non-increasing}.
Player $A$ gets $u_A$, $a$ and $p$ and player $B$ gets $u_B$, $b$ and $p$.
The communication complexity of this protocol is clearly at most $O(\log N)$.

\begin{algorithm}~

	Player $B$ checks if there exists $i\in[n]$ such that $y_i=0$ and $b(i,1) > p$.
	If there is such an index $i$, player $B$ sends $i$ to player $A$.
	Then, player $A$ outputs $(i-1,x_{i-1})$ and player $B$ outputs $(i,0)$.
	Otherwise, player $B$ sends a bit to indicate that there is no such index.
	Then, player $A$ outputs $u\in V$ such that $a(u)>p$
	and player $B$ outputs $v\in V$ such that $b(v)>p$.

\caption{Finding $(u^*,v^*_0)$ given non-increasing probabilities $(a,b)$} \label{algo:non-increasing}
\end{algorithm}

By Proposition~\ref{prop:non-increasing-undisputed}, 
if there exists $i\in[n]$ such that $y_i=0$ and $b(i,1) > p$, then $i$ has to be $i^*$.
In this case the players $A,B$ output $u^*$ and $v^*_0$ respectively.
Otherwise, $b(v^*_1) \leq p$.
In this case, the correctness of the protocol follows from Lemma~\ref{lem:non-increasing-concentration} below.
Note that since $p < \frac{1}{N}$, we have that 
$a(u^*) > \frac{1}{N} > p$ and $b(v^*_0) > \frac{1}{N} > p$.

\begin{lemma}\label{lem:non-increasing-concentration}
Consider a 2-cycle $N\times N$ game, given by the utility functions $u_A,u_B$.
Let $(a,b)$ be a pair of distributions, each over the set of actions $V$, 
which is $p$-non-increasing for the game, where $p\in\left[0,\frac{1}{N}\right)$.
Then, either $b(v^*_1) > p$
or the following two conditions hold:
\begin{enumerate}
\item $a$ is $p$-concentrated on $u^*$.
\item $b$ is $p$-concentrated on $v^*_0$.
\end{enumerate}
\end{lemma}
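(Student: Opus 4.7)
The plan is to mirror the proof of Lemma~\ref{lem:slowly-increasing-concentration}, with the maximum-based $p$-non-increasing property (Definition~\ref{def:non-increasing-probabilities}) replacing the additive slowly-increasing one; because no slack accumulates across layers, the argument is notably simpler. Assume $b(v^*_1) \leq p$; the goal is to show $a(v) \leq p$ for every $v \neq u^*$ and $b(v) \leq p$ for every $v \neq v^*_0$. As a base case, $v^*_0$ has $d_A = 0$ and $v^*_{01}, v^*_{11}$ have $d_B = 0$, so Definition~\ref{def:non-increasing-probabilities} yields $a(v^*_0), b(v^*_{01}), b(v^*_{11}) \leq p$; combining these with the hypothesis $b(v^*_1) \leq p$ and Proposition~\ref{prop:non-increasing-back-edges} (the only non-back-edge $E_A$-neighbors of $v^*_1$ lie in $\{v^*_{01}, v^*_{11}\}$) then yields $a(v^*_1) \leq p$.

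The main step is to propagate these bounds around the cycle by induction on $j = 1, \ldots, n-1$, showing $a(v), b(v) \leq p$ for every $v \in L^m_{i^*+j}$, and whenever $j \leq n-2$ also for every $v \in L_{i^*+j}$ via Proposition~\ref{prop:non-increasing-midway-layers}. The inductive step relies on the structural fact that every $u \in L^m_{i^*+j+1}$ has $N_A^f(u), N_B(u) \subseteq L_{i^*+j}$, which follows directly from the definitions of $E_A$ and $E_B$. A small caveat arises at the very first step $j = 0 \to 1$, since $b(v^*_0)$ is not known to be $\leq p$; however, $x_{i^*} = 1$ forces $v^*_0$'s only outgoing $E_A$-edge to be its back-edge to $u^*$, so $v^*_0 \notin N_A^f(u)$ for every $u \in L^m_{i^*+1}$ and the induction still goes through. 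For $j \geq 1$, the vertex $v^*_0$ is not in $L_{i^*+j}$ at all, and the inductive step reduces to a straightforward combination of Definition~\ref{def:non-increasing-probabilities} (for $b(u)$) with Proposition~\ref{prop:non-increasing-back-edges} (for $a(u)$).

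It remains to bound $L_{i^*-1}$ and the vertices $(i^*, 01), (i^*, 11)$. Bounds on $b((i^*-1, 1))$, $a((i^*, 01))$ and $a((i^*, 11))$ each follow from a single application of Definition~\ref{def:non-increasing-probabilities}, since the relevant $N_A$ or $N_B$ neighbor sets lie entirely within $L^m_{i^*-1}$ or $L_{i^*-1}$ and have already been handled. The main—and really only—delicate point, which I expect to be the principal obstacle to write cleanly, is the asymmetry inside $L_{i^*-1} = \{(i^*-1, 0), (i^*-1, 1)\}$ for the function $a$: exactly one of these two vertices, namely $u^* = (i^*-1, x_{i^*-1})$, receives the back-edge from $v^*_0$ in $E_A$, and Proposition~\ref{prop:non-increasing-back-edges} explicitly excludes this source, so no bound on $a(u^*)$ can be derived; for the other vertex $\bar{u}^* = (i^*-1, \bar{x}_{i^*-1})$ no such back-edge exists, and $a(\bar{u}^*) \leq p$ follows from the already-established bounds. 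This mirrors the symmetric non-bound on $b(v^*_0)$, whose $E_B$-neighbors include $u^*$, and reproduces exactly the two exceptions in the lemma statement.
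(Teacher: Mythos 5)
Your proposal follows the paper's proof essentially verbatim: the same base case at layer $i^*$, the same forward induction through the midway layers using Propositions~\ref{prop:non-increasing-back-edges} and~\ref{prop:non-increasing-midway-layers}, the same observation that $v^*_0$ contributes no forward $E_A$-edge at the first inductive step, and the same separate treatment of layer $i^*-1$ with $u^*$ and $v^*_0$ as the two exceptions. The only blemish is that your stated induction range $j=1,\ldots,n-1$ overshoots: when $x_{i^*-1}=0$ the vertex $u^*$ lies in $L^{m}_{i^*-1}$ and the claimed bound on $a$ there is false, but your final paragraph correctly identifies and excludes exactly this case, so the argument is sound (the paper simply stops the induction at $j=n-2$ and handles layer $i^*-1$ entirely in the closing step).
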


\begin{remark}
The proof of Lemma~\ref{lem:slowly-increasing-concentration} in Section~\ref{sec:proof-of-slowly-increasing}
is slightly more delicate than the proof of 
Lemma~\ref{lem:non-increasing-concentration}.
Unlike the analysis of the slowly-increasing probabilities, here we do not use the fact that  
every vertex has exactly one out-going edge in each graph (see Proposition~\ref{prop:nonzero-out-degree}),
as this would not improve the parameters of Theorem~\ref{thm:ANE}.
\end{remark}

For the rest of this section we prove Lemma~\ref{lem:non-increasing-concentration}.
Assume that $b(v^*_1) \leq p$.
First, note that $d_A(v^*_0) = d_B(v^*_{01}) = d_B(v^*_{11}) = 0$
therefore 
$$ a(v^*_0),b(v^*_{01}),b(v^*_{11}) \leq p .$$
By Proposition~\ref{prop:non-increasing-back-edges}, $a(v^*_1) \leq p$ since
\begin{align*}
\max_{v\in N_A^f(v^*_1)}b(v)
&\leq \max\{ b(v^*_{01}),b(v^*_{11})\} \leq p.
\end{align*}
Next, we prove that for every $1\leq j\leq n-2$ and every $z\in\{0,01,11\}$,
\begin{equation}\label{eq:non-increasing-inductive-hypothesis} 
a(i^*+j,z), b(i^*+j,z) \leq p .
\end{equation}
By Proposition~\ref{prop:non-increasing-midway-layers}, 
\eqref{eq:non-increasing-inductive-hypothesis} implies that $a(i^*+j,1),b(i^*+j,1) \leq p$ for every $1\leq j\leq n-2$.
We prove \eqref{eq:non-increasing-inductive-hypothesis} by induction on $j$, from $j=1$ to $j=n-2$.

\paragraph{Layer $i^*+1$:}
Since the vertices $(i^*+1,01)$, $(i^*+1,11)$ and $(i^*+1,0)$ have no incoming edges from $v^*_0$ in $E_A$,
\begin{align*}
\max_{v\in N_A^f\left(L^{m}_{i^*+1}\right)} \{b(v)\} 
&\leq b(v^*_1) \leq p.
\end{align*}
Therefore, by Proposition~\ref{prop:non-increasing-back-edges}, $a(v)\leq p$ for every $v\in L^m_{i^*+1}$. 
Next, 
\begin{align*} 
\max_{v\in N_B\left( L^{m}_{i^*+1} \right)} \{a(v)\}
&\leq \max\{ a(v^*_0), a(v^*_1) \} 
\leq p. 
\end{align*}
Therefore, by Definition~\ref{def:non-increasing-probabilities}, $b(v)\leq p$ for every $v\in L^m_{i^*+1}$. 

\paragraph{Layers $i^*+2, \dots, i^*+n-2$:}
Fix $i\in[n-3]$ 
and assume that~\eqref{eq:non-increasing-inductive-hypothesis} holds for every $1 \leq j \leq i$.
By Proposition~\ref{prop:non-increasing-midway-layers}, $a(i^*+j,1),b(i^*+j,1) \leq p$ for every $1 \leq j \leq i$.
We get that 
\begin{align*}
\max_{v\in N_A^f\left( L^{m}_{i^*+i+1} \right)}\{b(v)\}
&\leq \max\{ b(i^*+i,0),b(i^*+i,1) \} \leq p.
\end{align*}
By Proposition~\ref{prop:non-increasing-back-edges}, $a(v)\leq p$ for every $v\in L^m_{i^*+i+1}$.
\\
The same holds when replacing $a$ with $b$, and $N_A^f$ with $N_B$
(the fact that there are back-edges in $G_A$ but not in $G_B$ does not change the bounds).
That is, $ b(v)\leq p$ for every $v\in L^m_{i^*+i+1}$.

\paragraph{Bounding $b$ on the remaining vertices.}
It remains to bound $b$ on the vertices $(i^*-1,z)$, where $z\in\{0,1,01,11\}$.
It holds that
\begin{align*}
\max_{v\in N_B\left( L^{m}_{i^*-1} \right)}\{a(v)\}
&\leq \max\{ a(i^*-2,0),a(i^*-2,1) \} \leq p,
\end{align*}
where the last step follows from~\eqref{eq:non-increasing-inductive-hypothesis} 
and Proposition~\ref{prop:non-increasing-midway-layers}.
Therefore, by Definition~\ref{def:non-increasing-probabilities}, 
\begin{equation}\label{eq:b-on-n-1-midway}
b(v)\leq p ~~~~\forall~ v\in L^m_{i^*-1}.
\end{equation}
Finally, 
\begin{align*}
\max_{v\in N_A(i^*-1,01)\cup N_A(i^*-1,11)} \{b(v)\}
&\leq \max\{ b(i^*-2,0),b(i^*-2,1) \} \leq p,
\end{align*}
where the last step follows from~\eqref{eq:non-increasing-inductive-hypothesis} 
and Proposition~\ref{prop:non-increasing-midway-layers}.
Therefore, by Definition~\ref{def:non-increasing-probabilities}, 
$ a(i^*-1,01),a(i^*-1,11) \leq p $ and
\begin{align*}
\max_{v\in N_B(i^*-1,1)}\{a(v)\}
&\leq \max\{ a(i^*-1,01),a(i^*-1,11) \} \leq p.
\end{align*}
Therefore, by Definition~\ref{def:non-increasing-probabilities}, 
\begin{equation}\label{eq:b-on-n-1-1}
b(i^*-1,1)\leq p .
\end{equation}

\paragraph{Bounding $a$ on the remaining vertices.}
We already bounded $a$ on the vertices $(i^*-1,01)$ and $(i^*-1,11)$.
It remains to bound $a$ on the vertices $(i^*-1,\bar{x}_{i^*-1}),(i^*,01)$ and $(i^*,11)$.
Denote $\bar{u}^* = (i^*-1,\bar{x}_{i*-1})$.
Since there is not back-edge into $\bar{u}^*$,
\begin{align*}
\max_{v\in N_A(\bar{u}^*)  }\{b(v)\}
&\leq \max\{b(i^*-2,0),b(i^*-2,1),b(i^*-1,01),b(i^*-1,11)\} \leq p,
\end{align*}
where the last step follows from~\eqref{eq:non-increasing-inductive-hypothesis},
Proposition~\ref{prop:non-increasing-midway-layers} and~\eqref{eq:b-on-n-1-midway}.
Therefore, by Definition~\ref{def:non-increasing-probabilities}, $a(\bar{u}^*) \leq p$.
Finally,
\begin{align*}
\max_{v\in N_A(i^*,01)\cup N_A(i^*,11)}\{b(v)\}
&\leq \max\{b(i^*-1,0),b(i^*-1,1)\} \leq p,
\end{align*}
where the last step follows from~\eqref{eq:b-on-n-1-midway} and~\eqref{eq:b-on-n-1-1}.
Therefore, by Definition~\ref{def:non-increasing-probabilities}, $a(i^*,01),a(i^*,11) \leq p$.

\subsection{Approximate Well Supported Nash Equilibrium}\label{sec:AWSNE}

Since every $\eps$-approximate well supported Nash equilibrium is an $\eps$-approximate Nash equilibrium (see Proposition~\ref{WSNEtoNE}),
Theorem~\ref{thm:ANE} gives a lower bound for the communication complexity of finding 
$\eps$-approximate well supported Nash equilibrium, for $\eps\leq\frac{1}{16N^2}$.
However, the following claim shows that every $\eps$-approximate well supported Nash equilibrium, 
for $\eps\leq\frac{1}{N}$, is a pair of $0$-non-increasing functions for the 2-cycle game.
Theorem~\ref{thm:AWSNE} follows from Lemma~\ref{lem:non-increasing} 
and Claim~\ref{clm:from-awsne-to-non-increasing}.

\begin{claim}\label{clm:from-awsne-to-non-increasing}
Let $(a^*,b^*)$ be an $\eps$-approximate well supported Nash equilibrium 
of the 2-cycle $N\times N$ game, where $\eps\leq\frac{1}{N}$. 
Then, the pair $(a^*, b^*)$ is $0$-non-increasing for the game.
\end{claim}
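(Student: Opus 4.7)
My plan is to verify each of the two conditions of Definition~\ref{def:non-increasing-probabilities} with $p = 0$ by contradiction. I treat condition~(1) for player $A$ in detail; condition~(2) for player $B$ follows by an entirely symmetric argument, since the only combinatorial fact we need is Proposition~\ref{prop:nonzero-out-degree}, which holds for both graphs, and the presence of back-edges in $G_A$ versus their absence in $G_B$ plays no role here.

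Assume towards a contradiction that there is some $u \in V$ with $a^*(u) > 0$ (so $u \in \Supp(a^*)$) yet $b^*(v) = 0$ for every $v \in N_A(u)$. Since $u_A(u,v) = 1$ iff $v \in N_A(u)$, this gives $\Ex_{v \sim b^*}[u_A(u,v)] = b^*(N_A(u)) = 0$. Applying Definition~\ref{def:AWSNE} with this $u$ and each alternative action $u' \in V$ yields
$$ b^*(N_A(u')) \;=\; \Ex_{v \sim b^*}[u_A(u',v)] \;\leq\; \Ex_{v \sim b^*}[u_A(u,v)] + \eps \;=\; \eps \;\leq\; \tfrac{1}{N} \qquad \forall\, u' \in V.$$

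The crucial step is to sum these inequalities over all $u' \in V$. Each $v \in V$ contributes $b^*(v)$ to the term $b^*(N_A(u'))$ exactly when $(v,u') \in E_A$, and Proposition~\ref{prop:nonzero-out-degree} guarantees there is precisely one such $u'$. Hence $\sum_{u' \in V} b^*(N_A(u')) = \sum_{v \in V} b^*(v) = 1$. Since each of the $|V| = N$ summands is bounded above by $\tfrac{1}{N}$ while the total is $1$, every summand must equal $\tfrac{1}{N}$ exactly. In particular, $b^*(N_A(u)) = \tfrac{1}{N} > 0$, contradicting our standing assumption.

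The subtlety worth flagging is the boundary case $\eps = \tfrac{1}{N}$: a naive averaging argument only produces some $u''$ with $b^*(N_A(u'')) \geq \tfrac{1}{N}$, which merely meets (rather than violates) the upper bound. What rescues the proof is the pigeonhole saturation $\sum_{u' \in V} b^*(N_A(u')) = 1 = N \cdot \tfrac{1}{N}$, which forces equality in every term and thus pins down $b^*(N_A(u))$ specifically. I expect this to be the only non-routine step; everything else is unpacking definitions and invoking Proposition~\ref{prop:nonzero-out-degree}.
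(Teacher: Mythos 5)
Your proof is correct, and its key step differs from the paper's. The paper argues pointwise: it picks a heavy vertex $v'$ with $b^*(v')\geq\frac{1}{N}$ (which exists because $b^*$ is a distribution on $|V|=N$ actions), takes its unique $E_A$-predecessor $u''$ from Proposition~\ref{prop:nonzero-out-degree}, and observes that deviating from $u'$ to $u''$ gains at least $\frac{1}{N}$, so $u'\notin\Supp(a^*)$. You instead sum the inequalities $b^*(N_A(u'))\leq\eps$ over \emph{all} $u'\in V$ and use the same out-degree-one property to show the sum telescopes to $\sum_{v}b^*(v)=1$, after which saturation forces every term, including $b^*(N_A(u))$, to equal $\frac{1}{N}$. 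The two arguments use identical ingredients (Definition~\ref{def:AWSNE}, Proposition~\ref{prop:nonzero-out-degree}, and the normalization of $b^*$), but your global averaging buys something real at the boundary: with $\eps=\frac{1}{N}$ exactly, the paper's chain only yields $\Ex_{v\sim b^*}[u_A(u'',v)-u_A(u',v)]\geq\frac{1}{N}=\eps$, which is consistent with the AWSNE condition rather than contradicting it, whereas your saturation argument pins $b^*(N_A(u))$ to exactly $\frac{1}{N}>0$ and genuinely contradicts $b^*(N_A(u))=0$. Your remark that the symmetric case for player $B$ needs nothing beyond Proposition~\ref{prop:nonzero-out-degree} for $E_B$ is also accurate; back-edges are irrelevant to this claim.
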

\begin{proof}
Let $u'\in V$ and assume that $b^*(v)=0$ for every $v\in N_A(u')$.
Let $v'\in V$ be a vertex such that $b^*(v')\geq\frac{1}{N}$ 
(there must exist such a vertex since $b^*$ is a distribution).
By Proposition~\ref{prop:nonzero-out-degree}, there exists a vertex $u''\in V$ such that $u_A(u'',v')=1$.
Note that by our assumption, $u_A(u',v')=0$.
By Definition~\ref{def:AWSNE},
\begin{align*}
\Ex_{v\sim b^*} [ u_A(u'',v) - u_A(u',v) ] 
&\geq b^*(v')u_A(u'',v')
\geq \frac{1}{N}.
\end{align*}
Therefore, $u'\notin \Supp(a^*)$.
\\
The same holds when replacing $a^*$ with $b^*$, $N_A$ with $N_B$, and $u_A$ with $u_B$. 
That is, 
for every $u'\in V$, if $a^*(v)=0$ for every $v\in N_B(u')$,
then $u'\notin \Supp(b^*)$.
\end{proof}



\subsection{Approximate Bayesian Nash Equilibrium}\label{sec:ABNE}

We prove a lower bound for finding 
an approximate Bayesian Nash equilibrium of a game called the Bayesian 2-cycle game.
The Bayesian 2-cycle game is constructed from sub-games, 
where each sub-game is similar to the 2-cycle game.
We use the construction defined in Section~\ref{sec:the-game} on strings
that have \emph{at most} one disputed index. We call the resulted game the \emph{no-promise 2-cycle game}.
%

\subsubsection*{The Bayesian 2-Cycle Game}

Let $x,y$ be two $k$-bit strings, where $k = T\cdot n$ for some $T\geq 2$ and $n\geq 3$.
Assume there exists exactly one index $i\in[k]$, such that $x_{i} > y_{i}$.

\paragraph{The graphs.}
For every $i\in[T]$ let $G_A^i = G_A^i(V,E_A^i)$ be the graph constructed by player $A$ 
from the $n$-bit string $x^i = x_{n\cdot(i-1) + 1}x_{n\cdot(i-1) + 2} \dots x_{n\cdot(i-1) + n}$, 
as defined in Section~\ref{sec:the-game}.
Similarly, for every $i\in[T]$ let $G_B^i = G_B^i(V,E_B^i)$ be the graph constructed by player $B$ 
from the $n$-bit string $y^i = y_{n\cdot(i-1) + 1}y_{n\cdot(i-1) + 2} \dots y_{n\cdot(i-1) + n}$, 
as defined in Section~\ref{sec:the-game}.
Note that $x = x^1x^2\dots x^t$ and $y=y^1y^2\dots y^t$.

\paragraph{The actions, types, prior distribution and utility functions.}
Define $\Theta_A = \Theta_B = [T]$, $\Sigma_A = \Sigma_B = V$
and $\phi$ is set to be the uniform distribution over the set $\{(i,i) ~:~ i\in[T] \}$.
For $i\in[T]$, let $u_A^i,u_B^i$ be the utility functions 
associated with the graphs $G_A^i$ and $G_B^i$ respectively, as defined in Section~\ref{sec:the-game}.
Note that the utility functions $u_A^i$ and $u_B^i$ define a no-promise 2-cycle $N\times N$ game, where $N=4n$.
We call it the $i^{\text{th}}$ sub-game.
The utility function $u_A:[T]\times V\times V\rightarrow\bits$ of player $A$ is defined 
for every type $i\in[T]$ and every pair of actions $(u,v)\in V^2$ as 
$$ u_A(i,u,v)= u_A^i(u,v) = \begin{cases*}1 &if $(v,u)\in E_A^i$ \\0 &otherwise\end{cases*} .$$
The utility function $u_B:[T]\times V\times V\rightarrow\bits$ of player $B$ is defined 
for every type $i\in[T]$ and every pair of actions $(u,v)\in V^2$ as 
$$ u_B(i,u,v)= u_B^i(u,v) = \begin{cases*}1 &if $(u,v)\in E_B^i$ \\0 &otherwise\end{cases*} .$$
This is a Bayesian game on $N=4n$ actions and $T$ types.

\subsubsection*{Pure Nash Equilibrium of a Sub-Game}

Let $i\in[T]$.
If the $i^{\text{th}}$ sub-game has a disputed index $j\in[n]$ (that is, $x^i_j > y^i_j$),
then it is a 2-cycle game and by Claim~\ref{clm:pure-nash} 
it has exactly one pure Nash equilibrium $(u^i,v^i)$, 
where $u^i = (j-1,x^i_{j-1})$ and $v^i = (j,y^i_j)$.
If the $i^{\text{th}}$ sub-game has no disputed index, then it has no pure Nash equilibrium.

Let $x,y$ be the $k$-bit strings from which the Bayesian game was constructed.
Note that $u_A$ determines $x$, and $u_B$ determines $y$.
Since there exists exactly one index $i^*\in[k]$ for which $x_{i^*} > y_{i^*}$,
there exists exactly one type $i\in[T]$ such that 
the $i^{\text{th}}$ sub-game has a pure Nash equilibrium $(u^i,v^i)$, 
where $u^i = (j-1,x^i_{j-1})$, $v^i = (j,y^i_j)$ and $i^* = n\cdot(i-1)+j$.

If player $A$ knows $u^i\in V$ and player $B$ knows $v^i\in V$ such that 
$(u^i,v^i)$ is a pure Nash equilibrium of the $i^{\text{th}}$ sub-game, for some $i\in[T]$,
then both players know the index $i^*\in[k]$ for which $x_{i^*} > y_{i^*}$.
Therefore, finding a pure Nash equilibrium of a sub-game is hard.

\begin{claim}\label{clm:pure-of-sub-game}
Every randomized communication protocol 
for finding a pure Nash equilibrium of a sub-game
of the Bayesian 2-cycle game on $N$ actions and $T$ types,
with error probability at most $\frac{1}{3}$, 
has communication complexity at least $\Omega(N\cdot T)$.
\end{claim}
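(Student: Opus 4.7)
}
The plan is to mimic the reduction used to prove Theorem~\ref{thm:pure}, but from the search variant of unique set disjointness on strings of length $k = n\cdot T$. Recall that this problem has randomized communication complexity $\Omega(k) = \Omega(nT) = \Omega(NT)$, since $N = 4n$. So suppose towards a contradiction that $\pi$ is a randomized protocol for finding a pure Nash equilibrium of some sub-game of the Bayesian 2-cycle game, with error probability at most $\nicefrac{1}{3}$ and communication complexity $o(N\cdot T)$; I would use $\pi$ to solve unique set disjointness on $k$-bit strings in $o(k)$ communication.

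Given the unique set disjointness inputs $x,y\in\bits^k$, each player splits his string into $T$ blocks of length $n$ and builds the utility functions $u_A,u_B$ of the Bayesian 2-cycle game as defined in Section~\ref{sec:the-game} and the present subsection; no communication is required for this step. The key structural fact, which I would state explicitly, is that the unique disputed index $i^*\in[k]$ lies in exactly one block, say block $i\in[T]$. By Claim~\ref{clm:pure-nash} applied to sub-games, this $i^{\text{th}}$ sub-game is the only sub-game that has a pure Nash equilibrium, and that equilibrium is $(u^i,v^i)$ with $u^i=(j-1,x^i_{j-1})$ and $v^i=(j,y^i_j)$, where $j\in[n]$ satisfies $i^*=n\cdot(i-1)+j$.

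Now run $\pi$ on $u_A,u_B$. With probability at least $\nicefrac{2}{3}$, at the end of the communication, player $A$ holds an action $u$, player $B$ holds an action $v$, and both players agree on a type index $i'\in[T]$, such that $(u,v)$ is a pure Nash equilibrium of the $(i')^{\text{th}}$ sub-game. By the uniqueness observation above we must have $i'=i$, and in particular $u=(j-1,x^i_{j-1})$ and $v=(j,y^i_j)$. Both players can therefore read off the coordinate $j$ from their own output and combine it with the shared index $i$ to compute $i^* = n\cdot(i-1)+j$ with no further communication, yielding a protocol for unique set disjointness on $k$-bit strings with error at most $\nicefrac{1}{3}$ and communication $o(k)$, which contradicts the $\Omega(k)$ lower bound.

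The only mildly subtle point, and the one I would highlight as the main obstacle, is ensuring that the ``pure Nash of a sub-game'' output of $\pi$ carries enough information for both players to reconstruct $i^*$ locally. The ambiguity is resolved precisely because only one sub-game has a pure Nash at all, so there is no freedom in which type index $i'$ can be output; both the index $i'$ and the vertex coordinate $j$ are then known to both parties from the usual output convention of the communication task defined earlier in the excerpt. Everything else is a direct transcription of the proof of Theorem~\ref{thm:pure} with the parameter $n$ replaced by $k=nT$.
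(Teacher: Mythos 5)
Your proposal is correct and is essentially the paper's own argument: the paper justifies this claim by the immediately preceding observation that exactly one sub-game has a pure Nash equilibrium, that this equilibrium encodes the unique disputed index $i^*=n\cdot(i-1)+j$ of the $k$-bit instance with $k=nT=\Theta(NT)$, and hence that a protocol for the task would solve the search variant of unique set disjointness on $k$ bits, contradicting its $\Omega(k)$ lower bound. Your explicit handling of why the type index $i$ is determined (uniqueness of the sub-game with a pure Nash, plus the output convention) is a reasonable elaboration of a point the paper leaves implicit, not a deviation.
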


Let $\eps \ge 0$ and let $\{a^*_i,b^*_i\}_{i\in[T]}$ be an $\eps$-approximate Bayesian Nash equilibrium 
of the Bayesian 2-cycle game on $N$ actions and $T$ types.
Since the prior distribution $\phi$ is uniform over the set $\{(i,i) ~:~ i\in[T] \}$,
for every $i\in[T]$, $(a^*_i,b^*_i)$ is an $\eps$-approximate Nash equilibrium
of the $i^{\text{th}}$ sub-game.
The following claim states that, for every $i\in[T]$,
if the $i^{\text{th}}$ sub-game has no pure Nash equilibrium
then $a^*_i$ and $b^*_i$ cannot be concentrated.

\begin{claim}\label{clm:no-pure-nash-no-concentration}
Let $p=\frac{1}{4N}$ and let $\eps \leq \frac{p}{4}$. 
and let $(a,b)$ be an $\eps$-approximate Nash equilibrium of the no-promise 2-cycle $N\times N$ game.
Assume that the game has no pure Nash equilibrium.
Then, for every $v$, $a$ is not $p$-concentrated on $v$ and $b$ is not $p$-concentrated on $v$.
\end{claim}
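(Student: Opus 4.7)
The plan is a proof by contradiction. By the symmetry between players $A$ and $B$, suppose that $a$ is $p$-concentrated on some vertex $u \in V$; then $a(u) \geq 1-(N-1)p = (3N+1)/(4N)$, a quantity strictly greater than $3/4$. Using Proposition~\ref{prop:nonzero-out-degree}, let $\hat v \in V$ denote the unique vertex with $u_B(u,\hat v)=1$, and $\hat u \in V$ the unique vertex with $u_A(\hat u,\hat v)=1$. Since the game has no pure Nash equilibrium, it has no 2-cycle (by an argument identical to the proof of Claim~\ref{clm:pure-nash}: a pure strategy profile is a Nash equilibrium iff it is a 2-cycle, using Proposition~\ref{prop:nonzero-out-degree} to construct profitable deviations). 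Since $(u,\hat v) \in E_B$ this forces $(\hat v,u) \notin E_A$; hence $\hat u \neq u$, and so $a(\hat u) \leq p$ by the concentration assumption.

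The next step is to use Definition~\ref{def:ANE} to extract two bounds on $b(\hat v)$ that pinch it from both sides. On player $B$'s side, the current expected payoff rewrites as $\sum_{u'} a(u')\,b(\sigma_B(u'))$, where $\sigma_B(u')$ is the unique $v$ with $u_B(u',v)=1$; this sum is at most $a(u)\,b(\hat v) + (1-a(u))$, while the pure deviation to $\hat v$ yields payoff $a(N_B(\hat v)) \geq a(u)$. Subtracting and rearranging yields $b(\hat v) \geq 2 - (1+\eps)/a(u)$. On player $A$'s side, the current payoff rewrites analogously as $\sum_{v'} b(v')\,a(\tau_A(v'))$, where $\tau_A(v')$ is the unique $u'$ with $u_A(u',v')=1$; this sum is at most $b(\hat v)\,a(\hat u) + (1-b(\hat v)) \leq p\,b(\hat v) + (1-b(\hat v))$, while the pure deviation to $\hat u$ yields payoff at least $b(\hat v)$. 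Hence $b(\hat v) \leq (1+\eps)/(2-p)$.

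Combining the two bounds gives $2 - (1+\eps)/a(u) \leq (1+\eps)/(2-p)$. Substituting the extremal allowed values $a(u) \geq (3N+1)/(4N)$, $p = 1/(4N)$, $\eps \leq 1/(16N)$, then clearing denominators, reduces this to a quadratic inequality in $N$ that is violated for every $N \geq 1$, producing the desired contradiction. The symmetric case, where $b$ is $p$-concentrated on some vertex, is handled by interchanging the roles of the two players throughout. The main difficulty is not conceptual but quantitative: the constants $p = 1/(4N)$ and $\eps \leq p/4$ are chosen precisely so that the lower bound on $b(\hat v)$ (which tends to $2/3$ as $N \to \infty$) strictly exceeds the upper bound (which tends to $1/2$); noticeably weaker values of these parameters would leave the two bounds compatible and collapse the argument.
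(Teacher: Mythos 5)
Your argument is correct; I checked the two pinching bounds and the final arithmetic. Deviating player $B$ to the pure strategy $\hat v$ gives $a(N_B(\hat v))-\sum_{u'}a(u')b(\sigma_B(u'))\leq\eps$ and hence $b(\hat v)\geq 2-(1+\eps)/a(u)$; deviating player $A$ to $\hat u$ gives $b(\hat v)(2-p)\leq 1+\eps$; substituting $a(u)\geq\frac{3N+1}{4N}$, $p=\frac{1}{4N}$, $\eps\leq\frac{1}{16N}$ the bounds become $\frac{8N+7}{12N+4}$ versus $\frac{16N+1}{32N-4}$, and cross-multiplying yields $64N^2+116N-32>0$, so the contradiction holds for every $N\geq 1$ as you claim. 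Your route does differ from the paper's in how the contradiction materializes. The paper also walks the unique out-neighbor chain $u\to\hat v\to\hat u$ guaranteed by Proposition~\ref{prop:nonzero-out-degree}, but it uses local mass-shifting deviations (moving all of $b$'s weight on an arbitrary $v''\neq\hat v$ onto $\hat v$) to conclude that $b$ is $p$-concentrated on $\hat v$, then repeats the same argument to conclude that $a$ is $p$-concentrated on $\hat u$, and finally deduces $\hat u=u$ (two concentration points would make $a$'s total mass at most $Np<1$), which forces the forbidden 2-cycle. You instead spend the no-2-cycle hypothesis at the start to get $a(\hat u)\leq p$ and then squeeze the single value $b(\hat v)$ between a lower bound tending to $2/3$ and an upper bound tending to $1/2$. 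The paper's version propagates concentration and never needs a closing numerical computation; yours is shorter and makes explicit exactly where the constants $p=\frac{1}{4N}$ and $\eps\leq\frac{p}{4}$ enter. Both are valid proofs of the claim.
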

\begin{proof}
Let $u'\in V$. 
We prove that $a$ is not $p$-concentrated on $u'$.
The proof that $b$ is not $p$-concentrated on $u'$ is similar.
Assume towards a contradiction that $a$ is $p$-concentrated on $u'$.
By Proposition~\ref{prop:nonzero-out-degree}, there exists $v'\in V$ such that $(u',v')\in E_B$. 
Let $v''\in V$, $v''\neq v'$. Note that $u'\notin N_B(v'')$.
Define a distribution $b'$ over $V$ as follows: 
\begin{align*}
& b'(v') = b(v') + b(v'') \\
& b'(v'')=0 \\
& b'(v)=b(v) ~~~~ \forall~ v\in V\setminus\{v'',v'\}.
\end{align*}
It holds that 
\begin{align*}
\eps &\geq \Ex_{u\sim a, v\sim b'}[u_B(u,v)] - \Ex_{u\sim a, v\sim b}[u_B(u,v)] \\
&= b'(v')\cdot a(N_B(v')) 
	- b(v')\cdot a(N_B(v'))
	- b(v'')\cdot a(N_B(v'')) \\
&= b(v'')\cdot a(N_B(v'))
	- b(v'')\cdot a(N_B(v'')) \\
&\geq b(v'')\cdot(1-(N-1)p) - b(v'')\cdot 3p ,
\end{align*}
where the last step follows from Proposition~\ref{prop:max-in-degree}.
Since $p=\frac{1}{4N}$ we conclude that $b(v'')\leq 4\eps \leq p$. 
That is, $b$ is $p$-concentrated on $v'$.
\\
Note that Proposition~\ref{prop:nonzero-out-degree} holds also for no-promise 2-cycle games.
Therefore, there exists $u''\in V$ such that $(v',u'')\in E_A$. 
Repeating the same argument with $b$ instead of $a$, $u''$ instead of $v'$ and $N_A$ instead of $N_B$, 
we get that $a$ is $p$-concentrated on $u''$.
Therefore, $u'' = u'$, but that can only happen if $(u',v)$ is a 2-cycle,
which is a contradiction.
\end{proof}

\subsubsection*{Our Lower Bound}

In this section we prove Theorem~\ref{thm:ABNE}.
%
%
Let $\eps\leq\frac{1}{16N^2}$ and assume towards a contradiction 
that there is a randomized communication protocol $\pi$ 
that finds an $\eps$-approximate Bayesian Nash equilibrium 
of the Bayesian 2-cycle game on $N$ actions and $T$ types, 
with error probability at most $\frac{1}{3}$ and communication $o(k)$,
where $k = N\cdot T$.

Given utility functions $u_A,u_B$ of the Bayesian 2-cycle game to players $A$ and $B$ respectively,
the players run $\pi$ on the utility functions, exchanging $o(k)$ bits, and 
at the end of the communication, with probability at least $\frac{2}{3}$, 
player $A$ has a set of mixed strategies $\{a^*_i\}_{i\in[T]}$ 
and player $B$ has a set of mixed strategies $\{b^*_i\}_{i\in[T]}$, 
such that $\{a^*_i,b^*_i\}_{i\in[T]}$ is an $\eps$-approximate Bayesian Nash equilibrium of the game.

For $p\in[0,1]$, we define $p$-non-increasing for the no-promise 2-cycle game the same way that 
$p$-non-increasing for the 2-cycle game are defined.
Let $p = \frac{1}{4N}$ and let $i\in[T]$. 
Note that Claim~\ref{clm:from-approximate-to-non-increasing} holds also when 
replacing the 2-cycle game with a no-promise 2-cycle game.
Therefore, since $(a^*_i,b^*_i)$ is an $\eps$-approximate Nash equilibrium for the $i^{\text{th}}$ sub-game,
the pair $(a^*_i,b^*_i)$ is $p$-non-increasing for the $i^{\text{th}}$ sub-game.

If the $i^{\text{th}}$ sub-game has a pure Nash equilibrium
then it is a 2-cycle game and there exists an index $j\in[n]$ for which $x^i_j > y^i_j$.
By Lemma~\ref{lem:non-increasing-concentration}, 
either $b^*_i(j,1) > p$ or $b^*_i$ is $p$-concentrated on $(j,0)$.
Otherwise, the $i^{\text{th}}$ sub-game has no pure Nash equilibrium.
Note that Proposition~\ref{prop:non-increasing-undisputed} holds also when 
replacing the 2-cycle game with a no-promise 2-cycle game.
That is, for every $j\in[n]$ such that $y^i_j=0$, since $x^i_j \leq y^i_j$, it holds that $b^*_i(j,1) \leq p$.
Moreover, by Claim~\ref{clm:no-pure-nash-no-concentration}, $b^*_i$ is not $p$-concentrated on $(j,0)$.
Therefore, player $B$ can determine if the $i^{\text{th}}$ sub-game has a pure Nash equilibrium or not.

Let $i\in[T]$ be the type for which the $i^{\text{th}}$ sub-game has a pure Nash equilibrium.
That is, the $i^{\text{th}}$ sub-game is a 2-cycle game.
Player $B$ finds $i$ and sends it to player $A$, using $\log T$ bits of communication.
Then, the players run the protocol guaranteed by Lemma~\ref{lem:non-increasing}, 
for finding the pure Nash equilibrium of the $i^{\text{th}}$ sub-game, exchanging at most $O(\log N)$ bits.
That is, the players can find the pure Nash equilibrium of a sub-game 
with communication $o(k)$ and error probability at most $\frac{1}{3}$, 
which is a contradiction to Claim~\ref{clm:pure-of-sub-game}. 

\section{Open Problems}\label{sec:open}
We highlight some open problems.

\begin{enumerate}
\item Coarse correlated equilibrium:
The 2-cycle game has an exact coarse correlated equilibrium $\mu$ defined as follows:
Let $(u_1,v_1)$ and $(u_2,v_2)$ be two arbitrary edges from $G_A$ and $G_B$ respectively, 
such that $v_2\notin N_A(v_1)$ and $v_1\notin N_B(v_2)$. 
Let $\mu(v_1,u_1)=\mu(u_2,v_2)=\nicefrac{1}{2}$.
Note that finding such a distribution requires only small amount of communication.
Therefore, it is not possible to prove non-trivial communication complexity lower bounds for finding a coarse correlated equilibrium of the 2-cycle game.
A natural open problem is to prove any non-trivial bounds on the communication complexity 
of finding a coarse correlated equilibrium of a two-player game.
%

\item Gap amplification:
Our lower bounds hold for inverse polynomial approximation values.
It would be interesting to find a way to amplify the approximation without losing much in the lower bound.
In particular, it is still an open problem to prove a non-trivial communication complexity lower bound
for finding a constant approximate rule correlated equilibrium of a two-player game.
%

\item Multi-player setting:
Babichenko and Rubinstein \cite{BR16} proved an exponential communication complexity lower bound for finding a constant approximate Nash equilibrium of an $n$-player binary action game.
It would be interesting to obtain such an exponential lower bound using techniques that are similar to the ones discussed in this paper, avoiding the simulation theorems and Brouwer function.
Note that exponential lower bounds can not be obtained for finding a correlated equilibrium of multi-player games, 
as there is a protocol for finding an exact correlated equilibrium of $n$-player binary action games 
with $\poly(n)$ bits of communication \cite{HM10, PR08, JL15}.
\end{enumerate}

\addcontentsline{toc}{section}{\protect\numberline{}Acknowledgements}%
\section*{Acknowledgements}
We thank Amir Shpilka, Noam Nisan and Aviad Rubinstein for very helpful conversations.

\addcontentsline{toc}{section}{\protect\numberline{}References}%
\bibliographystyle{alpha}
\bibliography{refs}

\clearpage

\appendix

\section{Trivial Approximate Equilibria of The 2-Cycle Game}\label{app:examples}
In this section, we provide trivial approximate equilibria of the 2-cycle game from which it is not possible to recover the disputed index.

\subsection{Approximate Correlated Equilibrium}
Let us suppose that for all $i\in \left[\frac{n}{2}+3\right]$, we have $x_i=y_i=0$.

We define a joint distribution $\mu$ as follows
$$
\mu((i,z_A),(j,z_B))=\begin{cases}
\frac{16\alpha}{n^2}\text{ if }z_A,z_B=0\text{ and }\frac{n}{4}+4\le i,j\le \frac{n}{2}+2,\\
\frac{16\alpha}{n^2}\text{ if }z_A,z_B=0, \frac{n}{4}+2\le j\le \frac{n}{2}+2\text{ and } i=\frac{n}{4}+3,\\
\frac{16\alpha}{n^2}\text{ if }z_A,z_B=0, \frac{n}{4}+2\le i\le \frac{n}{2}+2\text{ and } j=\frac{n}{4}+3,\\
\frac{16\alpha}{n^2}-\frac{64\alpha\cdot (n/4-i+3)}{n^3}\text{ if }z_A,z_B=0, 2\le i,j\le \frac{n}{4}+2\text{ and } i-j=1,\\
\frac{16\alpha}{n^2}-\frac{64\alpha\cdot (n/4-j+3)}{n^3}\text{ if }z_A,z_B=0, 2\le i,j\le \frac{n}{4}+2\text{ and } j-i=1,\\
0\text{ otherwise,}
\end{cases}
$$
where $\alpha$ is some normalizing constant less than 2 such that $\sum_{(u,v)\in V^2}\mu(u,v)=1$.

Let $\varepsilon=64\alpha/n^3$. For every action $u=(i,z_A)$ of Alice such that $z_A\neq 0$, we have that $\mu(u,v)=0$ for all  $v\in V$. Similarly for every action $v=(j,z_B)$ of Bob such that $z_B\neq 0$, we have that $\mu(u,v)=0$ for all  $u\in V$. Also, for every action $u=(i,z_A)$ of Alice such that $i\in \{n/2+3,\ldots ,n\}\cup\{1\}$, we have that $\mu(u,v)=0$ for all  $v\in V$. And, similarly for every action $v=(j,z_B)$ of Bob such that $j\in \{n/2+3,\ldots ,n\}\cup\{1\}$, we have that $\mu(u,v)=0$ for all  $u\in V$. Since $\mu$ is symmetric\footnote{i.e., $\mu(u,v)=\mu(v,u)$ for all $u,v\in V$.}, it follows that in order to show that $\mu$ is an $\varepsilon$-approximate correlated equilibrium we only need to consider a vertex $u=(i,0)$ when $i\in \left[\frac{n}{2}+2\right]$. 

First, we consider when $i\le \frac{n}{4}+2$. Let $u'\in V$. We have
\begin{align*}
\sum_{v\in V}\mu(u,v)\cdot \left( u_A(u',v)- u_A(u,v)\right)&=\mu(u,N_A(u'))-\mu(u,N_A(u))\\
&=\mu(u,N_A(u'))-\frac{16\alpha}{n^2}+\frac{64\alpha\cdot (n/4-i+3)}{n^3}.
\end{align*}
Now if $v=(j,z_B)\in N_A(u')$ and $|j-i|\neq 1$ then, we have $\mu(u,v)=0$. Thus, we assume $j-i=1$, as we suppose $u\neq u'$. Then, we have
\begin{align*}
\mu(u,N_A(u'))&\le\frac{16\alpha}{n^2}-\frac{64\alpha\cdot (n/4-i-1+3)}{n^3}\\
&=\frac{16\alpha}{n^2}-\frac{64\alpha\cdot (n/4-i+3)}{n^3}+\frac{64\alpha}{n^3}.
\end{align*}
This implies,
\begin{align*}
\sum_{v\in V}\mu(u,v)\cdot \left( u_A(u',v)- u_A(u,v)\right)&\le \frac{64\alpha}{n^3} =\varepsilon.
\end{align*}

Next, we consider when $\frac{n}{4}+4\le i\le \frac{n}{2}+2$. Let $u'\in V$. We have
\begin{align*}
\sum_{v\in V}\mu(u,v)\cdot \left(u_A(u',v)- u_A(u,v)\right)&=\mu(u,N_A(u'))-\mu(u,N_A(u))\\
&=\mu(u,N_A(u'))-\frac{16\alpha}{n^2}.
\end{align*}
Now if $v=(j,z_B)\in N_A(u')$ and $j\ge \frac{n}{2}+3$ then, we have $\mu(u,v)=0$. Also if $j\le \frac{n}{4}+2$  then, we have $\mu(u,v)=0$.  Thus, we assume $j \in [n/4+3,n/4+2]$ and $\beta=0$. Then, we have
\begin{align*}
\mu(u,N_A(u'))&\le\frac{16\alpha}{n^2}.
\end{align*}
This implies,
\begin{align*}
\sum_{v\in V}\mu(u,v)\cdot \left(u_A(u',v) -u_A(u,v)\right)&\le 0.
\end{align*}

Finally, we consider when $i= \frac{n}{4}+3$. Let $u'=(i',z_A')\in V$. We have
\begin{align*}
\sum_{v\in V}\mu(u,v)\cdot \left(u_A(u',v)- u_A(u,v)\right)
&=\mu(u,N_A(u'))-\frac{16\alpha}{n^2}+\frac{64\alpha}{n^3}.
\end{align*}

Now if $v=(j,z_B)\in N_A(u')$ and $j\ge \frac{n}{2}+3$ then, we have $\mu(u,v)=0$. Also if $j\le \frac{n}{4}+2$ and $|j-i|\neq 1$ then, we have $\mu(u,v)=0$. Since $u\neq u'$ we have that $j \in [n/4+3,n/4+2]$  and $\beta=0$. Then we have
\begin{align*}
\mu(u,N_A(u'))&\le\frac{16\alpha}{n^2}.
\end{align*}

This implies,
\begin{align*}
\sum_{v\in V}\mu(u,v)\cdot \left(u_A(u',v)- u_A(u,v)\right)&\le \frac{64\alpha}{n^3}=\varepsilon.
\end{align*}

Thus, $\mu$ is  an $\varepsilon$-approximate correlated equilibrium. From Proposition~\ref{ACEtoARCE}, we have that  $\mu$ is  also an $(\varepsilon\cdot N)$-approximate rule correlated equilibrium.

\subsection{Approximate Nash Equilibrium}

Let us suppose that for all $i\in \left[\frac{n}{2}+2\right]$, we have $x_i=y_i=0$.
We define mixed strategies $a,b$ of Alice and Bob respectively as follows
$$a(i,z)=b(i,z)=\begin{cases}
2/n\text{ if }z=0\text{ and }2\le i\le \frac{n}{2}+1\\
0\text{ otherwise}
\end{cases}\ .
$$

Let $\varepsilon=64/N^2$. For every mixed strategy $a'$ for Alice, we have\allowdisplaybreaks
\begin{align*}
\Ex_{u\sim a'}\Ex_{v\sim b} [ u_A(u,v)] - 
\Ex_{u\sim a}\Ex_{v\sim b}[u_A(u,v) ]&=
\left(\sum_{u\in V}b(N_A(u))\cdot a'(u)\right)-\left(\sum_{u\in V}b(N_A(u))\cdot a(u)\right)\\	
&=
\left(\sum_{i\in [n/2]}\frac{2}{n}\cdot a'(i+2,0)\right)-\left(\sum_{i\in [n/2]}\frac{2}{n}\cdot a(i+2,0)\right)\\	
&=
\left(\frac{2}{n}\cdot\sum_{i\in [n/2]} a'(i+2,0)\right)-\left(\frac{4}{n^2}\cdot\frac{n-2}{2}\right)\\	
&\le\left(\frac{2}{n}\cdot 1\right)-\left(\frac{2}{n}-\frac{4}{n^2}\right)\\	
&=\frac{4}{n^2}=\frac{64}{N^2}=\varepsilon
\end{align*}

For every mixed strategy $b'$ for Bob, we have
\begin{align*}
\Ex_{u\sim a}\Ex_{v\sim b'} [ u_B(u,v)] - 
\Ex_{u\sim a}\Ex_{v\sim b}[u_B(u,v) ]&=
\left(\sum_{v\in V}a(N_B(v))\cdot b'(v)\right)-\left(\sum_{v\in V}a(N_B(v))\cdot b(v)\right)\\	
&=
\left(\sum_{i\in [n/2]}\frac{2}{n}\cdot b'(i+2,0)\right)-\left(\sum_{i\in [n/2]}\frac{2}{n}\cdot b(i+2,0)\right)\\	
&=
\left(\frac{2}{n}\cdot\sum_{i\in [n/2]} b'(i+2,0)\right)-\left(\frac{4}{n^2}\cdot\frac{n-2}{2}\right)\\	
&\le\left(\frac{2}{n}\cdot 1\right)-\left(\frac{2}{n}-\frac{4}{n^2}\right)\\	
&=\frac{4}{n^2}=\frac{64}{N^2}=\varepsilon
\end{align*}

Thus, we have that $a$ and $b$ defined above are $64/N^2$-approximate Nash equilibrium.

\subsection{Well Supported Nash Equilibrium}

We define mixed strategies $a,b$ of Alice and Bob respectively as follows
$$a(i,z)=\begin{cases}
1/n\text{ if }z=x_i=0\\
1/n\text{ if }x_i=1\text{ and }z\in \{x_i,x_{i-1}1\}\\
0\text{ otherwise}
\end{cases}\ ,\ \ b(i,z)=\begin{cases}
1/n\text{ if }z=y_i=0\\
1/n\text{ if }y_i=1\text{ and }z\in \{y_i,y_{i-1}1\}\\
0\text{ otherwise}
\end{cases}.
$$

Let $\varepsilon=12/N$. For every action $u\in Supp(a)$ and every action $u'\in V$, 
\begin{align*}
\Ex_{v\sim b} [ u_A(u',v) - u_A(u,v) ]&=\Ex_{v\sim b} [ u_A(u',v)  ]- \Ex_{v\sim b} [  u_A(u,v) ]\\
&= b(N_A(u'))  -  b(N_A(u))\\
&\le  \frac{|N_A(u')|}{n}\\
&\le \frac{12}{N}=\varepsilon
\end{align*}

For every action $v\in Supp(b)$ and every action $v'\in V$, 
\begin{align*}
\Ex_{u\sim a} [ u_B(u,v') - u_B(u,v) ]&=\Ex_{u\sim a} [ u_B(u,v')  ]- \Ex_{u\sim a} [  u_B(u,v) ]\\
&= a(N_B(v'))  -  a(N_B(v))\\
&\le \frac{|N_B(v')|}{n}\\
&\le \frac{8}{N}\le \varepsilon
\end{align*}

Thus, we have that $a$ and $b$ defined above are $8/N$-approximate well supported Nash equilibrium.

\end{document}